\newcommand{\bone}{\mathbb{B}}
\newcommand{\Scond}{\emph{S1-condition}}
\newcommand{\skel}{\mathbb{S}}
\newcommand{\PC}{product coloring}
\newcommand{\cc}{color-continuation}
\newcommand{\la}{\langle}
\newcommand{\ra}{\rangle}
\newtheorem{thm}{Theorem}[section]%[chapter]
\newtheorem{cor}[thm]{Corollary}
\newtheorem{lem}[thm]{Lemma}
\newtheorem{defn}[thm]{Definition}
\newtheorem{rem}{Remark}
\begin{document}

\sloppy

\title{A Local Prime Factor Decomposition Algorithm for Strong Product Graphs}

\author[1,2,3]{Marc Hellmuth}

\affil[1]{Dpt.\ of Mathematics and Computer Science, University of Greifswald, Walther-
  Rathenau-Strasse 47, D-17487 Greifswald, Germany }%\\
%	\texttt{mhellmuth@mailbox.org}, 	\texttt{nikolai.nojgaard@uni-greifswald.de} }
\affil[2]{Saarland University, Center for Bioinformatics, Building E 2.1, P.O.\ Box 151150, D-66041 Saarbr{\"u}cken, Germany }
\affil[3]{Bioinformatics Group, Department of Computer Science; and
		    Interdisciplinary Center of Bioinformatics, University of Leipzig, \\
			 H{\"a}rtelstra{\ss}e 16-18, D-04107 Leipzig, Germany}
\date{}

\maketitle

\abstract{ \noindent
This work is concerned with the prime factor decomposition (PFD) of 
\emph{strong product} graphs.   
 A new quasi-linear time algorithm for the PFD 
with respect to the strong product
for arbitrary, finite, connected, undirected graphs is derived.

Moreover, since most graphs are prime although they can have 
a product-like structure, also known as \emph{approximate} graph products, 
the practical application of the well-known "classical" prime factorization algorithm 
 is strictly limited.
This new PFD algorithm is based on a local approach that 
covers a graph by small factorizable subgraphs 
and then utilizes this information to derive the global factors.  
Therefore, we can take advantage of this approach and derive in addition a method   
for the recognition of approximate graph products.
}

\section{Introduction}
Graphs and in particular graph products arise in a variety of different contexts, 
from computer science \cite{AMA-07, HJH+10} to theoretical biology 
\cite{HMM-09,Wagner:03a}, 
computational engineering  \cite{KK-08, KR-04} 
or just as natural structures in discrete mathematics 
\cite{Ham-09, OHK+09,HT-14,HOS12a,HON-14}.
Standard references with respect to graph products are due to 
Imrich, Klav\v{z}ar, Douglas and Hammack \cite{IMKL-00,IMKLDO-08, IKH-11}.

In this contribution we are concerned with the \emph{prime factor decomposition}, \emph{PFD} for short,
of  \emph{strong} product graphs.
The PFD  with respect to the strong product is unique for all finite connected graphs, 
\cite{DOeIM-69, McK-71}. The first who provided 
a polynomial-time algorithm for the PFD of strong product graphs were
Feigenbaum and Sch{\"a}ffer \cite{FESC-92}. 
The latest and fastest approach is 
due to Hammack and Imrich  \cite{HAIM-09}. 
In both approaches, the key idea for the PFD of a strong product graph $G$
is to find a subgraph $\skel(G)$ of $G$ with special properties, the so-called \emph{Cartesian skeleton}, 
that is then decomposed with respect to the \emph{Cartesian} product.
Afterwards, one constructs the prime factors of $G$
using the information of the PFD of $\skel(G)$.

However, an often appearing problem can be formulated as follows:
For a given graph $G$ that has a product-like structure, 
 the task is to find a graph $H$ that is a nontrivial
product and a good approximation of $G$, in the sense that $H$
can be reached from $G$ by a small number of additions or deletions
of edges and vertices. The graph $G$ is also 
called \emph{approximate} product graph.
Unfortunately, the application of the classical PFD approach to this problem 
 is strictly limited, since almost all graphs are prime, although they can have 
a product-like structure. 
In fact, even a very small perturbation, 
such as the deletion or insertion of a single edge, 
  can destroy the product structure completely, modifying a product graph
  to a prime graph \cite{Fei-86,Zmazek:07}. 

The recognition of approximate products has been investigated
by several authors, see e.g. \cite{FEHA-89, HIKS-08, HIKS-09, IZ-96, Zmazek:07,IPZ-97, Zer00, ZZ-02,
HIK-13,hos14-equiparty,HIK-15,HOS-14-relSqP}. 
In \cite{IZ-96} and \cite{Zmazek:07} the authors
showed that  Cartesian and strong product graphs 
can be uniquely reconstructed from each 
of its one-vertex-deleted subgraphs.  
Moreover, in  \cite{IZZ-01} it is shown that
 $k$-vertex-deleted Cartesian product graphs
can be uniquely reconstructed if they have at least $k+1$ factors
and each factor has more than $k$ vertices.
A polynomial-time algorithm for the reconstruction of 
one-vertex-deleted Cartesian product graphs  is given
in \cite{HZ-99}.
In \cite{IPZ-97, Zer00, ZZ-02} algorithms for the recognition of so-called graph bundles are
provided. Graph bundles generalize the notion of graph products and can also be
considered as approximate products.

Another systematic investigation into approximate product graphs
showed that a further practically viable approach can be based on \emph{local} factorization
algorithms, that cover a graph by factorizable small patches 
and attempt to stepwisely extend regions with product structures.
This idea has been fruitful in particular for the strong product of graphs, 
where one benefits from the fact that the local product structure of neighborhoods 
is a refinement of the global factors \cite{HIKS-08, HIKS-09}.
In \cite{HIKS-08} the class of thin-neighborhood intersection coverable
(NICE) graphs was introduced, and a quasi-linear time local 
factorization algorithm
w.r.t. the strong product was devised. 
In \cite{HIKS-09} this approach 
was extended to a larger class of thin graphs which are whose 
local factorization is not finer than the global one, so-called
locally unrefined graphs. 

In this contribution the results of \cite{HIKS-08} and \cite{HIKS-09}  
will be extended and generalized. The main result will
be a new quasi-linear time local prime factorization algorithm
w.r.t. the strong product that works for \emph{all} graph classes.
Moreover, this algorithm can be adapted for the recognition
of approximate products.  This new PFD algorithm 
is implemented in \emph{C++} and  
the source code can be downloaded from 
\url{http://www.bioinf.uni-leipzig.de/Software/GraphProducts}.

This paper is organized as follows.
First, we introduce the necessary basic definitions and give a
short overview of the "classical" prime factor decomposition algorithm
w.r.t. the strong product, that will be slightly modified and 
used locally in our new algorithm. 
The main challenge will be
the combination and the utilization
of the "local factorization information" to derive the global factors. 
To realize this purpose, 
we are then concerned with several important tools and techniques.  
As it turns out, \emph{S-prime} graphs, the so-called \Scond, the \emph{backbone} $\bone(G)$ 
of a graph $G$ and the \emph{\cc}\ property  
will play a central role.
After this, we will derive a new 
general local approach for the prime factor decomposition
for arbitrary graphs, using the previous findings.
Finally, we discuss approximate graph products 
and explain how the new local factorization algorithm
can be modified 
for the  recognition of approximate graph products.

\section{Preliminaries}

\subsection{Basic Notation}
We only consider finite, simple, connected and undirected graphs 
$G=(V,E)$ with vertex set $V$ and edge set $E$. 
A graph is \emph{nontrivial} if it has at least two vertices. 
We define the \emph{$k$-neighborhood} of vertex $v$ as the set 
$N_k[v] = \{x\in V(G)\mid d(v,x) \leq k\}$, where $d(x,v)$
denotes the length of a shortest path connecting the vertices $x$ and $v$.  
Unless there is a risk of confusion, 
we call a 1-neighborhood $N_1[v]$ just neighborhood,
denoted by $N[v]$. 
To avoid ambiguity, we sometimes
write $N^G[v]$ to indicate that $N[v]$ is taken
with respect to $G$.

The degree $\deg(v)$ of a vertex $v$ is the number of 
adjacent vertices, or, equivalently, the number of incident edges. 
The maximum degree in a given graph is denoted by $\Delta$. 
If for two graphs $H$ and $G$ holds 
$V (H) \subseteq V (G)$ and $E(H) \subseteq E(G)$ then $H$ is a 
called a \emph{subgraph} of $G$, denoted by $H\subseteq G$.
If $H\subseteq G$ and all pairs of adjacent vertices in $G$  
are also adjacent in $H$ then $H$ is called an  \emph{induced} subgraph.
The subgraph of a graph $G$ that is induced
by a vertex set $W \subseteq V(G)$ is denoted by $\langle W \rangle$.
A subset $D$ of $V(G)$ is a \emph{dominating set} for $G$, 
if for all vertices in $V(G)\setminus D$ there is at least one
adjacent vertex from $D$. We call $D$ \emph{connected dominating set}, 
if $D$ is a dominating set and the subgraph $\langle D \rangle$ is
connected.

\subsection{Graph Products}

The vertex set of the \emph{strong product} $G_1\boxtimes G_2$  of
two graphs $G_1$ and $G_2$ is defined as 
$V(G_1)\times V(G_2) = \{(v_1,v_2)\mid v_1\in V(G_1), v_2\in V(G_2)\},$
Two vertices $(x_1,x_2)$, $(y_1,y_2)$ are adjacent
in $G_1\boxtimes G_2$ if one of the following conditions is
satisfied:
\begin{itemize} 
 \item[(i)] $(x_1,y_1)\in E(G_1)$ and $x_2=y_2$,\vspace{-0.1in} 
 \item[(ii)]$(x_2,y_2)\in E(G_2)$ and $x_1 = y_1$,\vspace{-0.1in} 
 \item[(iii)] $(x_1,y_1)\in E(G_1)$ and $(x_2,y_2)\in E(G_2)$.
\end{itemize}
The \emph{Cartesian product} $G_1  \Box G_2$ has the same vertex
set as $G_1\boxtimes G_2$, but vertices are only adjacent if they
satisfy (i) or (ii). 
Consequently, the edges of a strong product
that satisfy (i) or (ii) are called \emph{Cartesian}, the others
\emph{non-Cartesian}. The definition of the edge sets shows that the Cartesian product
is closely related to the strong product and indeed it
plays a central role in the factorization 
of the strong products. 

The one-vertex complete graph $K_1$ 
serves as a unit for both products, as $K_1 \Box H = H$ and
$K_1 \boxtimes H = H$ for all graphs $H$.  
It is well-known that both products are associative
and commutative, see \cite{IMKL-00}.
Hence a vertex $x$ of the Cartesian product $\Box_{i=1}^n G_i$, 
respectively the strong product $\boxtimes_{i=1}^n G_i$ 
is properly ``coordinatized'' by the
vector $(x_1,\dots,x_n)$ whose entries are the vertices
$x_i$ of its factor graphs $G_i$.  
Two adjacent vertices in a Cartesian
product graph, respectively endpoints of
a Cartesian edge in a strong product, 
therefore differ in exactly one coordinate.

The mapping $p_j(x) = x_j$ of a vertex $x$ with coordinates 
 $(x_1,\dots,x_n)$ is called \emph{projection} of $x$ onto the $j-th$ factor.
For a set $W$ of vertices of $\Box_{i=1}^n G_i$, resp. $\boxtimes_{i=1}^n G_i$, 
we define $p_j(W) = \{p_j(w)\mid w\in W\}$.
Sometimes we also write $p_A$  if we mean the projection onto factor $A$.

In both products $\Box_{i=1}^n G_i$ and $\boxtimes_{i=1}^n G_i$, 
a \emph{$G_j$-fiber} or \emph{$G_j$-layer} through vertex
$x$ with coordinates  $(x_1,\dots,x_n)$ is the vertex induced subgraph $G_j^x$ in $G$
with vertex set
$\{(x_1,\dots x_{j-1},v,x_{j+1},\dots,x_n) \in V(G)\mid v \in
V(G_j)\}.$ Thus, $G_j^x$ is isomorphic to the factor $G_j$ for every $x\in V(G)$.  
For $y \in V(G_j^x)$ we have $G_j^x = G_j^y$, while $V(G_j^x) \cap V(G_j^z) = \emptyset$
if $z\notin V(G_j^x)$.
Edges of (not necessarily different) $G_i$-fibers are said to be 
edges \emph{of one and the same} factor $G_i$.

Note, the coordinatization of a product is equivalent to a (partial) edge
coloring of $G$ in which edges $e=(x,y)$ share the same color $c(e)=k$ if 
 $x$ and $y$ differ only in the value of a single coordinate $k$,
i.e., if $x_i=y_i$, $i\ne k$ and $x_k\ne y_k$. This colors the
\emph{Cartesian edges} of $G$ (with respect to the \emph{given} product
representation).  
It follows that for each color $k$ the set 
$E_k=\{e\in E(G) \mid c(e)=k\}$ of edges with color $k$ spans $G$. 
The connected components of $\langle E_k\rangle$  
are isomorphic subgraphs of $G$.

A graph $G$ is \emph{prime} with respect
to the Cartesian, respectively the strong product,
if it cannot be written as a Cartesian, respectively 
a strong product, of two nontrivial graphs,
i.e., the identity $G = G_1 \star G_2$ ($\star = \Box,\boxtimes$)
implies that $G_1\simeq K_1$ or $G_2 \simeq K_1$.

As shown by Sabidussi \cite{Sab-59} and independently by Vizing \cite{Viz-63}, 
all finite connected graphs have a unique PFD 
 with respect to the Cartesian product.
The same result holds also for the strong product, as
shown by D{\"o}rfler and Imrich \cite{DOeIM-69} and independently by McKenzie \cite{McK-71}.

\begin{thm} %[\cite{Sab-59, Viz-63, DOeIM-69, McK-71}]
	Every connected graph has a unique representation
	as a Cartesian product, resp. a strong product,  
	of prime graphs, up to isomorphisms
	and the order of the factors.
\label{thm:uniquePFD}
\end{thm}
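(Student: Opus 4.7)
The plan is to establish existence and uniqueness separately, and to handle the two products in turn, reducing uniqueness for the strong product to uniqueness for the Cartesian product via a canonical intermediate subgraph. Existence proceeds by an easy induction on $|V(G)|$: if $G$ is prime there is nothing to show; otherwise write $G = G_1 \star G_2$ with $\star \in \{\Box,\boxtimes\}$ and both $G_i$ nontrivial, so $|V(G_i)| < |V(G)|$ because $K_1$ is the unit for both products. Apply the inductive hypothesis to each $G_i$ and concatenate prime factorizations using associativity and commutativity of $\star$ recalled above.

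For uniqueness in the Cartesian case I would follow the classical Sabidussi--Vizing strategy and define an equivalence relation $\sigma$ on $E(G)$ intrinsically: call two edges directly related if they are opposite sides of a chordless $4$-cycle, and let $\sigma$ be its transitive closure. Given any representation $G = \Box_{i=1}^{n} G_i$, every edge carries a coordinate color, and I would check (a) that $\sigma$-related edges share the same color, because chordless squares must sit in two distinct fibers, and (b) that any two edges of the same color can be joined by a chain of such squares, so $\sigma$ and the coordinate coloring coincide. Since $\sigma$ depends only on $G$, the partition of $E(G)$ into factor fibers is canonical; the $G_i$ are then recovered up to isomorphism as quotient graphs on the $\sigma$-classes, giving uniqueness up to order.

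For uniqueness in the strong case I would pass through a canonical Cartesian subgraph $\skel(G) \subseteq G$, intrinsically defined on $V(G)$ by local rules (closed-neighborhood dominance conditions) that filter out the non-Cartesian edges of any strong-product representation. After first reducing to the thin case by quotienting out the equivalence identifying vertices with identical closed neighborhoods, the essential lemma is that $\skel(G_1 \boxtimes G_2) = \skel(G_1) \Box \skel(G_2)$. Applying the Cartesian uniqueness result to $\skel(G)$ then yields a canonical coordinatization of $V(G)$, and one recovers the strong prime factors by grouping vertices accordingly and reinstating the non-Cartesian edges, which are forced by condition~(iii) in the definition of $\boxtimes$ once the coordinate structure is fixed.

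The main obstacle in both halves is the invariance step: showing that the intrinsic object (the relation $\sigma$ or the skeleton $\skel(G)$) really matches the coordinate structure of \emph{every} factorization, not merely of one chosen in advance. For the Cartesian product this reduces to careful square-chasing between fibers, but for the strong product the dominance and thinness reductions are delicate: non-thin vertices must be pre-identified to prevent artificial splits, and one must verify that the skeleton construction commutes with $\boxtimes$ on both edges and non-edges. This is where the bulk of the technical work would sit, and it is precisely the machinery that the present paper will exploit and refine for its local algorithm.
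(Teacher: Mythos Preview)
The paper does not prove this theorem at all: it is stated as background and attributed to Sabidussi and Vizing for the Cartesian product and to D\"orfler--Imrich and McKenzie for the strong product, with no argument given. So there is no ``paper's proof'' to match; your sketch is being compared to the classical literature rather than to anything in this manuscript.

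That said, your Cartesian sketch has a real gap. The relation $\sigma$ you define --- the transitive closure of ``opposite sides of a chordless $4$-cycle'' --- is strictly finer than the product coloring in general, so your step~(b) fails. Take any tree, or $K_3$: these are Cartesian-prime, hence every edge carries the single color, yet there are no $4$-cycles at all and your $\sigma$ puts each edge in its own class. The classical Sabidussi relation also declares two \emph{incident} edges related whenever they do \emph{not} lie on a common square (because edges from distinct factors meeting at a vertex always span a square); it is the transitive closure of the union of this incidence condition with the square condition that coincides with the prime coloring. Without that second clause the argument cannot recover prime factors that contain no induced $C_4$.

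For the strong product your route is not the original one. D\"orfler--Imrich and McKenzie predate the Cartesian skeleton, which was introduced by Feigenbaum--Sch\"affer and later made canonical by Hammack--Imrich; you are essentially proposing the modern \emph{algorithmic} proof of uniqueness rather than the historical one. The outline is viable for thin graphs once you have $\skel(G_1\boxtimes G_2)=\skel(G_1)\Box\skel(G_2)$, but the passage back from $G/S$ to $G$ is not as light as you suggest: one must control how $S$-classes split across factors (the paper alludes to this via the $\gcd$ arguments of Lemma~5.40 in Imrich--Klav\v{z}ar), and that step is where the non-thin case actually lives.
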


\subsection{Thinness} 

\begin{figure}[tb]
  \centering
  \includegraphics[bb= 212 580 456 686, scale=0.9]{./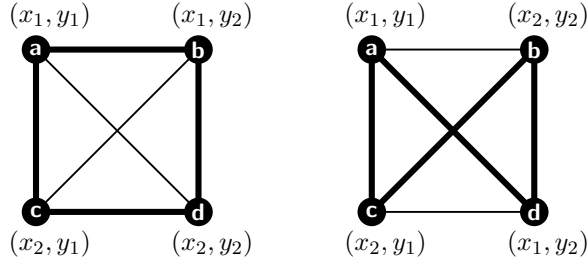}
  \caption[Non-uniquely determined Cartesian edges]{The edge $(a,b)$ is Cartesian in the left, 
													and non-Cartesian in the right coordinatization}
  \label{fig:CartEdges}
\end{figure}

It is important to notice that although the PFD w.r.t. the 
strong product is unique, the coordinatizations might not be.
Therefore, the assignment of an edge being Cartesian or non-Cartesian
is not unique, in general.
Figure \ref{fig:CartEdges} shows that the reason for the
non-unique coordinatizations is the existence of
automorphisms that interchange the vertices $b$ and $d$,
but fix all the others. This is possible because $b$ and $d$
have the same 1-neighborhoods. Thus,  
an important issue in the context of strong graph products 
is whether or not two vertices can be distinguished by their neighborhoods.
This is captured by the relation $S$
defined on the vertex set of $G$, which
was first introduced by D{\"o}rfler and Imrich \cite{DOeIM-69}.
This relation is essential in the studies of the strong product.

\begin{defn}
Let $G$ be a given graph  and  $x, y \in V(G)$ be arbitrary vertices. 
The vertices $x$ and $y$ are in   relation $S$ 
if $N[x] = N[y]$. 
 A graph is $S$-\emph{thin}, or \emph{thin} for short, 
if no two vertices are in relation $S$. 
\end{defn}

In \cite{FESC-92}, vertices $x$ and $y$ with
$xSy$ are called \emph{interchangeable}.
Note that $xSy$ implies that $x$ and $y$ are adjacent since, by definition,
$x\in N[x]$ and $y\in N[y]$. Clearly, $S$ is an equivalence relation. 
The graph $G/S$ is the usual \emph{quotient graph}, more precisely, 
$G/S$ has vertex set $V(G/S) = \{S_i\mid S_i \text{ is an equivalence class of }S \text{ in } G\}$
 and $(S_i,S_j) \in E(G/S)$ whenever $(x,y) \in E(G)$ for some $x\in S_i$ and $y \in S_j$.

Note that the relation $S$ on $G/S$ 
is trivial, that is, its  equivalence classes are single vertices \cite{IMKL-00}.
Thus $G/S$ is thin.
The importance of thinness lies in the 
uniqueness of the coordinatizations, i.e.,
the property of an edge being Cartesian or not
does not depend on the choice of the coordinates.
As a consequence, the Cartesian edges are uniquely 
determined in an $S$-thin graph, see \cite{DOeIM-69, FESC-92}.

\begin{lem} %[\cite{DOeIM-69, FESC-92}]
	If a graph $G$ is thin, 
	then the set of Cartesian edges
	is uniquely determined and hence the coordinatization is unique.
	\label{lem:uniqu-coord}
\end{lem}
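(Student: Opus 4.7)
The plan is to exploit the fact that different coordinatizations of a strong product correspond to graph automorphisms of that same product, and then appeal to the classical automorphism structure of thin strong products.

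By Theorem~\ref{thm:uniquePFD}, the PFD $G = \boxtimes_{i=1}^n G_i$ is unique up to isomorphism and the order of the factors. Any coordinatization of $G$ amounts to choosing an isomorphism $\psi \colon G \to \boxtimes_{i=1}^n G_i$, and two coordinatizations $\psi_1, \psi_2$ differ by the automorphism $\alpha := \psi_2 \circ \psi_1^{-1}$ of $\boxtimes_{i=1}^n G_i$. The statement therefore reduces to showing that, under the thinness hypothesis, every automorphism of $\boxtimes_{i=1}^n G_i$ preserves the set of Cartesian edges.

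Next, I would invoke the classical automorphism theorem for connected thin strong products: every automorphism $\alpha$ has the canonical form
\[
  \alpha(x_1,\dots,x_n) = \bigl(\alpha_1(x_{\pi(1)}),\dots,\alpha_n(x_{\pi(n)})\bigr),
\]
where $\pi$ permutes $\{1,\dots,n\}$ subject to $G_{\pi(i)} \cong G_i$ and each $\alpha_i \colon G_{\pi(i)} \to G_i$ is an isomorphism. To establish this one first observes that thinness of $G$ is inherited by every factor: if $u\,S\,v$ held in some $G_i$, then any two vertices of $G$ agreeing in all other coordinates and carrying $u$ respectively $v$ in coordinate $i$ would be in relation $S$ in $G$, contradicting thinness. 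The crux is then that $\alpha$ must send each $G_i$-fiber onto a fiber of an isomorphic factor, which is exactly the step where thinness is used, since it prevents $\alpha$ from locally interchanging vertices that share the same closed neighborhood.

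Once $\alpha$ is in canonical form, the conclusion is immediate: if $x$ and $y$ differ in only coordinate $k$, then $\alpha(x)$ and $\alpha(y)$ differ in only coordinate $\pi^{-1}(k)$, and vice versa. Hence an edge is Cartesian under $\psi_1$ if and only if it is Cartesian under $\psi_2$, so the Cartesian edges — and with them the coordinatization up to the canonical ambiguity above — are uniquely determined by $G$. The main obstacle is the automorphism theorem for thin strong products itself; Figure~\ref{fig:CartEdges} shows why thinness cannot be dropped, since when $b$ and $d$ share the same closed neighborhood, the transposition $(b\ d)$ is an automorphism of $G$ that does not respect the product structure and flips the Cartesian status of the edge $(a,b)$.
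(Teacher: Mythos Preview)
The paper does not actually prove this lemma; it is stated as a known result with a citation to \cite{DOeIM-69, FESC-92}. Your argument is correct and is essentially the standard route in the literature (cf.\ \cite{IMKL-00}): reduce to showing that every automorphism of a thin connected strong product preserves Cartesian edges, and then invoke the D\"orfler--Imrich automorphism theorem asserting that such automorphisms act as a permutation of isomorphic prime factors composed with factorwise isomorphisms. You correctly identify this theorem as the nontrivial ingredient, and once it is in hand the preservation of Cartesian edges follows exactly as you write. The observation that thinness passes to the factors is also in the paper (Lemma~\ref{lem:IMKL-00-quotiengraph}).

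One small point worth making explicit: writing $\alpha = \psi_2 \circ \psi_1^{-1}$ presumes that $\psi_1$ and $\psi_2$ share the same target $\boxtimes_i G_i$. A priori two coordinatizations land in products that are only isomorphic, so one first uses Theorem~\ref{thm:uniquePFD} to fix an identification of the two targets; the ``permutation of isomorphic factors'' clause in the canonical form of $\alpha$ is precisely what absorbs the freedom in that identification. You allude to this, but spelling it out removes any ambiguity about why $\alpha$ is a genuine automorphism rather than merely an isomorphism between two different products.
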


\begin{figure}[tb]
  \centering
  \includegraphics[bb= 263 627 591 718, scale=0.95]{./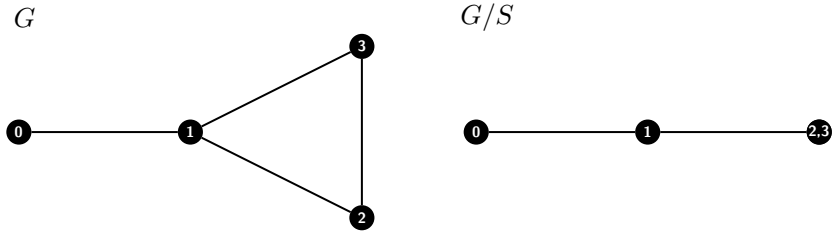}
  \caption[Quotient graph]{A  graph $G$ and its quotient graph $G/S$.
			The S-classes are
                $S_G(0) = \{0\}$, $S_G(1) = \{1\}$, and
                $S_G(2) = S_G(3) = \{2,3\}$. }
  \label{fig:quotient-graph}
\end{figure}

Another important basic property, first proved by D{\"o}rfler and Imrich \cite{DOeIM-69},
concerning the thinness of graphs is stated in the next lemma. 
Alternative proofs can be found in \cite{IMKL-00}.

\begin{lem}
  Let $S_G(v)$ denote the $S$-class in graph $G$ that contains  
  vertex $v$.
  For any two graphs $G_1$ and $G_2$ holds 
  $(G_1 \boxtimes G_2)/S \simeq G_1/S \boxtimes G_2/S$ 
  and for every vertex $x=(x_1,x_2) \in V(G)$ holds
  $S_G(x) = S_{G_1}(x_1) \times S_{G_2}(x_2)$.\\
  Thus, a graph is thin if and only if all of its factors with respect
    to the strong product are thin.
 \label{lem:IMKL-00-quotiengraph}
\end{lem}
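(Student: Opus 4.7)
The plan is to first establish the pointwise identity $S_G(x_1,x_2) = S_{G_1}(x_1)\times S_{G_2}(x_2)$ and derive the quotient isomorphism from it, finishing with the thinness corollary as a trivial consequence.

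First I would prove the neighborhood product formula
\[
N^{G_1\boxtimes G_2}[(x_1,x_2)] \;=\; N^{G_1}[x_1]\times N^{G_2}[x_2],
\]
which is an immediate unwinding of the strong product's adjacency conditions (i)--(iii): $(y_1,y_2)$ lies in $N^{G_1\boxtimes G_2}[(x_1,x_2)]$ iff $y_i\in N^{G_i}[x_i]$ for each $i$. Given this, $(y_1,y_2)\in S_G(x_1,x_2)$ means $N[y_1]\times N[y_2]=N[x_1]\times N[x_2]$, and since all four sets are nonempty, the factor-wise equality $N[y_i]=N[x_i]$ follows, i.e.\ $y_i\in S_{G_i}(x_i)$. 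The reverse inclusion is immediate, giving the product decomposition of $S$-classes.

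Next I would define the natural bijection
\[
\varphi:(G_1\boxtimes G_2)/S\longrightarrow G_1/S\boxtimes G_2/S,\quad S_G(x_1,x_2)\mapsto (S_{G_1}(x_1),S_{G_2}(x_2)),
\]
which is well-defined and bijective thanks to the pointwise identity above. To verify that $\varphi$ preserves adjacency, I would argue in both directions. If $(S_G(x),S_G(y))$ is an edge, pick representatives $\tilde x,\tilde y$ adjacent in $G_1\boxtimes G_2$; then in each coordinate $i$ either $\tilde x_i=\tilde y_i$ (so $S_{G_i}(x_i)=S_{G_i}(y_i)$) or $\tilde x_i\tilde y_i\in E(G_i)$ (so either the $S$-classes coincide or they are adjacent in $G_i/S$), with strict inequality in at least one coordinate since $S_G(x)\neq S_G(y)$; this gives an edge in $G_1/S\boxtimes G_2/S$. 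Conversely, from a strong-product edge between the two image vertices I would lift each coordinate: in coordinates where the $S$-classes agree I keep $x_i,y_i$ themselves (they are either equal or, being $S$-related, adjacent), and in coordinates where the classes are $G_i/S$-adjacent I choose representatives witnessing that adjacency; the resulting $\tilde x,\tilde y$ are distinct (at least one coordinate forces inequality), adjacent in $G_1\boxtimes G_2$, and lie in the correct $S$-classes, yielding an edge in the quotient. This case analysis is the main technical step, but it is routine once the product decomposition of $S$-classes is in hand.

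Finally, the thinness statement follows at once: $G_1\boxtimes G_2$ is thin iff $|S_G(x_1,x_2)|=1$ for every $x$, which by the identity $|S_G(x_1,x_2)|=|S_{G_1}(x_1)|\cdot|S_{G_2}(x_2)|$ is equivalent to both factors being thin. The extension to an arbitrary number of factors is then immediate by induction using associativity and commutativity of $\boxtimes$. I do not foresee a genuine obstacle; the only place one needs to be careful is ensuring the lifted representatives $\tilde x,\tilde y$ in the converse direction of edge preservation are actually distinct, which is exactly where the assumption that the quotient vertices differ is used.
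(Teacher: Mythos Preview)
Your argument is correct and complete. Note, however, that the paper does not actually prove this lemma: it is stated with attribution to D\"orfler and Imrich \cite{DOeIM-69} and to \cite{IMKL-00} for alternative proofs, with no proof given in the paper itself. So there is no ``paper's own proof'' to compare against. Your approach---first establishing the neighborhood product formula (which the paper records separately as Lemma~\ref{lem:N_subproduct}), deducing the pointwise $S$-class decomposition, then defining the obvious bijection on quotients and checking edge preservation in both directions---is the standard one and is essentially what one finds in the cited references.
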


\subsection{The Classical PFD Algorithm}
\label{subsec:classicalPFD}

In this subsection,  
we give a short overview of the
classical PFD algorithm that is used locally later on. 

The key idea of finding the PFD
of a graph $G$ with respect to the strong product is 
to find the PFD of a subgraph $\skel(G)$ of $G$, the so-called 
\emph{Cartesian skeleton}, with respect to the Cartesian
product and construct the prime factors of $G$
using the information of the PFD of $\skel(G)$.

\begin{defn}
	A subgraph $H$ of a graph $G = G_1\boxtimes G_2$ with 
	$V(H) = V(G)$ is 
	called \emph{Cartesian skeleton} of $G$,
	if it has a representation  $H = H_1\Box H_2$ such
	that $V(H_i^v) = V(G_i^v)$ 
	for all $v \in V(G)$ and $i\in\{1,2\}$. 
	The Cartesian skeleton $H$ is denoted by $\skel(G)$.
\label{def:CartSk}
\end{defn}

\begin{figure}[tp]
  \centering
  \includegraphics[bb= 200 620 320 745, scale=1.1]{./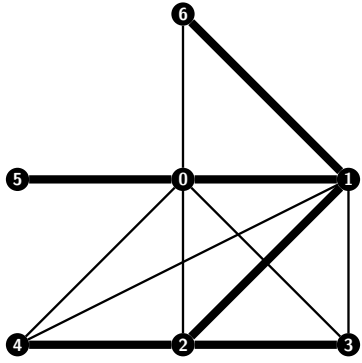}
  \caption[Cartesian skeleton]{
			A prime graph $G$ and its Cartesian Skeleton $\skel(G)$ induced by thick-lined edges.
			Thin-lined edges are marked as dispensable in the approach of Hammack and Imrich.
			On the other hand, the thick-lined edges are marked as Cartesian in the approach
			of Feigenbaum and Sch{\"a}ffer. 
			However, in both cases the resulting Cartesian skeleton $\skel(G)$
			spans $G$. Hence, the vertex sets of the
			$\skel(G)$-fiber (w.r.t. Cartesian product) and the $G$-fiber 
			(w.r.t. strong product) 
			induce the same partition 
			$V(\skel(G)) =V(G)$ of the respective vertex sets.}
  \label{fig:CartSkel}
\end{figure}

In other words, the $H_i$-fibers of the 
Cartesian skeleton $\skel(G)  = H_1 \Box H_2$
of a graph $G = G_1\boxtimes G_2$ induce the
same partition as the $G_i$-fibers on the 
vertex sets $V(\skel(G)) =V(G)$. 
As Lemma \ref{lem:uniqu-coord}
implies, if a graph $G$ is thin then the
set of Cartesian edges and therefore $\skel(G)$ is uniquely determined. 
The remaining question is: How can one determine $\skel(G)$?

The first who answered this question were
Feigenbaum and Sch{\"a}ffer \cite{FESC-92}.
In their polynomial-time approach, edges are marked as Cartesian
if the neighborhoods of their endpoints
fulfill some (strictly) maximal conditions
in collections of neighborhoods or subsets
of neighborhoods in $G$. 

The latest and fastest approach for the detection
of the Cartesian skeleton is
due to Hammack and Imrich  \cite{HAIM-09}. 
In distinction to the approach of Feigenbaum and Sch{\"a}ffer 
edges are marked as dispensable. 
All edges that are dispensable will be removed from $G$. The resulting
graph $\skel(G)$ is the desired Cartesian skeleton 
and will be decomposed with respect to the Cartesian product.
For an example see Figure \ref{fig:CartSkel}.

\begin{defn}
	An edge $(x,y)$ of $G$ is \emph{dispensable} if there exists
	a vertex $z \in V(G)$ for which both of the following statements hold.
	\begin{enumerate}	
		\item (a) $N[x] \cap N[y] \subset N[x] \cap N[z]$ \ \  or \ \  (b) $N[x] \subset N[z] \subset N[y]$
				\vspace{-0.1in} 
		\item (a) $N[x] \cap N[y] \subset N[y] \cap N[z]$ \ \ or \ \ (b) $N[y] \subset N[z] \subset N[x]$
 \end{enumerate}
	\label{def:hammack}
\end{defn}

Some important results, concerning the Cartesian skeleton are summarized 
in the following theorem.

\begin{thm}[\cite{HAIM-09}]
	Let $G = G_1 \boxtimes G_2$ be a strong product graph. 
	If $G$ is connected, then $\skel(G)$ is connected.
	Moreover, if $G_1$ and $G_2$ are thin graphs then $$\skel(G_1\boxtimes G_2) = \skel(G_1) \Box \skel(G_2).$$
	Any isomorphism $\varphi: G \rightarrow H$, as a map $V (G) \rightarrow V (H)$, is also an
	isomorphism $\varphi : \skel(G) \rightarrow \skel(H)$.
	\label{the:CartSk-hamm}
\end{thm}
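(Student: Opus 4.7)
The plan is to treat the three assertions in order of increasing difficulty, starting with the isomorphism invariance.

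First I would handle the isomorphism invariance. The dispensability predicate in Definition \ref{def:hammack} is phrased purely in terms of set inclusions among closed neighborhoods of three vertices. Since any isomorphism $\varphi:G\to H$ satisfies $\varphi(N^G[v]) = N^H[\varphi(v)]$ for every vertex $v$, an edge $(x,y)$ with witness $z$ is dispensable in $G$ if and only if $(\varphi(x),\varphi(y))$ with witness $\varphi(z)$ is dispensable in $H$. Therefore $\varphi$ induces a bijection between the dispensable edge sets, and its restriction to non-dispensable edges is an isomorphism $\skel(G)\to\skel(H)$.

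Next, for the multiplicativity $\skel(G_1\boxtimes G_2) = \skel(G_1)\Box\skel(G_2)$ I would exploit the product structure of closed neighborhoods, $N^{G_1\boxtimes G_2}[(a,b)] = N^{G_1}[a]\times N^{G_2}[b]$, which follows directly from adjacency conditions (i)--(iii). The argument splits in two. First, every non-Cartesian edge $e = ((x_1,x_2),(y_1,y_2))$ of $G_1\boxtimes G_2$ (so $x_1\neq y_1$ and $x_2\neq y_2$) is dispensable: take $z = (x_1,y_2)$ as witness and verify the four neighborhood inclusions via the product formula, invoking thinness of $G_1$ or $G_2$ exactly where strictness is required. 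Second, for a Cartesian edge $e = ((x_1,x_2),(y_1,x_2))$ lying in a $G_1$-fiber, I would show that $e$ is dispensable in $G_1\boxtimes G_2$ iff $(x_1,y_1)$ is dispensable in $G_1$. For the forward direction, a witness $z=(z_1,z_2)$ projects via $p_1$ to a witness in $G_1$, with thinness preventing the projection from collapsing. For the backward direction, a $G_1$-witness $z_1$ lifts to $(z_1,x_2)$. Combining both steps, the edges surviving in $\skel(G_1\boxtimes G_2)$ are precisely the Cartesian edges whose projections survive in the corresponding $\skel(G_i)$, which by the definition of $\Box$ is exactly the edge set of $\skel(G_1)\Box\skel(G_2)$. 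The condition $V(\skel(G_1\boxtimes G_2)^v_i) = V(G_i^v)$ then holds because $\skel(G_i)$ spans $G_i$.

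Finally, for the connectedness statement, the key observation is that every dispensable edge $(x,y)$ carries a built-in detour. Because $x\sim y$ we have $x,y\in N[x]\cap N[y]$. In case 1(a) this forces $x,y\in N[z]$, i.e.\ $z$ is a common neighbor of $x$ and $y$; in case 1(b) the chain $N[x]\subset N[z]\subset N[y]$ yields $x\in N[z]$ and (since $y\in N[x]\subset N[z]$) $y\in N[z]$ as well. Thus removing a single dispensable edge never disconnects its endpoints, since $x-z-y$ remains. To promote this to the simultaneous removal of all dispensable edges, I would induct on a well-chosen complexity measure of the pair $(x,y)$, for instance $|N[x]\cup N[y]|$ together with the pair of degrees, showing that the detour edges $(x,z)$ and $(z,y)$, if themselves dispensable, can be recursively replaced by shorter non-dispensable walks.

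I expect the main obstacle to be exactly this last inductive step: making the detour replacement rigorous requires selecting witnesses whose neighborhood statistics are demonstrably smaller than those of the original endpoints, and it is not immediately transparent that such a choice is always available, especially when several dispensability clauses apply. A secondary irritant in the multiplicativity proof is the case analysis for non-Cartesian edges, where the witness must be picked carefully and thinness of the factors must be invoked at precisely the spots where strict, rather than merely non-strict, inclusion is needed.
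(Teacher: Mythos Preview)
The paper does not prove this theorem; it is quoted verbatim from Hammack and Imrich \cite{HAIM-09} and immediately followed by a remark, with no proof environment. There is therefore no in-paper argument to compare against.

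On the merits of your outline: the isomorphism-invariance part is complete as stated, and your plan for the multiplicativity $\skel(G_1\boxtimes G_2)=\skel(G_1)\Box\skel(G_2)$ is the standard one and essentially what Hammack and Imrich do. Your caveat about the non-Cartesian case is apt: with the witness $z=(x_1,y_2)$ one does not always land in clause 1(a)/2(a); when $N[x_1]\subseteq N[y_1]$ or the reverse, thinness gives strict containment and one must switch to clause 1(b) or 2(b), so a short case split is unavoidable.

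For connectedness, your observation that any dispensability witness $z$ is a common neighbour of $x$ and $y$ is correct and is indeed the starting point of the original proof. The delicate part, which you flag, is the termination of the recursive detour replacement. The measure you propose, $|N[x]\cup N[y]|$, does not obviously decrease along the detour: for instance in clause 1(b) with $N[x]\subset N[z]\subset N[y]$ one has $|N[x]\cup N[z]|\le|N[x]\cup N[y]|$ but $|N[z]\cup N[y]|=|N[y]|$ could equal $|N[x]\cup N[y]|$ when $N[x]\subset N[y]$. Hammack and Imrich handle this by first reducing to the case where neither of $N[x],N[y]$ contains the other (walking up a chain of strictly growing neighbourhoods to a locally maximal one), and only then running the common-neighbour induction on $|N[x]\cap N[y]|$, which does strictly increase along clause (a) witnesses. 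You would need a comparable two-phase argument to close the gap.
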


\begin{rem}
Notice that the set of all Cartesian edges in a strong product 
$G = \boxtimes_{i=1}^n G_i$ of connected, thin prime graphs 
are uniquely determined and hence its Cartesian skeleton is.
 Moreover, since by Theorem \ref{the:CartSk-hamm} and Definition \ref{def:CartSk} 
of the Cartesian skeleton $\skel(G) = \Box_{i=1}^n \skel(G_i)$ 
of $G$ we know that $V(\skel(G)_i^v) = V(G_i^v)$ for all $v \in V(G)$. 
Thus, we can assume without loss of generality
that the set of \emph{all} Cartesian edges in a strong product 
$G = \boxtimes_{i=1}^n G_i$ of connected, thin 
graphs is the edge set of the Cartesian skeleton $\skel(G)$ of $G$. 
As an example consider the graph $G$ in Figure \ref{fig:CartSkel}. 
The edges of the Cartesian skeleton are highlighted by thick-lined edges and 
one can observe that not all edges of $G$ are determined as Cartesian. 
As it turns out $G$ is prime and hence, after the factorization of $\skel(G)$, 
\emph{all} edges of $G$ are determined as Cartesian belonging to a single factor. 
\end{rem}

\begin{figure}[htbp]
  \centering
  \includegraphics[bb= 126 502 475 744, scale=0.9]{./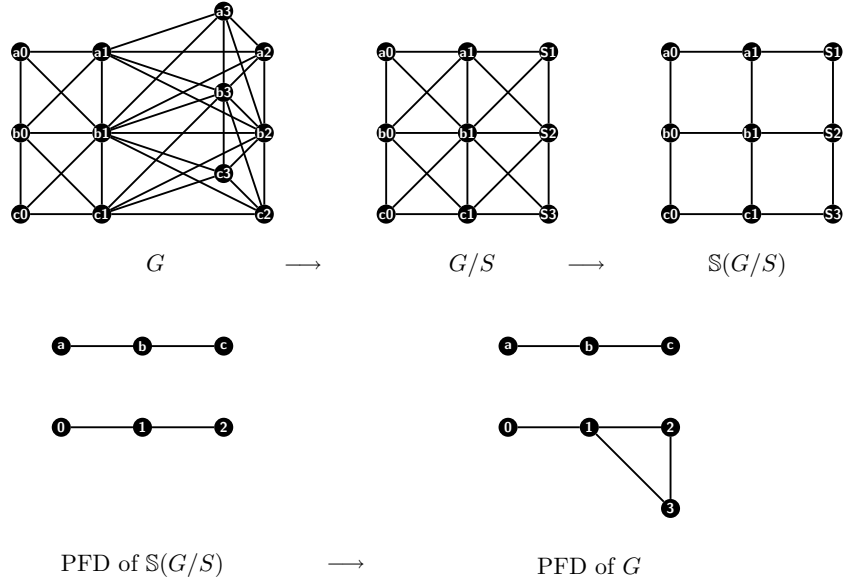}
  \caption[Basic steps of the global approach]{Illustrated are the basic steps of the PFD of strong product graphs.}
  \label{fig:genGlobalAppr}
\end{figure}

Now, we are able to give a brief overview of the global approach that
decomposes given graphs into their prime factors 
with respect to the strong product, see also Figure \ref{fig:genGlobalAppr}.

Given an arbitrary graph $G$, one first 
extracts a possible complete factor $K_l$ of maximal size, resulting in a graph $G'$, i.e.,
$G\simeq G'\boxtimes K_l$, and 
computes the quotient graph $H = G'/S$. This graph $H$ is thin and therefore the
Cartesian edges of $\skel(H)$ can be uniquely determined. 
Now, one computes the prime factors of $\skel(H)$ with respect
to the Cartesian product and utilizes this information to determine 
the prime factors of $G'$ by usage of an additional operation based on 
$gcd$'s of the size of the S-classes, see Lemma 5.40 and 5.41 provided in \cite{IMKL-00}. 
Notice that $G\simeq G'\boxtimes K_l$. 
The prime factors of $G$ are then the prime factors of $G'$ together
with the complete factors $K_{p_1},\dots,K_{p_j}$,
where $p_1\dots p_j$ are the prime factors of the
integer $l$. Figure \ref{fig:genGlobalAppr} gives an overview of 
the classical PFD algorithm.

One can bound the time complexity of this PFD algorithm 
as stated in the next Lemma, see \cite{HAIM-09} and \cite{IKH-11}.

\begin{lem}[\cite{IKH-11}]
	The PFD of a given graph $G$ with $n$ vertices and $m$ edges  
	can be computed	in $O(\max(mn\log n, m^2))$ time.
	\label{lem:complexity_global2a}
\end{lem}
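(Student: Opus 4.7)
The plan is to analyze the six phases of the classical PFD algorithm summarized in this subsection (and illustrated in Figure \ref{fig:genGlobalAppr}), bound the running time of each phase individually, and then take the maximum.

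First I would handle the preprocessing: extracting the largest complete strong factor $K_l$ so that $G \simeq G' \boxtimes K_l$, and then constructing the quotient graph $H = G'/S$. Both steps reduce to comparing closed neighborhoods, and can be carried out in $O(mn)$ time by sorting or hashing the neighborhood lists (each $N[v]$ has size $\deg(v)+1$, and $\sum_v \deg(v) = 2m$, so that an entire pass costs $O(m)$ per vertex).

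Next I would bound the two dominant phases. The Cartesian skeleton $\skel(H)$ is computed via the dispensability test of Definition \ref{def:hammack}: for each edge $(x,y)$ one searches for a witness $z$ satisfying the stated neighborhood inclusions, and the implementation given in \cite{HAIM-09} can be shown to run in $O(m^2)$. The Cartesian PFD of $\skel(H)$ is then computed by an off-the-shelf algorithm whose best bound for connected graphs is $O(mn\log n)$. The remaining phases are cheap: reconstructing the strong prime factors of $G'$ from the Cartesian prime factors of $\skel(H)$ uses the gcd argument of Lemma~5.40 and 5.41 of \cite{IMKL-00} applied to the $S$-class sizes (integers bounded by $n$), and factoring the integer $l \le n$ is negligible. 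Taking the maximum of these contributions yields $O(\max(mn\log n, m^2))$.

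The hard part will be verifying that no hidden overhead accumulates when passing data between phases --- in particular, that constructing the auxiliary data structures needed by the dispensability test, and lifting the Cartesian factorization of $\skel(H)$ back to a strong factorization of $G'$, can both be done within the stated budget. Since the two headline terms come from different subroutines (skeleton extraction versus Cartesian PFD), neither dominates universally, so the combined bound is genuinely a maximum of two regimes rather than a single term; a careful write-up would lean on the detailed analyses already carried out in \cite{HAIM-09} and \cite{IKH-11}, the role of the present lemma being mainly to collate those bounds into a single statement.
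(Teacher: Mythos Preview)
The paper does not prove this lemma at all: it is stated as a citation of a known result from \cite{HAIM-09} and \cite{IKH-11}, with no argument given beyond the surrounding sentence ``One can bound the time complexity of this PFD algorithm as stated in the next Lemma, see \cite{HAIM-09} and \cite{IKH-11}.'' Your sketch correctly identifies the two subroutines that produce the two terms in the bound (the $O(m^2)$ dispensability test for the Cartesian skeleton and the $O(mn\log n)$ Cartesian PFD), and this is indeed the content of the cited references; but since the present paper treats the lemma as a black-box import, there is nothing here to compare your argument against.
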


\section{The Local Way to Go - Tools}

As mentioned, we will utilize the classical PFD algorithm
and derive a new approach for the PFD w.r.t. the strong product 
that makes only usage of small subgraphs, so-called \emph{subproducts} of
particular size, and that exploits the local information in order to derive the global
factors. Moreover, motivated by the fact that most graphs are prime, although 
they can have a product-like structure, we want to vary 
this approach such that also disturbed products can be recognized.
The key idea is the following: We try to cover
a given disturbed product $G$ by subproducts that are itself ''undisturbed''. If the graph
$G$ is not too much perturbed, we would expect to be able to cover most of it by 
factorizable $1$-neighborhoods or
other small subproducts and to use this information for the construction of a strong product $H$ that
approximates $G$. 

However, for the realization of this idea several important tools are needed.
First, we give an overview of the subproducts that will be used.
We then introduce the so-called \Scond, that is a property of an edge  
that allows us to determine Cartesian edges, even if the given graph is not thin. 
We continue to examine a subset of the vertex set of
a given graph $G$, the so-called \emph{backbone} $\bone(G)$.
Both concepts, the \Scond\ and the backbone, have first been
investigated in \cite{HIKS-09}. We will see that the backbone is closely related to
the \Scond. 
Finally, in order to identify locally determined fiber as belonging to
one and the same or to different global factors, the so-called 
\emph{color-continuation} property will be introduced. As it
turns out, this particular property is not always met. Therefore,
we continue to show how one can solve this problem for thin 
and later on for non-thin (sub)graphs.

\subsection{Subproducts}

	In this subsection, we are concerned with so-called
	\emph{subproducts}, also known as \emph{boxes} \cite{Ta-97},
	that will be used in the algorithm. 	
	
\begin{defn}
A \emph{subproduct} of a product $G\boxtimes H$, resp. $G\Box H$,
is defined as the strong product, resp. the Cartesian product, 
of subgraphs of $G$ and $H$, respectively. 
\end{defn}

As  shown in \cite{HIKS-08}, it holds that $1$-neighborhoods in strong product graphs are subproducts:

\begin{lem}[\cite{HIKS-08}]
For any two graphs $G$ and $H$ holds
$\langle N^{G\boxtimes H}[(x, y)] \rangle =
   \langle N^G[x]\rangle \boxtimes \langle N^H[y]\rangle$.
\label{lem:N_subproduct}
\end{lem}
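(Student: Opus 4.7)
The plan is to prove set equality of both the vertex sets and the edge sets, which suffices since one graph is given as an induced subgraph and the other as a strong product that can be described directly in terms of adjacencies in $G$ and $H$.

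First I would verify the vertex sets agree. By definition, $(u,v) \in N^{G\boxtimes H}[(x,y)]$ iff either $(u,v)=(x,y)$ or $(u,v)$ is adjacent to $(x,y)$ in $G\boxtimes H$. Unpacking the three adjacency conditions (i)--(iii) of the strong product, together with the diagonal case, one sees directly that this is equivalent to $u\in N^G[x]$ and $v\in N^H[y]$. Hence the vertex set of $\langle N^{G\boxtimes H}[(x,y)]\rangle$ coincides with $N^G[x]\times N^H[y]$, which is exactly the vertex set of $\langle N^G[x]\rangle \boxtimes \langle N^H[y]\rangle$.

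Next I would check edge equality on this common vertex set. Take two distinct vertices $(u_1,v_1),(u_2,v_2)\in N^G[x]\times N^H[y]$. On one hand, they are adjacent in $\langle N^{G\boxtimes H}[(x,y)]\rangle$ iff they are adjacent in $G\boxtimes H$, because induced subgraphs inherit exactly the edges present in the ambient graph. On the other hand, they are adjacent in $\langle N^G[x]\rangle \boxtimes \langle N^H[y]\rangle$ iff one of the strong product conditions (i)--(iii) holds with respect to the induced subgraphs $\langle N^G[x]\rangle$ and $\langle N^H[y]\rangle$. Since $\langle N^G[x]\rangle$ and $\langle N^H[y]\rangle$ are induced subgraphs of $G$ and $H$ respectively, adjacency in them (for pairs of vertices lying in their vertex sets) coincides with adjacency in $G$ and $H$. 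A short case analysis through (i), (ii), (iii) then shows the two adjacency notions give the same edge set.

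The argument is essentially a definition chase, so I do not expect any real obstacle; the only point that requires mild care is tracking the diagonal/equality case when at least one of $u_i=x$ or $v_i=y$ occurs, making sure this case is correctly reflected in both the closed neighborhood interpretation and the strong product adjacency with its three sub-conditions. Once vertex and edge sets are shown to match, both sides are the same graph and the lemma follows.
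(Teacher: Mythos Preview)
Your proof is correct; it is the natural definition chase showing that closed neighborhoods in a strong product split as the product of closed neighborhoods in the factors, and the case analysis you outline goes through without difficulty. Note that the paper itself does not prove this lemma but simply cites it from \cite{HIKS-08}, so there is no in-paper argument to compare against; your write-up would serve perfectly well as a self-contained proof.
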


\begin{figure}[htbp]
  \centering
  \includegraphics[bb= 192 610 380 728, scale=0.85]{./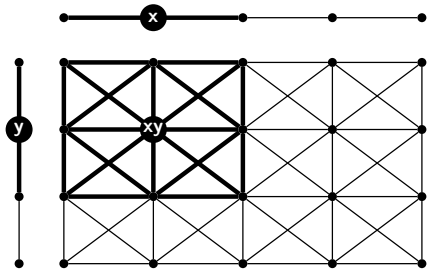}
  \caption[1-neighborhood]{The 1-neighborhood $\la N[(x,y)] \ra = \la N[x] \ra \boxtimes \la N[y] \ra$ is 				
           highlighted by thick lined edges}
\label{fig:subproduct-N}
\end{figure}

	For applications to approximate products it would be desirable
	to use small subproducts. Unfortunately, 
	it turns out that 1-neighborhoods, which would be small enough
	for our purpose, are not sufficient to cover a given graph 
	in general	while providing enough information
	to recognize the global factors. However, 
	we want to avoid to use 2-neighborhoods, although they are 
	subproducts as well, they have diameter $4$ and are thus quite large.
	Therefore, we will define  
	further small subgraphs, that 
	are smaller than 2-neighborhoods, 
	and  show that they	are also subproducts.

\begin{defn}
	Given a graph $G$ and an arbitrary edge $(v,w) \in E(G)$.  The
	\emph{edge-neighborhood} of $(v,w)$ is defined as 
	$$\langle N[v] \cup N[w] \rangle $$ 
    and the \emph{$N^*_{v,w}$-neighborhood} is defined as 
	$$N^*_{v,w} = \langle \bigcup_{x\in N[v] \cap N[w]} N[x]\rangle.$$
	\label{def:N}
\end{defn}

	If there is no risk of confusion we will denote $N^*_{v,w}$-neighborhoods
	just by $N^*$-neighborhoods.
	We will show in the following that in addition to 1-neighborhoods
	also edge-neighborhoods of Cartesian edges and $N^*$-neighborhoods 
	are subproducts	and hence, natural candidates to cover a given graph as well. 
	We show first, given a subproduct $H$ of $G$, that the 
	subgraph which is induced by vertices contained 
	in the union of 1-neighborhoods $N[v]$ with $v\in V(H)$, 
	is itself a subproduct of $G$.

\begin{figure}[tbp]
  \centering
  \includegraphics[bb= 192 592 550 710, scale=0.85]{./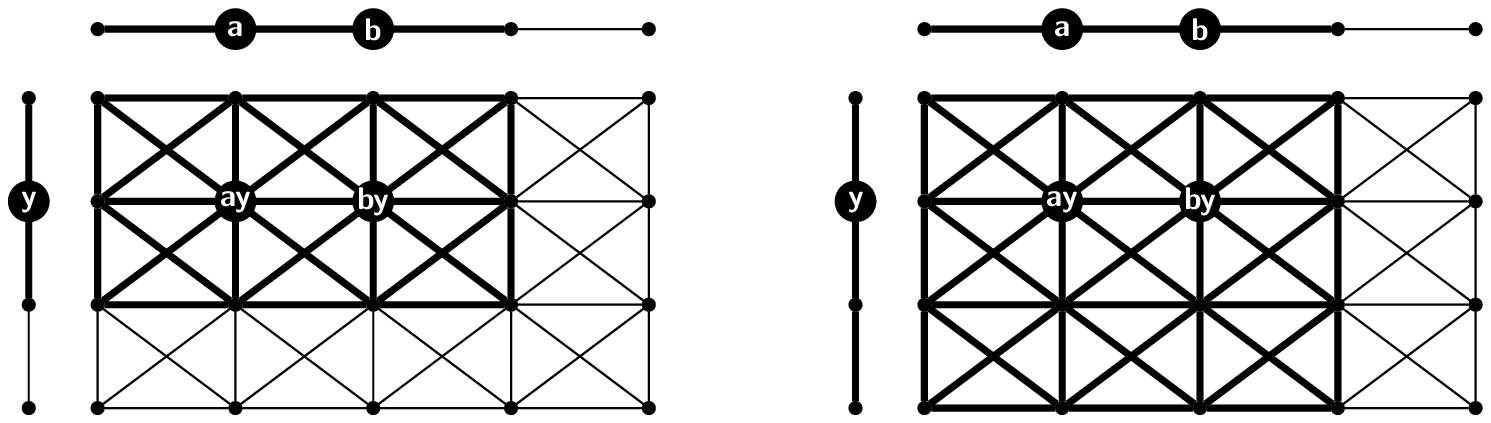}
  \caption[Edge-neighborhood and $N^*$-neighborhood] {Shown is a strong product graph of two paths. Notice that 
			 the 2-neighborhood $\la N_2[(b,y)] \ra$ of vertex $(b,y)$ is isomorphic to $G$.\\
			\textbf{lhs.:} The edge-neighborhood 
			$\langle N[(a,y)] \cup N[(b,y)] \rangle  = \la(N[a]\cup N[b])\ra \boxtimes \la N[y] \ra$.\\
			\textbf{rhs.:}  The $N^*$-neighborhood $N^*_{(a,y),(b,y)}  = \la\cup_{z\in N[a]\cap N[b]} N[z]\ra 							\boxtimes \la \cup_{z \in N[y]} N[z]\ra$.}
\label{fig:subproduct-edgeN-N}
\end{figure}

\begin{lem}
	Let $G= G_1 \boxtimes G_2$ be a strong product graph and 
	$H= H_1 \boxtimes H_2$ be a subproduct of $G$.
	Then $$H^* = \left\langle \cup_{v\in V(H)} N^G[v] \right\rangle$$ 
	is a subproduct of $G$ with  $H^* = H_1^* \boxtimes H_2^*$, where 
	$H_i^*$ is the induced subgraph of factor $G_i$ on the vertex set
	$V(H_i^*)= \bigcup_{v_i\in V(H_i)} N^{G_i}[v_i]$, $i=1,2$.
	\label{lem:H_union_N_is_product}
\end{lem}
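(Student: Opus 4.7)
The plan is to verify the claimed factorization in two stages: first show that the vertex set of $H^*$ equals the Cartesian set product $V(H_1^*) \times V(H_2^*)$, and then check that the edges of the induced subgraph $H^*$ coincide with those of the strong product $H_1^* \boxtimes H_2^*$.

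For the vertex set, I would expand the defining union using Lemma \ref{lem:N_subproduct}, which rewrites each closed neighborhood as $N^G[(v_1,v_2)] = N^{G_1}[v_1] \times N^{G_2}[v_2]$. Since $H$ is a subproduct, $V(H) = V(H_1) \times V(H_2)$, so
\[
V(H^*) \;=\; \bigcup_{v_1 \in V(H_1),\, v_2 \in V(H_2)} N^{G_1}[v_1] \times N^{G_2}[v_2].
\]
Applying the elementary distributive identity $\bigcup_{a,b}(A_a \times B_b) = \bigl(\bigcup_a A_a\bigr) \times \bigl(\bigcup_b B_b\bigr)$ to the right-hand side yields exactly $V(H_1^*) \times V(H_2^*)$.

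For the edges, since $H^*$ is an induced subgraph of $G = G_1 \boxtimes G_2$, two vertices $(x_1,x_2)$ and $(y_1,y_2)$ in $V(H^*)$ are adjacent in $H^*$ precisely when one of the three strong-product conditions (i)--(iii) holds for the factor coordinates. Because each coordinate $x_i, y_i$ already lies in $V(H_i^*)$, the condition $(x_i,y_i) \in E(G_i)$ is equivalent to $(x_i,y_i) \in E(H_i^*)$ by the induced-subgraph definition of $H_i^*$. Thus the three adjacency conditions in $G_1 \boxtimes G_2$ restricted to $V(H^*)$ coincide with the three adjacency conditions in $H_1^* \boxtimes H_2^*$, finishing the proof.

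The only nontrivial ingredient is Lemma \ref{lem:N_subproduct} combined with the set-theoretic distributivity of unions over Cartesian products; no real obstacle is expected. The main thing to keep explicit is the induced-subgraph role of $H_i^*$, so that edges are neither under- nor over-counted on either side of the claimed equality.
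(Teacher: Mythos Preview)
Your proof is correct and follows essentially the same route as the paper: both reduce to showing $V(H^*) = V(H_1^*)\times V(H_2^*)$ via Lemma~\ref{lem:N_subproduct} and the distributivity of unions over Cartesian products. The paper simply states that this vertex-set equality suffices, whereas you spell out the edge comparison explicitly; this extra step is harmless and, if anything, makes the argument more self-contained.
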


\begin{proof}
	It suffices to show that $V(H^*) = V(H_1^*)\times V(H_2^*)$.
	For the sake of convenience, we denote $V(H_i)$ by $V_i$, for $i=1,2$.
 	We have:
		$  V(H^*) = \bigcup_{v\in V(H)} N^G[v] 	  = \bigcup_{v\in V_1\times V_2} N^G[v].$

		Since the induced neighborhood of each vertex $v = (v_1,v_2)$ 
		in $G$ is the product of the corresponding neighborhoods $ N^{G_1}[v_1] \boxtimes N^{G_2}[v_2] $
		we can conclude:

			$ V(H^*) = \bigcup_{\{v_1\in V_1\}\times \{v_2 \in V_2\}} ( N^{G_1}[v_1]\times N^{G_2}[v_2] )
				   = \bigcup_{v_1\in V_1}  N^{G_1}[v_1] \times \bigcup_{v_2\in V_2} N^{G_2}[v_2] 
				   = V(H_1^*) \times V(H_2^*) $
\end{proof}

\begin{lem}
	Let $G$ be a nontrivial strong product graph and 
	$(v,w)$ be an arbitrary edge of $G$. 
	Then $\langle N^G[v] \cap N^G[w] \rangle$ is a subproduct.
	\label{lem:intersection_neigb_is_product}
\end{lem}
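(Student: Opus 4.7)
The plan is to reduce everything to Lemma \ref{lem:N_subproduct}, which already identifies $\langle N^G[v]\rangle$ and $\langle N^G[w]\rangle$ with strong products of $1$-neighborhoods in the factors, and then to observe that intersections distribute over Cartesian products of vertex sets.

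First I would coordinatize: write $v = (v_1, v_2)$ and $w = (w_1, w_2)$, noting that since $(v,w) \in E(G_1 \boxtimes G_2)$, for each $i \in \{1,2\}$ either $v_i = w_i$ or $(v_i, w_i) \in E(G_i)$; in particular $v_i \in N^{G_i}[w_i]$ and $w_i \in N^{G_i}[v_i]$, so the sets $A_i := N^{G_i}[v_i] \cap N^{G_i}[w_i]$ are nonempty. By Lemma \ref{lem:N_subproduct} (applied on the vertex-set level), $N^G[v] = N^{G_1}[v_1] \times N^{G_2}[v_2]$ and $N^G[w] = N^{G_1}[w_1] \times N^{G_2}[w_2]$. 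The elementary set identity $(X_1 \times X_2) \cap (Y_1 \times Y_2) = (X_1 \cap Y_1) \times (X_2 \cap Y_2)$ then yields
\[
N^G[v] \cap N^G[w] = A_1 \times A_2.
\]

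Next I would promote this vertex-set factorization to a factorization of the induced subgraph. For this I would use the standard fact that in a strong product $G_1 \boxtimes G_2$, the subgraph induced on any set of the form $U_1 \times U_2$ with $U_i \subseteq V(G_i)$ is exactly $\langle U_1\rangle_{G_1} \boxtimes \langle U_2\rangle_{G_2}$; this is immediate from the three adjacency conditions (i)--(iii) defining the strong product, since each such condition refers only to coordinates in the same factor. Applying this with $U_i = A_i$ gives
\[
\langle N^G[v] \cap N^G[w]\rangle = \langle A_1\rangle_{G_1} \boxtimes \langle A_2\rangle_{G_2},
\]
which is by definition a subproduct of $G_1 \boxtimes G_2$, completing the argument.

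There is essentially no hard step here: once Lemma \ref{lem:N_subproduct} is invoked, everything is set-theoretic, and the only thing to be careful about is that the argument is truly on the level of induced subgraphs (not just vertex sets), which is handled by the general ``rectangular induced subgraph'' observation above. I would not bother with a case split on the adjacency type (i)--(iii) of $(v,w)$, since the product-of-intersections identity uniformly covers all three cases (in case (i), one factor $A_2$ simply collapses to the full $1$-neighborhood $N^{G_2}[v_2] = N^{G_2}[w_2]$, and analogously in case (ii)).
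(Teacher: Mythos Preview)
Your proof is correct and follows essentially the same route as the paper: coordinatize, invoke Lemma~\ref{lem:N_subproduct} to write each closed neighborhood as a Cartesian product of factor neighborhoods, and then apply the identity $(X_1\times X_2)\cap(Y_1\times Y_2)=(X_1\cap Y_1)\times(X_2\cap Y_2)$. You are in fact slightly more careful than the paper, which stops at the vertex-set identity, whereas you explicitly justify the passage to induced subgraphs via the ``rectangular induced subgraph'' observation.
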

\begin{proof}
	Let $v$ and $w$ have coordinates $(v_1,v_2)$ and $(w_1,w_2)$, respectively.
	Since $ N^G[v] = N^{G_1}[v_1] \times N^{G_2}[v_2]$ we can conclude that 
		$N^G[v] \cap N^G[w] = (N^{G_1}[v_1] \times N^{G_2}[v_2]) \cap (N^{G_1}[w_1] \times N^{G_2}[w_2])   
					   = (N^{G_1}[v_1] \cap N^{G_1}[w_1]) \times (N^{G_2}[v_2] \cap N^{G_2}[w_2]). $
\end{proof}

Lemmas \ref{lem:N_subproduct}, \ref{lem:H_union_N_is_product} and \ref{lem:intersection_neigb_is_product} 
directly imply the next corollary.

\begin{cor}
	Let $G$ be a given graph. Then for all $v \in V(G)$ and all edges $(v,w) \in E(G)$
	holds:
	 $$\langle N_2[v] \rangle \text{ and } N^*_{v,w}$$ are subproducts of $G$.
    Moreover, if the edge $(v,w)$ is Cartesian than the 
	edge-neighborhood $$\langle N[v] \cup N[w] \rangle$$ is a subproduct of $G$.
	\label{cor:boxes}
\end{cor}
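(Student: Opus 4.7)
The plan is to deduce each of the three statements by combining the preceding lemmas, with Lemma~\ref{lem:H_union_N_is_product} playing the central role: in every case I identify a subproduct $H$ of $G$ and then use the operation $H \mapsto H^* = \left\langle \bigcup_{u \in V(H)} N^G[u]\right\rangle$ to obtain the target subgraph as a subproduct of $G$.

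For $\langle N_2[v]\rangle$, I first invoke Lemma~\ref{lem:N_subproduct} to see that $H := \langle N^G[v]\rangle$ is itself a subproduct. Since by definition $N_2[v] = \bigcup_{x \in N^G[v]} N^G[x] = \bigcup_{x \in V(H)} N^G[x]$, applying Lemma~\ref{lem:H_union_N_is_product} to $H$ yields that $\langle N_2[v]\rangle = H^*$ is a subproduct. For $N^*_{v,w}$ the same strategy works with $H := \langle N^G[v] \cap N^G[w]\rangle$, which is a subproduct by Lemma~\ref{lem:intersection_neigb_is_product}; Lemma~\ref{lem:H_union_N_is_product} then gives $N^*_{v,w} = H^*$ as a subproduct.

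The only case that really needs the Cartesian hypothesis is the edge-neighborhood $\langle N[v] \cup N[w]\rangle$. Writing $G = G_1 \boxtimes G_2$, I would use that a Cartesian edge joins two vertices agreeing in all but one coordinate: after relabeling, $v = (v_1, v_2)$ and $w = (w_1, v_2)$ with $(v_1, w_1) \in E(G_1)$. The plan is then to take $H := \langle \{v, w\}\rangle$, observe that it factors as $H = H_1 \boxtimes H_2$ with $H_1$ the single edge $(v_1, w_1)$ in $G_1$ and $H_2 \cong K_1$ on $\{v_2\}$, and apply Lemma~\ref{lem:H_union_N_is_product} once more. A direct check (or the lemma itself) then shows $H^* = \langle N[v] \cup N[w]\rangle$, completing the proof.

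The main obstacle is precisely to see why the Cartesian hypothesis cannot be dropped in this last case: for a non-Cartesian edge $(v,w)$ the two endpoints differ in more than one coordinate, so the two-vertex subgraph $\langle\{v,w\}\rangle$ is \emph{not} of the form $H_1 \boxtimes H_2$ (a product on those coordinates would be forced to contain all four ``corner'' vertices), and hence Lemma~\ref{lem:H_union_N_is_product} cannot be invoked as a black box. Everything else is a clean bookkeeping exercise chaining together the three preceding lemmas.
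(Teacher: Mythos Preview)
Your proposal is correct and is exactly the derivation the paper has in mind: the paper does not spell out a proof but simply states that Lemmas~\ref{lem:N_subproduct}, \ref{lem:H_union_N_is_product} and \ref{lem:intersection_neigb_is_product} directly imply the corollary, and you have unpacked precisely that chain of implications (including the key observation that a Cartesian edge $\langle\{v,w\}\rangle$ is itself a subproduct $K_2\boxtimes K_1$, which is what makes Lemma~\ref{lem:H_union_N_is_product} applicable in the third case). Your remark on why the Cartesian hypothesis is indispensable is a welcome clarification that goes slightly beyond the paper's terse treatment.
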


Notice that $\langle N[v] \cup N[w] \rangle$ could be a product, i.e., not prime, 
even if $(v,w)$ is non-Cartesian in $G$. 
However, the edge-neighborhood of a single non-Cartesian edge is not 
a subproduct, in general. 
The obstacle we have is that a non-Cartesian edge of $G$ might be 
Cartesian in its edge-neighborhood.
Therefore, we cannot use the information provided by the PFD of
$\la N[x] \cup N[y] \ra$ to figure out if $(x,y)$ is Cartesian in $G$ and 
hence, if $\la N[x] \cup N[y] \ra$ is a proper subproduct.
On the other hand, an edge that is Cartesian in a subproduct $H$
of $G$ must be Cartesian in $G$.
To check if an edge $(x,y)$ is Cartesian
in $\la N[x] \cup N[y] \ra$ that is Cartesian in $G$ as well
we use the \emph{dispensable}-property provided
by Hammack and Imrich, see \cite{HAIM-09}.

We show that an edge $(x,y)$  that is dispensable in $G$ 
is also dispensable in $\la N[x] \cup N[y] \ra$. Conversely, we can
conclude that every edge that is indispensable in
 $\la N[x] \cup N[y] \ra$ must be indispensable and
therefore Cartesian in $G$. This implies that
every edge-neighborhood  $\la N[x] \cup N[y] \ra$
is a proper subproduct of $G$ if $(x,y)$ is indispensable
in $\la N[x] \cup N[y] \ra$.

\begin{rem} As mentioned in \cite{HAIM-09}, we have:
	\begin{itemize} 
   		\item	$N[x] \subset N[z] \subset N[y]$ implies $N[x] \cap N[y] \subset N[y] \cap N[z]$.\vspace{-0.1in} 
		\item	$N[y] \subset N[z] \subset N[x]$ implies $N[x] \cap N[y] \subset N[x] \cap N[z]$.\vspace{-0.1in} 
		\item	If $(x,y)$ is indispensable then $N[x] \cap N[y] \subset N[x] \cap N[z]$ and
				$N[x] \cap N[y] \subset N[y] \cap N[z]$ cannot both be true.
	\end{itemize}
	\label{rem:disp_and_conclusion}
\end{rem}

By simple set theoretical arguments one can easily 
prove the following lemma. 

\begin{lem}
	Let $(x,y)$ be an arbitrary edge of  a given graph $G$ and $H = \langle N[x]\cup N[y] \rangle$.  
	Then it holds:	
	$$N[x] \cap N[y] \subset N[x] \cap N[z] \Leftrightarrow
	N[x] \cap N[y] \cap H \subset N[x] \cap N[z] \cap H$$ 
	and 
	$$N[x] \subset N[z] \subset N[y] \Rightarrow 
	N[x] \cap H  \subset N[z] \cap H \subset N[y] \cap H$$
	\label{lem:hammack_global_local_edgeN} 
\end{lem}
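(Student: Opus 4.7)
The plan is to reduce both claims to immediate tautologies by exploiting the single structural fact that $V(H) = N[x] \cup N[y]$. Once this is fully unpacked, every set appearing with ``$\cap H$'' on one side of either assertion coincides with its counterpart without the intersection, so the statements trivialize.

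First I would spell out the containments that follow from $V(H) = N[x] \cup N[y]$: namely $N[x] \subseteq V(H)$, $N[y] \subseteq V(H)$ and hence $N[x] \cap N[y] \subseteq V(H)$; and for any vertex $z \in V(G)$ (not necessarily in $V(H)$) also $N[x] \cap N[z] \subseteq N[x] \subseteq V(H)$ and $N[y] \cap N[z] \subseteq N[y] \subseteq V(H)$. Reading the notation ``$A \cap H$'' as $A \cap V(H)$ (which, for $x,y \in V(H)$, agrees with the induced-subgraph neighborhoods $N^H[x]$, $N^H[y]$), these containments immediately yield $N[x] \cap N[y] \cap V(H) = N[x] \cap N[y]$ and $N[x] \cap N[z] \cap V(H) = N[x] \cap N[z]$. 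Plugging this into the right-hand side of the biconditional turns it into the left-hand side verbatim, so the equivalence is automatic in both directions and the strict inclusion is preserved.

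For the implication, I would additionally invoke the hypothesis $N[z] \subset N[y]$ to infer $N[z] \subseteq N[y] \subseteq V(H)$, so that intersecting with $V(H)$ leaves each of $N[x]$, $N[z]$, $N[y]$ unchanged. The conclusion $N[x]\cap H \subset N[z]\cap H \subset N[y]\cap H$ is therefore a verbatim restatement of the hypothesis $N[x]\subset N[z]\subset N[y]$.

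The only potential obstacle is bookkeeping: pinning down once and for all that ``$\cap H$'' means intersection with $V(H)$, and noticing that $z$ need not lie inside $V(H)$ because only the sets $N[x]\cap N[z]$ and $N[y]\cap N[z]$ enter the argument, and these are already contained in $V(H)$ by construction. Once this is made explicit, the ``simple set-theoretical argument'' promised by the author amounts to a single round of substitution; no case analysis or further properties of $G$ are required.
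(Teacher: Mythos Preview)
Your proposal is correct and matches exactly what the paper intends: the paper does not give a proof but merely states that the lemma follows ``by simple set theoretical arguments,'' and your argument is precisely such a reduction---once $N[x]$, $N[y]$, $N[x]\cap N[z]$, and (under the hypothesis of the second claim) $N[z]$ are seen to lie inside $V(H)=N[x]\cup N[y]$, intersecting with $H$ changes nothing and both assertions become tautologies. Your remark that $z$ itself need not lie in $V(H)$ is a useful clarification that the paper leaves implicit.
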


Notice that the converse of the second statement does not hold in general, since 
$N[z] \cap H \subset N[y] \cap H  = N[y]$ does not imply 
that $N[z] \subset N[y]$.
However, by symmetry, Remark \ref{rem:disp_and_conclusion}, Corollary \ref{cor:boxes}, Lemma 
\ref{lem:hammack_global_local_edgeN}   
we can conclude the next 
corollary.

\begin{cor}
If an edge $(x,y)$ of a thin strong product graph $G$ 
is \emph{indispensable} in $\langle N[x]\cup N[y] \rangle$ and therefore
Cartesian in $G$ 
then the edge-neighborhood $\langle N[x]\cup N[y] \rangle$
is a subproduct of $G$.
\label{cor:proper_edgeN}
\end{cor}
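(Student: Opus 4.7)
The plan is to reduce the claim to Corollary \ref{cor:boxes} by showing that indispensability of $(x,y)$ in the local subgraph $H = \langle N[x]\cup N[y]\rangle$ forces indispensability, and hence by thinness the Cartesian character, of $(x,y)$ in $G$. I would prove this by contraposition: assuming $(x,y)$ is dispensable in $G$, I exhibit a witness for its dispensability in $H$.

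So let $z \in V(G)$ satisfy conditions (1) and (2) of Definition \ref{def:hammack} for the edge $(x,y)$. Since $(x,y) \in E(G)$, both $x$ and $y$ lie in $N[x]\cap N[y]$; a quick inspection of each of the four clauses (1a), (1b), (2a), (2b) shows that in every case $z \in N[x]$ and $z \in N[y]$, so that $z \in V(H)$. Hence all neighborhoods involved can be intersected with $V(H)$, and $N[v]\cap V(H) = N^H[v]$ for $v\in\{x,y,z\}$. For the intersection clauses (1a) and (2a), Lemma \ref{lem:hammack_global_local_edgeN} yields an outright equivalence between the global and local formulations, so these conditions transfer to $H$ unchanged.

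The remaining and slightly more delicate case is when $z$ realizes (1b) or (2b) in $G$. Here Lemma \ref{lem:hammack_global_local_edgeN} only supplies a one-way implication, giving $N^H[x] \subset N^H[z] \subset N^H[y]$ (or the symmetric version), which is already clause (1b) (resp.~(2b)) interpreted inside $H$. To obtain the matching second condition of dispensability locally with the same witness $z$, I invoke Remark \ref{rem:disp_and_conclusion} inside $H$: the local chain inclusion automatically produces the corresponding local intersection inclusion. Thus in every case $(x,y)$ is dispensable in $H$, and the contrapositive yields that indispensability in $H$ entails indispensability in $G$.

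Since $G$ is thin, the Hammack--Imrich Cartesian skeleton $\skel(G)$ is obtained from $G$ by deleting precisely the dispensable edges, so every indispensable edge of $G$ is Cartesian; in particular $(x,y)$ is Cartesian in $G$. Corollary \ref{cor:boxes} then immediately yields that $\langle N[x]\cup N[y]\rangle$ is a subproduct of $G$. The main obstacle is the asymmetric behavior of Lemma \ref{lem:hammack_global_local_edgeN} on the chain clauses (1b) and (2b): one cannot hope to pull the strict chain condition back into $H$ in its original form, but this is repaired by first moving to $H$ with the weaker (but still sufficient) inclusion and then invoking Remark \ref{rem:disp_and_conclusion} locally.
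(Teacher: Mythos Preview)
Your proof is correct and follows essentially the same route the paper intends: it is precisely the expansion of the one-line justification ``by symmetry, Remark~\ref{rem:disp_and_conclusion}, Corollary~\ref{cor:boxes}, Lemma~\ref{lem:hammack_global_local_edgeN}'' given in the text. One small remark: your closing comment about not being able to ``pull the strict chain condition back into $H$'' is slightly misleading---Lemma~\ref{lem:hammack_global_local_edgeN} \emph{does} give the strict chain inside $H$ (since $N[x],N[y]\subseteq V(H)$ forces any witness of strictness to survive); what fails is only the converse direction, which you do not need---and your detour through Remark~\ref{rem:disp_and_conclusion} is in fact optional, because (1b) and (2b) cannot coexist, so whenever (1b) holds the accompanying condition is (2a), which transfers directly via the equivalence part of Lemma~\ref{lem:hammack_global_local_edgeN}.
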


\subsection{The S1-condition and the Backbone}

The concepts of the \Scond\ and the \emph{backbone} 
were first introduced in \cite{HIKS-09}.
The main idea of our approach is to construct the Cartesian skeleton of $G$ 
by considering PFDs of the introduced subproducts only.
The main obstacle is that
even though $G$ is thin, this is
not necessarily true for subgraphs, Fig. \ref{fig:induced_N_not_thin}. 
Hence, although the Cartesian edges are uniquely determined in $G$,
they need not to be unique in those subgraphs.
In order to investigate this issue in
some more detail, we also define $S$-classes w.r.t.\ subgraphs $H$ of a given graph $G$.

\begin{defn} 
  Let $H\subseteq G$ be an arbitrary induced subgraph of a given graph $G$. 
  Then $S_H(x)$ is defined as the set 
  \begin{equation*}
     S_H(x) = \left\{ v \in V(H) \mid   N^G[v]\cap V(H) = N^G[x]\cap V(H) \right\}.
   \end{equation*}
	If $H = \langle N^G[y] \rangle$ for some $y\in V(G)$ we set 
		$S_y(x) := S_{\langle N^G[y] \rangle}(x)$
   \label{def:S-class}
\end{defn}

In other words, $S_H(x)$ is the $S$-class that contains $x$ in the subgraph
$H$. Notice that $N[x] \subseteq N[v]$ holds for all $v
\in S_x(x)$.  If $G$ is additionally thin, then $N[x] \subsetneq N[v]$.

\begin{figure}[htbp]
  \centering
   \includegraphics[bb= 200 650 440 720]{./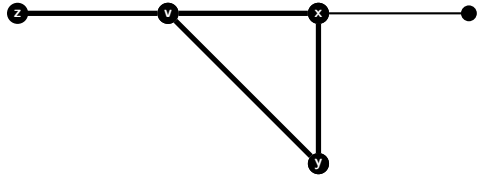}
  \caption[A thin graph with non-thin neighborhood]
			{A thin graph where  $\langle N[v] \rangle$ is not thin.
                The S-classes in $\langle N[v] \rangle$ are
                $S_v(v) = \{v\}$, $S_v(z) = \{z\}$ and
                $S_v(x) = S_v(y) = \{x,y\}$.}
  \label{fig:induced_N_not_thin}
\end{figure}

Since the Cartesian edges are globally uniquely defined in a thin graph,
the challenge is to find a way to determine enough Cartesian edges from
local information, even if $\langle N[v] \rangle$ is not thin. 
This will be captured by the \Scond\ and the \emph{backbone} of graphs.

\begin{defn}
  Given a graph $G$. An edge $(x,y) \in E(G)$
  satisfies the \Scond\ in an induced subgraph $H\subseteq G$ if 
  \begin{enumerate}[(i)]
	  \item $x,y \in V(H)$ and
	  \item $|S_H(x)| = 1$ or $|S_H(y)| = 1$.
  \end{enumerate}
\end{defn}

Note that $|S_H(x)| = 1$ for all $x\in V(H)$, if $H$ is thin.
From Lemma \ref{lem:IMKL-00-quotiengraph} we can directly infer that the
cardinality of an $S$-class in a product graph $G$ is the product
of the cardinalities of the corresponding $S$-classes in the
factors. Applying this fact to subproducts of $G$ 
immediately implies 
Corollary \ref{cor:common_cardi_of_S}.

\begin{cor} 
  Consider a strong product $G = \boxtimes_{i=1}^n G_i$ and a subproduct
  $H = \boxtimes_{i=1}^n H_i \subseteq G$. 
   Let $x \in V(H)$ be a given vertex with coordinates
  $(x_1,\dots,x_n)$. Then $S_H(x) =  \times_{i=1}^n S_{H_i}(x_i)$
	and therefore, $|S_H(x)| =  \prod_{i=1}^n |S_{H_i}(x_i)|$.
 \label{cor:common_cardi_of_S}
\end{cor}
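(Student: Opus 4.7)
My plan is to reduce the statement to an iterated application of Lemma \ref{lem:IMKL-00-quotiengraph}. The first thing to notice is that every subproduct $H = \boxtimes_{i=1}^n H_i$ is itself an induced subgraph of $G$ (because each $H_i$ is an induced subgraph of $G_i$, and adjacency in the strong product depends only on the coordinates of the endpoints). Consequently, for any $v \in V(H)$ we have $N^G[v]\cap V(H) = N^H[v]$, so $S_H(x)$ as defined in Definition \ref{def:S-class} coincides with the ordinary $S$-class of $x$ computed intrinsically inside the graph $H$.

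Once this identification is made, the statement becomes a statement purely about the strong product $H = \boxtimes_{i=1}^n H_i$: namely that $S_H(x) = \times_{i=1}^n S_{H_i}(x_i)$. I would prove this by induction on $n$. The base case $n=1$ is trivial, and $n=2$ is exactly the second assertion of Lemma \ref{lem:IMKL-00-quotiengraph}. For the induction step, write $H = (\boxtimes_{i=1}^{n-1} H_i) \boxtimes H_n$, apply Lemma \ref{lem:IMKL-00-quotiengraph} to this binary product to obtain $S_H(x) = S_{\boxtimes_{i=1}^{n-1}H_i}(x_1,\dots,x_{n-1}) \times S_{H_n}(x_n)$, and then invoke the induction hypothesis on the $(n-1)$-fold product.

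The cardinality formula $|S_H(x)| = \prod_{i=1}^n |S_{H_i}(x_i)|$ is then just the fact that the cardinality of a Cartesian product of sets is the product of the cardinalities. I do not anticipate a genuine obstacle here; the only subtlety worth flagging is the observation in the first paragraph that lets us pass from the globally-flavored definition $N^G[v]\cap V(H)$ to the intrinsic neighborhoods inside $H$, since without this reduction Lemma \ref{lem:IMKL-00-quotiengraph} would not directly apply.
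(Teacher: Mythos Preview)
Your proposal is correct and matches the paper's approach: the paper simply states that the corollary follows from Lemma~\ref{lem:IMKL-00-quotiengraph} applied to subproducts, and your write-up supplies exactly those details (the reduction to intrinsic $S$-classes via the induced-subgraph observation, followed by iterated application of the lemma). The only point you make explicit that the paper leaves implicit is the identification $N^G[v]\cap V(H)=N^H[v]$, which is a helpful clarification but not a departure in method.
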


The most important property of Cartesian edges that satisfy the \Scond\ in
some quotient graph $G/S$ is that they can be identified as 
Cartesian edges in $G$, even if $G$ is not thin. 

\begin{lem}[\cite{HIKS-09}]
  Let $G = \boxtimes_{i=1}^n G_i$ be a strong product graph  containing two S-classes
  $S_G(x)$, $S_G(y)$ that satisfy
  \begin{enumerate}[(i)]
  \item $(S_G(x),S_G(y))$ is a Cartesian edge in $G/S$ and 
  \item $|S_G(x)| = 1$ or $|S_G(y)| = 1$.
  \end{enumerate}
  Then all edges in G induced by vertices  of $S_G(x)$ and $S_G(y)$ are
  Cartesian and copies of one and the same factor.
  \label{lem:S=1condition}
\end{lem}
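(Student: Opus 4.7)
The plan is to lift the Cartesian edge from the thin quotient $G/S$ back to $G$ by using the product structure of S-classes given in Lemma \ref{lem:IMKL-00-quotiengraph} and Corollary \ref{cor:common_cardi_of_S}. First I would invoke Lemma \ref{lem:IMKL-00-quotiengraph} iteratively to obtain $G/S \simeq \boxtimes_{i=1}^n G_i/S$, together with the coordinate identity $S_G(v) = \times_{i=1}^n S_{G_i}(v_i)$ for every $v \in V(G)$. Since $G/S$ is thin, Lemma \ref{lem:uniqu-coord} guarantees that its coordinatization is unique, hence the Cartesian edge $(S_G(x),S_G(y))$ of $G/S$ belongs to exactly one factor, say $G_j/S$. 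Unpacking what this means in coordinates, $S_{G_i}(x_i) = S_{G_i}(y_i)$ for every $i\neq j$ and $(S_{G_j}(x_j), S_{G_j}(y_j))$ is an edge of $G_j/S$.

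Next I would exploit hypothesis (ii). Without loss of generality assume $|S_G(x)| = 1$. By Corollary \ref{cor:common_cardi_of_S}, $1 = |S_G(x)| = \prod_{i=1}^n |S_{G_i}(x_i)|$, so $|S_{G_i}(x_i)| = 1$ for all $i$. Combining this with the observation above that $S_{G_i}(x_i) = S_{G_i}(y_i)$ for every $i\neq j$, we obtain $|S_{G_i}(y_i)| = 1$ for all $i\neq j$ as well. Hence the product expression for $S_G(y)$ collapses to
\[
S_G(y) = \{x_1\}\times\cdots\times\{x_{j-1}\}\times S_{G_j}(y_j)\times\{x_{j+1}\}\times\cdots\times\{x_n\}.
\]

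Finally I would read off the edges. Every vertex of $S_G(x) \cup S_G(y)$ agrees with $x$ in all coordinates except possibly the $j$-th, and the vertices differ in the $j$-th coordinate precisely when they lie in different S-classes (since $S_{G_j}(y_j)\neq S_{G_j}(x_j) = \{x_j\}$). By the definition of the strong product, any edge among these vertices therefore changes only the $j$-th coordinate and so is a Cartesian edge belonging to the $G_j$-factor; in particular all such edges are copies of the same factor $G_j$.

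The routine bookkeeping is straightforward; the only delicate point is justifying that the factor index $j$ is genuinely well defined, which requires the \emph{thinness} of $G/S$ together with Lemma \ref{lem:uniqu-coord} to guarantee a unique coordinatization of $G/S$. Without this uniqueness one could not assign a single factor to the Cartesian edge in the quotient, and the conclusion that \emph{all} edges induced by $S_G(x) \cup S_G(y)$ are copies of one and the same factor would fail.
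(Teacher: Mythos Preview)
Your argument is correct. Note that the paper does not actually supply a proof of this lemma; it is quoted from \cite{HIKS-09}. Your approach---pushing the product structure of $S$-classes from Lemma~\ref{lem:IMKL-00-quotiengraph} and Corollary~\ref{cor:common_cardi_of_S} through the singleton hypothesis to force all coordinates except the $j$-th to coincide---is exactly the natural proof and is what the cited reference does.

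Two minor remarks. First, your final paragraph slightly overstates the role of thinness of $G/S$: once a concrete decomposition $G/S\simeq\boxtimes_i G_i/S$ is fixed, a Cartesian edge by definition differs in exactly one coordinate, so the index $j$ is already determined by the given coordinatization; thinness is only needed if one wants the property ``Cartesian'' to be intrinsic rather than coordinate-dependent. Second, your sentence ``the vertices differ in the $j$-th coordinate precisely when they lie in different $S$-classes'' is not quite accurate, since two distinct vertices of $S_G(y)$ also differ in the $j$-th coordinate; but your conclusion that every edge among $S_G(x)\cup S_G(y)$ changes only the $j$-th coordinate is correct and is what matters.
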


\begin{figure}[tbp]
  \centering 
  \includegraphics[bb= 170 638 459 737, scale =1.3]{./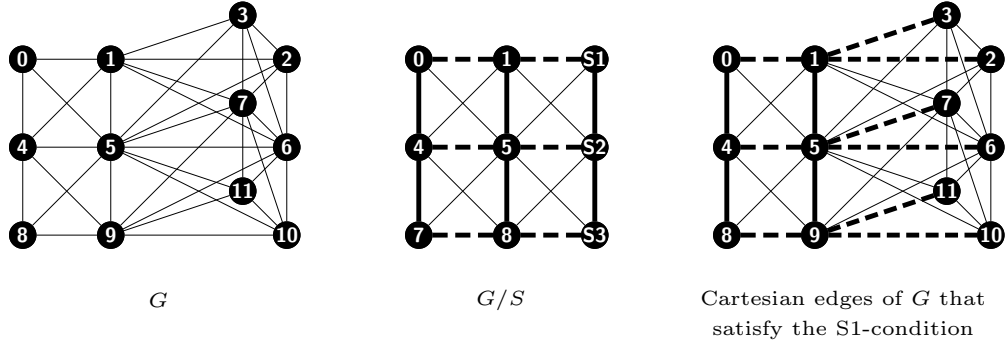}
  \caption[Determining Cartesian edges that satisfy the \Scond]{Determining Cartesian edges that satisfy the \Scond. 				Given a graph $G$, one 
           computes its quotient graph $G/S$. Since $G/S$ is thin
           the Cartesian edges of $G/S$ are uniquely
           determined. Now one factorizes $G/S$ and computes
			the prime factors of $G$. 
			Apply Lemma \ref{lem:S=1condition} to identify all Cartesian edges with respective colors
			(thick and dashed lined)  in $G$ that satisfy the \Scond. The backbone $\bone(G)$ is
			the singleton $\{5\}$.}
  \label{fig:S=1-example}
\end{figure}

\begin{figure}[tbp]
  \centering  
  \includegraphics[bb= 183 638 453 724, scale =1.3]{./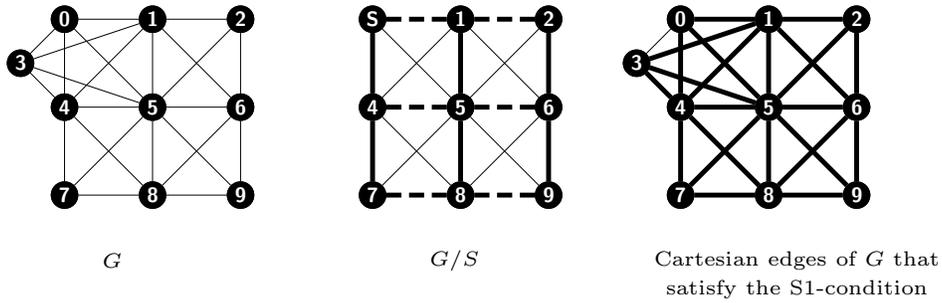}
  \caption[Determining Cartesian edges that satisfy the \Scond]{
			Determining Cartesian edges that satisfy the \Scond.
			We factorize $G/S$ and compute
			the prime factors of $G$. 
			Notice that it turns out that the factors induced
			by thick and dashed lined edges have to be merged to
			one factor. Apply now Lemma \ref{lem:S=1condition}
           to identify all Cartesian edges in $G$ that satisfy the \Scond. 
		In this case, it is clear that the edge $(0,3)$ has
			to be Cartesian as well and belongs to the single prime factor
			$G$. The backbone $\bone(G)$ is
			the singleton $\{5\}$.}
  \label{fig:S=1-example2}
\end{figure}

\begin{rem}
Whenever we find a Cartesian edge $(x,y)$ in a subproduct $H$ of $G$ 
such that one endpoint of $(x,y)$ is contained in a $S$-class of
cardinality $1$ in $H / S$, i.e., such that
$S_H(x)=\{x\}$ or $S_H(y)=\{y\}$, we can therefore conclude that all edges
in $H$ induced by vertices of $S_H(x)$ and $S_H(y)$ are
also Cartesian and are copies of one and the same factor, see 
Figure \ref{fig:S=1-example}. 

Note, even if $H / S$ has more factors than
$H $ the PFD algorithm provided by Imrich and Hammack
indicates which factors have to be merged to one factor. 
Again we can conclude that all edges 
in $H$ that satisfy the \Scond\ are
Cartesian and are copies of one and the same factor, 
see Figure \ref{fig:S=1-example2}.  

Moreover, since  $H$ is a subproduct of $G$,  
it follows that any Cartesian edge of 
 $H$ that satisfies the \Scond\ is a Cartesian edge in $G$.
    \label{rem:local_S=1condition_cardinality}
\end{rem}

We consider now a subset of $V(G)$, the so-called \emph{backbone}, 
which is essential for the algorithm.

\begin{defn}
  The \emph{backbone} of a thin graph $G$ is the vertex set 
  $$\bone(G) = \{ v \in V(G)\mid |S_v(v)|=1 \}\,.$$
  Elements of $\bone(G)$ are called \emph{backbone vertices}.
  \label{def:backbone}
\end{defn}

Clearly, the backbone $\bone(G)$ and the \Scond\ are closely related, since 
all  edges $(x,y)$ that contain a backbone vertex, say $x$, 
satisfy the \Scond\ in $\la N[x]\ra$.
If the  backbone $\bone(G)$ of a given graph $G$ is nonempty then 
Corollary \ref{cor:common_cardi_of_S} implies that
 no factor of $G$ is isomorphic to a complete graph, otherwise 
we would have $|S_v(v)|>1$ for all $v\in V(G)$. 
The last observations lead directly to the next corollary.

\begin{cor}
Given a graph $G$ with nonempty backbone $\bone(G)$ then
for all  $v\in \bone(G)$ holds:
all edges $(v,x)\in E(\la N[v]\ra)$ satisfy the \Scond\ in $N[v]$. 
\label{cor:s1}
\end{cor}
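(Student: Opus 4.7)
The plan is to unfold the definitions and observe that the claim follows immediately. First, by Definition of the backbone, any $v \in \bone(G)$ satisfies $|S_v(v)| = 1$, where by the convention set up in Definition \ref{def:S-class} the shorthand $S_v(v)$ means $S_{\langle N^G[v]\rangle}(v)$. So the singleton property of the $S$-class of $v$ in $\langle N[v] \rangle$ is built into the assumption $v\in\bone(G)$.

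Next I would verify the two parts of the \Scond\ for an arbitrary edge $(v,x)\in E(\langle N[v]\rangle)$ with respect to the induced subgraph $H=\langle N[v]\rangle$. Condition (i) is immediate: since $(v,x)$ is an edge of the \emph{induced} subgraph $\langle N[v]\rangle$, both endpoints lie in $V(\langle N[v]\rangle) = N[v]$. Condition (ii) asks that at least one of $|S_H(v)|$ or $|S_H(x)|$ equals $1$; but $|S_H(v)| = |S_v(v)| = 1$ by the hypothesis $v\in\bone(G)$, so (ii) holds through its first disjunct for every such edge.

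The main (and essentially only) thing to be careful about is matching notation: the definition of $\bone(G)$ uses the self-indexed shorthand $S_v(v) := S_{\langle N^G[v]\rangle}(v)$, and the \Scond\ as stated references $S_H$ for an induced subgraph $H$; taking $H=\langle N^G[v]\rangle$ makes these coincide. No genuine combinatorial obstacle arises — the corollary is a direct reading of Definition \ref{def:backbone} through the lens of the \Scond, which is exactly the observation flagged in the paragraph preceding the corollary.
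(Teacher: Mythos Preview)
Your proposal is correct and matches the paper's approach exactly: the paper does not give a formal proof but states in the preceding paragraph that this follows directly from the definitions, since $v\in\bone(G)$ means $|S_v(v)|=1$, which is precisely condition (ii) of the \Scond\ for any edge $(v,x)$ in $\langle N[v]\rangle$. Your careful unfolding of the notation is exactly what is needed here.
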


The set of backbone vertices of thin graphs can be characterized
as follows.

\begin{lem}[\cite{HIKS-09}]
  Let $G$ be a thin graph and $v$ an arbitrary vertex of $G$.
  Then $v\in \bone(G)$ if and only if $N[v]$ is a strictly maximal
  neighborhood in $G$.
  \label{lem:backbone_strictMax_vertices}
\end{lem}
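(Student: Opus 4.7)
The plan is to unpack the two definitions and show the equivalence follows almost directly, with thinness entering only in the backward direction. First I would rewrite $S_v(v)$ explicitly: by Definition \ref{def:S-class} with $H = \langle N[v]\rangle$, we have
\[
S_v(v) = \{\, u \in N[v] \mid N^G[u] \cap N[v] = N^G[v] \cap N[v] = N[v]\,\} = \{\, u \in N[v] \mid N[v] \subseteq N[u]\,\}.
\]
So $v \in \bone(G)$ means that $v$ is the only vertex $u$ (of $G$) satisfying both $u \in N[v]$ and $N[v] \subseteq N[u]$.

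For the forward direction ($\Rightarrow$), I would argue by contrapositive. Suppose $N[v]$ is not strictly maximal, i.e., there exists $u \in V(G)$ with $N[v] \subsetneq N[u]$. Since $v \in N[v] \subseteq N[u]$ we automatically get $u \in N[v]$ (either $u = v$ or $u$ is adjacent to $v$), but the strict inclusion forces $u \neq v$. This places $u$ in $S_v(v)$ as a second element, so $|S_v(v)| \geq 2$ and $v \notin \bone(G)$.

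For the backward direction ($\Leftarrow$), suppose $N[v]$ is strictly maximal. Take any $u \in S_v(v)$; by the rewriting above, $N[v] \subseteq N[u]$. Strict maximality forces $N[v] = N[u]$, and then thinness of $G$ (no two vertices share a closed neighborhood) yields $u = v$. Hence $S_v(v) = \{v\}$ and $v \in \bone(G)$.

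There is no real obstacle here beyond correctly reading the definitions; the one spot that needs care is verifying that a hypothetical witness $u$ of non-maximality automatically lies in $N[v]$ (so that it is eligible to sit in the $S$-class taken inside $\langle N[v]\rangle$), and noting where thinness is actually used — exclusively in the backward direction, to upgrade $N[v] = N[u]$ to $u = v$.
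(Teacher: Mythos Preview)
Your proof is correct. The paper does not include its own proof of this lemma (it is cited from \cite{HIKS-09}), so there is nothing to compare against directly; however, your argument is exactly the one the paper implicitly relies on elsewhere---see the proof of Lemma~\ref{lem:s1_in_H}, where the same chain ``$y\in S_x(x)$, $y\neq x$ $\Rightarrow$ $N[x]\subseteq N[y]$ $\Rightarrow$ contradiction to strict maximality'' is used.
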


As shown in \cite{HIKS-09} the backbone B(G) of thin graphs G is a connected dominating set.
This allows us to cover the entire graph by $1$-neighborhoods of the backbone vertices only. 
Moreover, it was shown 
that it suffices to exclusively use information about the $1$-neighborhood of backbone vertices, 
to find all Cartesian edges that satisfy the \Scond\ in arbitrary $1$-neighborhoods, 
even those edges $(x,y)$ with $x, y \notin \bone(G)$. These results are summarized
in the next theorem. 

\begin{thm}[\cite{HIKS-09}]
  Let $G$ be a thin graph. Then the backbone $\bone(G)$ is
  a connected dominating set for $G$.
	\newline
  All Cartesian edges that satisfy the \Scond\ in an arbitrary induced
  1-neighborhood also satisfy the \Scond\ in the induced 1-neighborhood of a
  vertex of the backbone $\bone(G)$.
    \label{thm:backbone_coverable}
  \label{thm:all_GIX-edges-S=1}
\end{thm}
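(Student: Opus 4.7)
My plan is to treat the two assertions separately, using throughout the characterization from Lemma \ref{lem:backbone_strictMax_vertices} that identifies $\bone(G)$ with the set of vertices whose closed neighborhoods are strictly maximal under inclusion. For the first assertion I would define a \emph{retraction} $f: V(G) \to \bone(G)$ by the following ascent: if $v \in \bone(G)$ set $f(v) = v$; otherwise $N[v]$ is not strictly maximal, so some $w$ satisfies $N[v] \subsetneq N[w]$, and by iteration (and finiteness of $V(G)$) the chain terminates at a backbone vertex $f(v)$ with $N[v] \subseteq N[f(v)]$. Since $v \in N[f(v)]$, $v$ is adjacent to or equal to $f(v)$, proving that $\bone(G)$ is dominating.

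For connectedness, the key observation is that $f$ sends $G$-edges to $\langle \bone(G)\rangle$-edges (possibly collapsed to vertices): given $(u,v) \in E(G)$, the inclusion $v \in N[u] \subseteq N[f(u)]$ gives $f(u) \in N[v] \subseteq N[f(v)]$, so $f(u)$ and $f(v)$ are adjacent or equal in $G$. Hence any $G$-path from one backbone vertex $a$ to another backbone vertex $b$ (which exists as $G$ is connected) projects through $f$ to a walk in $\langle \bone(G) \rangle$ from $f(a) = a$ to $f(b) = b$.

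For the S1-lifting, let $(x, y)$ be a Cartesian edge satisfying the \Scond\ in $\langle N[v] \rangle$ and assume without loss of generality $|S_v(x)| = 1$. By Lemma \ref{lem:N_subproduct} the 1-neighborhood is the subproduct $\langle N[v] \rangle = \boxtimes_i \langle N[v_i] \rangle$ with respect to the global PFD $G = \boxtimes_i G_i$, and Corollary \ref{cor:common_cardi_of_S} gives
\[
    1 \;=\; |S_v(x)| \;=\; \prod_{i=1}^n |S_{N[v_i]}(x_i)|,
\]
so $|S_{N[v_i]}(x_i)| = 1$ in every factor. Let $b = f(v)$ with coordinates $(b_1, \dots, b_n)$ be the ascended backbone vertex. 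From $N[v] \subseteq N[b]$ one reads off $N[v_i] \subseteq N[b_i]$ coordinate-wise, and $b \in \bone(G)$ forces $b_i \in \bone(G_i)$ in each factor. Condition (i) of the \Scond\ is immediate since $x, y \in N[v] \subseteq N[b]$.

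The main obstacle is to verify condition (ii) --- that $|S_{N[b_i]}(x_i)| = 1$ for every $i$ --- which by Corollary \ref{cor:common_cardi_of_S} then yields $|S_b(x)| = 1$ and closes the argument. A vertex $u \in S_{N[b_i]}(x_i) \cap N[v_i]$ is harmless: intersecting the defining identity $N^{G_i}[u] \cap N[b_i] = N^{G_i}[x_i] \cap N[b_i]$ with $N[v_i] \subseteq N[b_i]$ forces $u \in S_{N[v_i]}(x_i) = \{x_i\}$. The delicate case is a ``new'' vertex $u \in N[b_i] \setminus N[v_i]$ with the same $N[b_i]$-intersection as $x_i$; excluding such a $u$ is where I expect the bulk of the combinatorial work, requiring the strict maximality of $N[b_i]$ (so that $N[b_i] \not\subseteq N[u]$), the thinness of $G_i$, and the Cartesian nature of $(x, y)$ in $G$.
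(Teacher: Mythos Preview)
The paper does not prove this theorem; it is quoted from \cite{HIKS-09} without argument, so there is no in-paper proof to compare against. Your approach to the first assertion (the retraction $f$ along strict neighborhood inclusions, and the observation that $f$ sends $G$-edges to $\langle\bone(G)\rangle$-edges or vertices) is correct and matches the standard proof.

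For the second assertion your framework is right, but you stopped just before a one-line observation that dissolves the ``delicate case'' entirely. You have $x_i \in N[v_i]$, hence $v_i \in N^{G_i}[x_i]$; and $v_i \in N[v_i] \subseteq N[b_i]$. Thus $v_i \in N^{G_i}[x_i] \cap N[b_i]$. Now if $u \in S_{N[b_i]}(x_i)$, then by definition $N^{G_i}[u] \cap N[b_i] = N^{G_i}[x_i] \cap N[b_i] \ni v_i$, so $v_i \in N^{G_i}[u]$, i.e.\ $u \in N[v_i]$. Hence every element of $S_{N[b_i]}(x_i)$ already lies in $N[v_i]$, and your ``harmless'' case finishes the proof. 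The case $u \in N[b_i] \setminus N[v_i]$ is vacuous.

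Two further remarks. First, none of the ingredients you anticipated needing---strict maximality of $N[b_i]$, thinness of $G_i$, or the Cartesian nature of $(x,y)$---are actually used in this step; the only fact consumed is $N[v_i] \subseteq N[b_i]$. Second, the coordinate-wise decomposition via Corollary~\ref{cor:common_cardi_of_S} is unnecessary: the identical argument (with $v$, $x$, $b$ in place of $v_i$, $x_i$, $b_i$) shows directly that $|S_v(x)|=1$ and $N[v]\subseteq N[b]$ imply $|S_b(x)|=1$, without ever invoking the product structure.
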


Consider now the subproducts $\la N[x] \ra$,  $ N^*_{x,y}$
and $\la N[x] \cup N[y] \ra$ of a thin graph $G$.
We will show in the following that the set of Cartesian edges of these subproducts 
that satisfy the \Scond, induce a connected subgraph in the respective subproducts. 
This holds even if $\la N[x] \ra$,  $ N^*_{x,y}$
and $\la N[x] \cup N[y] \ra$ are not thin. For this we need the next lemmas.

\begin{lem}
Let G be a given thin graph, $x\in \bone(G)$ and $H\subseteq G$ be an arbitrary induced subgraph 
such that $N[x]\subseteq V(H)$.
Then $|S_H(x)| = 1$ and $x\in \bone(H)$.
\label{lem:s1_in_H}
\end{lem}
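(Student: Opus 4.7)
The plan is to prove both conclusions by the same basic observation: the hypothesis $N^G[x] \subseteq V(H)$ forces $N^H[x] = N^G[x]$, so neighborhood-equality or neighborhood-inclusion data inside $H$ can be lifted back to $G$, where the assumption $x \in \bone(G)$ rules out any non-trivial candidates.

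First I would handle $|S_H(x)| = 1$. Pick an arbitrary $v \in S_H(x)$, so by Definition 3.8 we have $N^G[v] \cap V(H) = N^G[x] \cap V(H)$, and the latter equals $N^G[x]$ by the hypothesis $N^G[x] \subseteq V(H)$. From this I extract two facts: (a) $v \in N^G[v] \cap V(H) = N^G[x]$, so $v$ is $G$-adjacent to $x$ (or $v=x$); and (b) $N^G[x] = N^G[v] \cap V(H) \subseteq N^G[v]$. Combining (a) and (b), $v$ lies in $N^G[x]$ and satisfies $N^G[x] \subseteq N^G[v]$, which (using the definition of $S_x(x)$) means $v \in S_x(x)$. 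But $x \in \bone(G)$ says $|S_x(x)| = 1$, so $v = x$.

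For the second conclusion $x \in \bone(H)$, I would unwind the definition: I need $|S_x^H(x)| = 1$, where the ambient graph for the backbone computation is now $H$. The key observation is that because $N^G[x] \subseteq V(H)$ and $H$ is an induced subgraph of $G$, we have $N^H[x] = N^G[x]$. So for any $u \in S_x^H(x)$ one has $u \in N^H[x] = N^G[x]$ and $N^G[x] = N^H[x] \subseteq N^H[u] \subseteq N^G[u]$. This places $u$ in $S_x^G(x)$, which is the singleton $\{x\}$ by $x \in \bone(G)$. Hence $u = x$ and $x \in \bone(H)$.

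The proof is essentially a direct unwinding; the only point requiring care is the ambient-graph bookkeeping: $S_H(x)$ in Definition 3.8 is defined using $G$-neighborhoods restricted to $V(H)$, whereas $\bone(H)$ refers to $H$-neighborhoods. The hypothesis $N^G[x] \subseteq V(H)$ is precisely what makes these two viewpoints coincide at $x$, so there is no genuine obstacle beyond keeping the superscripts $G$ and $H$ straight. I would mention (but not need) Lemma 3.11 as an alternative route through ``strictly maximal neighborhoods'', noting that it does not apply directly to $H$ since $H$ need not be thin, which is why the direct definitional argument is preferable.
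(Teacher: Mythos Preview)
Your proof is correct. The paper takes a slightly different route: it invokes Lemma~\ref{lem:backbone_strictMax_vertices} to translate $x\in\bone(G)$ into ``$N[x]$ is strictly maximal in $G$'', observes that strict maximality of $N[x]$ is inherited by $H$ (since $N^G[x]\subseteq V(H)$ and $H\subseteq G$), and then reads off both conclusions from strict maximality in $H$. You instead work directly with the definition of $\bone(G)$, showing the containments $S_H(x)\subseteq S_x^G(x)$ and $S_x^H(x)\subseteq S_x^G(x)$ and using $|S_x^G(x)|=1$. The two arguments are logically equivalent --- strict maximality of $N[x]$ is just a rephrasing of $|S_x(x)|=1$ --- but your version is marginally more self-contained since it avoids the detour through Lemma~\ref{lem:backbone_strictMax_vertices}; the paper's version has the advantage of isolating ``strict maximality in $H$'' as a single reusable fact from which both conclusions follow at once. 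Your closing remark that Lemma~\ref{lem:backbone_strictMax_vertices} does not apply directly to $H$ because $H$ need not be thin is well-observed; the paper does not apply it to $H$ either, only to $G$.
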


\begin{proof}
	First notice that Lemma \ref{lem:backbone_strictMax_vertices}
	and $x\in \bone(G)$ implies that $\langle N[x] \rangle$ 
	is strictly maximal in $G$. Since $\langle N[x] \rangle \subseteq H \subseteq G$ 
	we can conclude that $\langle N[x] \rangle$ is strictly maximal
	in $H$. Hence, it holds $|S_H(x)|=1$. Moreover, it holds $|S_x(x)|=1$, 
	otherwise there would be a vertex $y\in S_x(x),\ y\neq x$ and therefore, 
	$N[x]\subseteq N[y]$. This contradicts
    that $\langle N[x] \rangle$ is strictly maximal in $H$.
	Hence,  $x\in \bone(H)$. 
\end{proof}

\begin{lem}
Let $H=\boxtimes_{i=1}^n H_i$ be an arbitrary connected (not necessarily thin) graph and 
$(x,y)\in E(H)$ such that $|S_H(x)|=|S_H(y)|=1$. 
Then there is a path $\mathcal{P}_{x,y}$ from $x$ to $y$ consisting of Cartesian
edges $(u,w)$ only with 
$|S_H(u)|=|S_H(w)|=1$. 
\label{lem:CartPath-via-S1}
\end{lem}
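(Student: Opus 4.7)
The plan is to reduce the statement to a coordinate-wise assertion via Corollary \ref{cor:common_cardi_of_S} and then to construct the required path by flipping one coordinate at a time. First I would fix the product representation $H=\boxtimes_{i=1}^n H_i$ and write $x=(x_1,\dots,x_n)$ and $y=(y_1,\dots,y_n)$. By Corollary \ref{cor:common_cardi_of_S} the hypothesis $|S_H(x)|=|S_H(y)|=1$ is equivalent to $|S_{H_i}(x_i)|=|S_{H_i}(y_i)|=1$ for every $i\in\{1,\dots,n\}$.

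Next I would let $I=\{i \mid x_i\neq y_i\}$ denote the set of coordinates in which $x$ and $y$ differ. Because $(x,y)\in E(H)$, the adjacency conditions in the strong product imply $x_i=y_i$ for $i\notin I$ and $(x_i,y_i)\in E(H_i)$ for $i\in I$. Fixing any enumeration $i_1,\dots,i_k$ of $I$, I would build the sequence of vertices $z_0,z_1,\dots,z_k$ defined by
\[
  (z_j)_{i_\ell} = \begin{cases} y_{i_\ell} & \text{if } \ell\leq j,\\ x_{i_\ell} & \text{if } \ell>j,\end{cases}
  \qquad (z_j)_i = x_i \text{ for } i\notin I.
\]
Then $z_0=x$ and $z_k=y$, and consecutive vertices $z_j$ and $z_{j+1}$ differ in exactly the single coordinate $i_{j+1}$, with $((z_j)_{i_{j+1}},(z_{j+1})_{i_{j+1}})=(x_{i_{j+1}},y_{i_{j+1}})\in E(H_{i_{j+1}})$. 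Hence each edge $(z_j,z_{j+1})$ is a Cartesian edge of $H$, and $\mathcal{P}_{x,y}:=z_0,z_1,\dots,z_k$ is a path from $x$ to $y$ consisting only of Cartesian edges.

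Finally I would verify that every intermediate vertex meets the $|S_H|=1$ condition. By construction each coordinate $(z_j)_i$ equals either $x_i$ or $y_i$, and from the first step we know $|S_{H_i}(x_i)|=|S_{H_i}(y_i)|=1$ for every $i$. Applying Corollary \ref{cor:common_cardi_of_S} a second time gives
\[
  |S_H(z_j)| = \prod_{i=1}^n |S_{H_i}((z_j)_i)| = 1,
\]
as required.

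There is no real obstacle in the argument: the coordinate-wise decomposition of $S$-classes provided by Corollary \ref{cor:common_cardi_of_S} does all the work, and the only thing one must be careful about is the bookkeeping of the coordinate-swap path, which stays within the ``good'' set of vertices because any choice of one of $\{x_i,y_i\}$ in each factor keeps every factor-wise $S$-class trivial.
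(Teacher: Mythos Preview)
Your proof is correct and follows essentially the same approach as the paper: both arguments use Corollary \ref{cor:common_cardi_of_S} to reduce to factor-wise trivial $S$-classes, construct the path by flipping one differing coordinate at a time, and then reapply the corollary to verify that every intermediate vertex has a trivial $S$-class. Your write-up is slightly more explicit in the bookkeeping (working with an arbitrary enumeration of the index set $I$ rather than assuming the differing coordinates are the first $k$), but the underlying idea is identical.
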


\begin{proof}
Let $(x,y)$ be an arbitrary edge of $H$ with $|S_H(x)|=|S_H(y)|=1$. 
From Corollary \ref{cor:common_cardi_of_S} we can conclude
that $|S_{H_i}(x_i)| = 1$ and $|S_{H_i}(y_i)| = 1$ for all $i=1,\dots,n$.
If $(x,y)$ is Cartesian there is nothing to show.
Thus, assume $(x,y)$ is a non-Cartesian edge. Hence, 
the coordinates of $x = (x_1, \dots, x_n)$ and 
$y=(y_1, \dots y_n)$ differ in more than
one position. 
W.l.o.g we assume that $x$ and $y$ differ in
the first positions $1,\dots,k$. 
Hence $(x_i,y_i)\in E(G_i)$ for all $i=1,\dots,k$ and
$x_i=y_i$ for all $i=k+1,\dots,n$.
Therefore, one can construct a path $\mathcal{P}_{x,y}$ with edge set 
$\{(y,v^1),(v^1,v^2),\dots,(v^{k-1},x)\}$ 
such that the vertices $v^j$ have respective
coordinates $(x_1,x_2,\dots,x_j,y_{j+1},\dots,y_n)$, 
$j=1,\dots,k-1$. Since all edges have endpoints
differing in exactly one coordinate, all  
edges in $\mathcal{P}_{x,y}$ are Cartesian.  
Corollary \ref{cor:common_cardi_of_S} implies that 
for all those vertices hold $|S_H(v^j)|=1$ and 
hence in particular for all edges 
$(u,w) \in \mathcal{P}_{x,y}$
hold $|S_{H}(u)|=1$ and $|S_{H}(w)|=1$. 
\end{proof}

\begin{lem}[\cite{HIKS-09}]
    Let $G$ be a thin, connected simple graph
    and $v \in V(G)$ with $|S_v(v)| > 1$.
    Then there exists a vertex $y \in S_v(v)$
    s.t. $|S_y(y)| = 1$.
    \label{lem:S=1condition2}
\end{lem}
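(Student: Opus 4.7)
The plan is to exploit Lemma \ref{lem:backbone_strictMax_vertices}, which tells us that $|S_y(y)|=1$ is equivalent to $N[y]$ being a strictly maximal closed neighborhood in $G$. I will choose $y$ to be a vertex in $S_v(v)$ whose neighborhood is set-inclusion maximal within the finite family $\{N[w]\mid w\in S_v(v)\}$, and argue that any such $y$ is necessarily different from $v$ and automatically satisfies $|S_y(y)|=1$.

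First I would unfold Definition \ref{def:S-class} for $H=\langle N[v]\rangle$: the condition $w\in S_v(v)$ becomes $w\in N[v]$ together with $N[w]\cap N[v]=N[v]$, i.e.\ $N[v]\subseteq N[w]$. Thinness of $G$ upgrades this to $N[v]\subsetneq N[w]$ whenever $w\neq v$. Since $|S_v(v)|>1$ by hypothesis, there is at least one vertex $w\in S_v(v)\setminus\{v\}$, and this already shows that any maximal element $N[y]$ of $\{N[w]\mid w\in S_v(v)\}$ strictly contains $N[v]$; in particular $y\neq v$. This is the one place the assumption $|S_v(v)|>1$ is used.

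Next, assume for contradiction that $|S_y(y)|>1$. By the same unpacking, combined with thinness, there exists $z\in S_y(y)$ with $z\neq y$ and $N[y]\subsetneq N[z]$. I now claim $z\in S_v(v)$, which will contradict the maximality of $N[y]$. Indeed, $v\in N[v]\subseteq N[y]\subseteq N[z]$ forces $z\in N[v]$, and chaining the inclusions $N[v]\subseteq N[y]\subseteq N[z]$ yields $N[v]\subseteq N[z]$, so $z\in S_v(v)$. But $N[z]\supsetneq N[y]$ then contradicts the maximality of $N[y]$ in $\{N[w]\mid w\in S_v(v)\}$, so $|S_y(y)|=1$ as required.

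The argument is essentially a single maximality-and-contradiction step; the only subtle point is guaranteeing $y\neq v$ (otherwise $S_y(y)=S_v(v)$ would immediately violate the conclusion), and this is exactly what the hypothesis $|S_v(v)|>1$ delivers via thinness.
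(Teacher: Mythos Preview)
Your argument is correct. The paper does not actually supply a proof of this lemma; it is quoted from \cite{HIKS-09} and used as a black box. Your maximality-and-contradiction approach---pick $y\in S_v(v)$ with $N[y]$ inclusion-maximal in $\{N[w]\mid w\in S_v(v)\}$, then observe that any $z\in S_y(y)\setminus\{y\}$ would have $N[v]\subseteq N[y]\subsetneq N[z]$ and hence lie in $S_v(v)$, contradicting maximality---is exactly the natural proof and matches the spirit of Lemma~\ref{lem:backbone_strictMax_vertices}.

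One small remark: your emphasis on securing $y\neq v$ as ``the only subtle point'' is slightly overstated. Your contradiction step already establishes $|S_y(y)|=1$ for the maximal $y$ without ever using $y\neq v$; the fact that $y\neq v$ is then a \emph{consequence} (since $|S_v(v)|>1$), not a prerequisite. The hypothesis $|S_v(v)|>1$ is really only there to make the statement non-vacuous. This does not affect the validity of your proof.
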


\begin{lem}
Let G be a given thin graph, $x,y\in \bone(G)$ and let $H\subseteq G$ 
denote  one of the subproducts $\la N[x] \ra$,  $ N^*_{x,y} $
or $\la N[x] \cup N[y] \ra$. In the latter case we assume
that the edge $(x,y)$ is Cartesian in $H$.
Then the set of all Cartesian edges of $H$ that satisfy the \Scond\ in $H$
induce a connected subgraph of $H$.
\label{cor:connectCartSk-via-S1}
\label{lem:connectCartSk-via-S1}
\end{lem}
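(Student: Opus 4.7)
The plan is to identify the thin-vertex set $V^* := \{v \in V(H) : |S_H(v)| = 1\}$, observe that every Cartesian S1-edge of $H$ has at least one endpoint in $V^*$, and then show that every such edge can be connected, via a chain of Cartesian S1-edges, to an edge incident to $x$. First, I would verify that $x$ lies in $\bone(H)$. In case 1 this is immediate from Lemma \ref{lem:s1_in_H} since $N^G[x] \subseteq V(H)$ by definition; in cases 2 and 3 the fact that $(x,y) \in E(G)$ gives $x, y \in N^G[x] \cap N^G[y]$ and hence $N^G[x] \cup N^G[y] \subseteq V(H)$, so Lemma \ref{lem:s1_in_H} applies and additionally yields $y \in \bone(H)$ in those cases.

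Writing $H = H_1 \boxtimes \cdots \boxtimes H_n$ as a subproduct (by Lemma \ref{lem:N_subproduct} for case 1, Lemma \ref{lem:H_union_N_is_product} for case 2, and Corollary \ref{cor:boxes} for case 3), Corollary \ref{cor:common_cardi_of_S} gives $V^* = V^*_1 \times \cdots \times V^*_n$, with $V^*_i := \{v_i \in V(H_i) : |S_{H_i}(v_i)| = 1\}$ and $x_i \in V^*_i$. Fix an arbitrary Cartesian S1-edge $(u,w)$ of $H$ and assume without loss of generality $u \in V^*$. Following the strategy of Lemma \ref{lem:CartPath-via-S1}, I would build a Cartesian path from $u$ to $x$ in $H$ by changing one coordinate at a time: in each phase $i$ we transit from $u_i$ to $x_i$ through a path in the factor $H_i$, while keeping every other coordinate fixed at a value already in $V^*_j$. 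Along such a phase, a traversed Cartesian edge $(v, v')$ is S1-Cartesian provided that at least one of $v_i, v'_i$ lies in $V^*_i$.

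The whole problem thus reduces to the following property, which must be established for each factor $H_i$: for any two vertices $a, b \in V^*_i$, there exists a path $a = c_0, c_1, \ldots, c_k = b$ in $H_i$ such that no two consecutive $c_j, c_{j+1}$ are both outside $V^*_i$. In cases 1 and 3 this is essentially immediate because every factor has a ``star-like'' structure: every vertex of $\la N^{G_i}[x_i] \ra$ is adjacent to (or equal to) $x_i \in V^*_i$, and every vertex of the edge-neighborhood $\la N^{G_1}[x_1] \cup N^{G_1}[y_1]\ra$ is adjacent to (or equal to) $x_1$ or $y_1$, both of which lie in $V^*_1$ and are themselves adjacent since $(x_1,y_1) \in E(G_1)$.

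The main obstacle is case 2, where $H_i = \la \bigcup_{z_i \in V'_i} N^{G_i}[z_i] \ra$ with $V'_i = N^{G_i}[x_i] \cap N^{G_i}[y_i]$ may contain thin vertices at distance $2$ from $x_i$, so the star argument collapses. Here I would route a path $v_i \to z_i \to x_i$ through some $z_i \in V'_i$ adjacent to both endpoints, handling the case $z_i \notin V^*_i$ by invoking Lemma \ref{lem:S=1condition2} applied to the thin factor $G_i$ to replace $z_i$ with a suitable thin twin $z'_i \in S_{H_i}(z_i)$ that is still adjacent to the relevant vertices. Carrying out this twin-substitution argument carefully---and verifying that it produces the required alternating ``at least one endpoint in $V^*_i$'' property along the entire path---will be the technical heart of the proof.
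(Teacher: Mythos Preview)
Your factor-wise reduction is correct and mirrors the paper's argument, just organised differently: the paper does not decompose into factors but observes directly that any $a\in V^*$ is adjacent in $H$ to some vertex of $V^*$ (namely $x$ in case~1, $x$ or $y$ in case~3, and in case~2 either $x$, $y$, or a backbone vertex $z\in N[x]\cap N[y]$ obtained via Lemma~\ref{lem:S=1condition2}), and then applies Lemma~\ref{lem:CartPath-via-S1} once to that single edge. Your phase-by-phase construction unpacks exactly the same Cartesian path coordinate by coordinate.

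Two remarks on your treatment of the $N^*$ case. First, the twin-substitution is unnecessary: in your path $u_i\to z_i\to x_i$ you already have $u_i,x_i\in V^*_i$, so neither of the two consecutive pairs has both endpoints outside $V^*_i$, regardless of whether $z_i\in V^*_i$. Your own ``no two consecutive outside $V^*_i$'' criterion is therefore satisfied immediately. Second, as literally written the substitution cannot work: if $z_i\notin V^*_i$ then every element of $S_{H_i}(z_i)$ has the same $S$-class (namely $S_{H_i}(z_i)$ itself), so none of them lies in $V^*_i$. What the paper actually does---and what you presumably intend---is to pick $z'$ in $S_{z}(z)$ taken in the $1$-neighborhood within $G$, not in the $S$-class within $H$; this gives $N[z]\subseteq N[z']$, hence $N[z']\subseteq V(H)$, and then Lemma~\ref{lem:s1_in_H} yields $z'\in V^*$. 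But since the substitution is not needed for your argument, you can simply drop it.
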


\begin{proof}
 First, let $H=\la N[x] \ra$. Clearly, it holds  $|S_H(x)| = 1$. 
 Let $(a,b)$ be an arbitrary edge that satisfy the \Scond\ in $H$.
 W.l.o.g. we assume that $|S_H(a)| = 1$. If $(a,x)$ is Cartesian there is 
 nothing to show and if $(a,x)$ is non-Cartesian one can construct
 a path $\mathcal{P}_{x,a}$  as shown in Lemma \ref{lem:CartPath-via-S1}. 

 Second, let $H=\la N[x] \cup N[y] \ra$. Lemma \ref{lem:s1_in_H} implies that
 $|S_H(x)|=|S_H(y)|=1$. Let $(a,b)$ be an arbitrary edge that satisfy the \Scond\ in $H$.
  W.l.o.g. we assume that $|S_H(a)| = 1$. Moreover, let $a\in N[x]$. 
  If  $(a,x)$ is Cartesian there is nothing to show and if $(a,x)$ 
 is non-Cartesian one can construct a path $\mathcal{P}_{x,a}$  as shown in Lemma \ref{lem:CartPath-via-S1}.
 Analogously, one shows that such paths $\mathcal{P}_{y,a}$ can be constructed if  $a\in N[y]$.
 Therefore, all Cartesian edges are connected to $x$ or $y$ via paths consisting
 of Cartesian edges only that satisfy the \Scond. 
 Furthermore $(x,y)$ is Cartesian and thus, the assertion follows for $H=\la N[x] \cup N[y] \ra$. 

 Third, let $H= N^*_{x,y}$. Lemma \ref{lem:s1_in_H} implies that
 $|S_H(x)|=|S_H(y)|=1$. Therefore, one can construct a path $\mathcal{P}_{x,y}$
 as shown in Lemma \ref{lem:CartPath-via-S1}, since $(x,y)\in E(G)$. 
 Let $(a,b)$ be an arbitrary edge that satisfy the \Scond\ in $H$.
 W.l.o.g. we assume that $|S_H(a)| = 1$. If $a\in N[x]$ or $a\in N[y]$ one can show
 by similar arguments as in the latter case that there is a path $\mathcal{P}_{x,a}$, resp.,
 $\mathcal{P}_{y,a}$ consisting of Cartesian edges only that satisfy the \Scond. 
 Assume $a\notin N[x]$ and $a\notin N[y]$. Then there is a vertex $v \in N[x]\cap N[y]$
 such that $a\in N[v]$.
 If $v\in \bone(G)$ then Lemma \ref{lem:s1_in_H} implies 
 that $|S_H(v)| = 1$, since $N[v]\subseteq V(H)$ and one 
 construct a path $\mathcal{P}_{a,v}$ and $\mathcal{P}_{v,x}$ as in Lemma \ref{lem:CartPath-via-S1}.
 Now assume $v\notin \bone(G)$. Theorem \ref{thm:all_GIX-edges-S=1} implies that
 there is a vertex $z\in \bone(G)$ such that $z\in N[v]$. Moreover, as stated in 
 Lemma \ref{lem:S=1condition2}, there exists even a vertex $z\in \bone(G)$ such that $z\in S_v(v)$ and therefore
 $N[v]\cap N[z]=N[v]$. Thus it holds that $a,x,y\in N[z]$ and hence, $N[z]\subseteq H$. Therefore, 
 Lemma \ref{lem:s1_in_H} implies that $|S_H(z)|=1$. Analogously as in Lemma \ref{lem:CartPath-via-S1}, 
one can construct a path $\mathcal{P}_{a,z}$ and $\mathcal{P}_{z,x}$, as well as a path $\mathcal{P}_{z,y}$
consisting of Cartesian edges only that satisfy the \Scond. 
\end{proof}

Last, we state two lemmas for later usage. Note,  
the second lemma refines  the already known results of  \cite{HIKS-09}, where 
analogous results were stated for $2$-neighborhoods.

\begin{lem}[\cite{HIKS-09}]
  Let $(x,y)\in E(G)$ be an arbitrary edge in a thin graph $G$
  such that $|S_x(x)| > 1$.
  Then there exists a vertex $z\in \bone(G)$
  s.t.\ $z\in N[x]\cap N[y]$.
  \label{lem:edges_dont_sat_Scond}
\end{lem}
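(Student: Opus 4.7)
The plan is to use Lemma \ref{lem:S=1condition2} applied to the $S$-class $S_x(x)$ in $\langle N[x] \rangle$, and then verify that the vertex supplied by that lemma lies in both $N[x]$ and $N[y]$.

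First I would unpack the hypothesis $|S_x(x)| > 1$. Recall that $v \in S_x(x)$ means $N^G[v] \cap N[x] = N^G[x] \cap N[x] = N[x]$, which is equivalent to $N[x] \subseteq N[v]$. In particular every element of $S_x(x)$ is adjacent to $x$ (or equals $x$), so $S_x(x) \subseteq N[x]$.

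Next I would invoke Lemma \ref{lem:S=1condition2}: since $G$ is thin and $|S_x(x)| > 1$, there exists a vertex $z \in S_x(x)$ with $|S_z(z)| = 1$. By Lemma \ref{lem:backbone_strictMax_vertices}, the condition $|S_z(z)| = 1$ is equivalent to $N[z]$ being strictly maximal, i.e., $z \in \bone(G)$. This produces the required backbone vertex.

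It remains to verify the adjacency conditions $z \in N[x]$ and $z \in N[y]$. The first follows immediately from $z \in S_x(x) \subseteq N[x]$ noted above. For the second, $z \in S_x(x)$ gives $N[x] \subseteq N[z]$; since $(x,y) \in E(G)$ we have $y \in N[x]$, hence $y \in N[z]$, which is equivalent to $z \in N[y]$. Thus $z \in \bone(G) \cap N[x] \cap N[y]$, as required. The argument is essentially a direct composition of Lemmas \ref{lem:backbone_strictMax_vertices} and \ref{lem:S=1condition2} with the defining property of $S_x(x)$; the only subtle point is recognizing that membership in $S_x(x)$ forces $N[x] \subseteq N[z]$, which is exactly what transfers the neighborhood relation from $x$ to $y$.
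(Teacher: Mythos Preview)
Your argument is correct. Note that the paper does not actually supply its own proof of this lemma; it is quoted from \cite{HIKS-09} without proof, so there is nothing in the present paper to compare against. One minor remark: invoking Lemma~\ref{lem:backbone_strictMax_vertices} is unnecessary, since $|S_z(z)|=1$ is \emph{by definition} the condition $z\in\bone(G)$ (Definition~\ref{def:backbone}); the strict-maximality characterization is not needed here.
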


\begin{figure}[t]
  \centering
  \includegraphics[bb= 132 505 540 791, scale=0.7]{./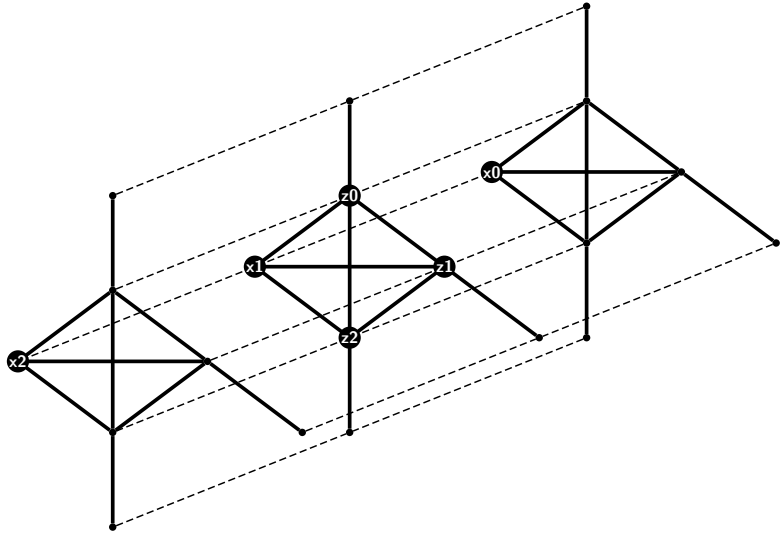}
  \caption[A strong product graph where $N^*$-neighborhoods are necessary]{
			The Cartesian skeleton of the thin product graph $G$ of 
			 two prime factors induced by one connected component of thick and dashed lined edges.
			 The backbone $\bone(G)$ consists of the vertices $z_1,z_2$ and $z_3$.
			 In \emph{none} of a edge-neighborhood $H$ holds $|S_H(x_i)| =1$, $i=1,2,3$.
			 Hence the fiber induced by vertices $x_1,x_2$ and $x_3$ does not satisfy 
			 the \Scond\ in any edge-neighborhood. To identify this particular fiber it is 
			 necessary to use $N^*$-neighborhoods. By Lemma \ref{lem:N*} $N^*$-neighborhoods
			 are also sufficient.}
  \label{fig:Counter_edgeN_gen}
\end{figure}

\begin{lem}
  Let $G$ be a thin graph and $(v,w)$ be any edge of $G$. 
  Let $N^*$ denote the $N^*_{v,w}$-neighborhood. 
	Then it holds that $|S_{N^*}(v)|=1$ and $|S_{N^*}(w)|=1$ , i.e., 
	the edge $(v,w)$ satisfies the \Scond\ in $N^*$.
  \label{lem:N*}
\label{lem:diam4subgraph}
\end{lem}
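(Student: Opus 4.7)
The plan is to argue by contradiction for the first statement and then invoke symmetry in $(v,w)$ for the second. Suppose $|S_{N^*}(v)| \geq 2$ and pick $u \in S_{N^*}(v)$ with $u \neq v$. I want to exhibit a vertex $z$ inside $V(N^*)$ that lies in exactly one of $N^G[u]$, $N^G[v]$, which will contradict $N^G[u]\cap V(N^*) = N^G[v] \cap V(N^*)$.

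The first step is to locate $u$ inside $N[v]\cap N[w]$, because that is what guarantees $N[u] \subseteq V(N^*)$ by the very definition of the $N^*$-neighborhood. Note that both $v$ and $w$ belong to $N[v] \cap N[w]$ (since $(v,w) \in E(G)$), so $N[v] \cup N[w] \subseteq V(N^*)$ and in particular $v, w \in V(N^*)$. Applying the assumption $N^G[v] \cap V(N^*) = N^G[u]\cap V(N^*)$ to the vertex $v \in N^G[v] \cap V(N^*)$ gives $v \in N^G[u]$, so $u \sim v$. Applying it to $w \in N^G[v] \cap V(N^*)$ gives $w \in N^G[u]$, so either $u = w$ or $u \sim w$. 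In either case $u \in N[v] \cap N[w]$, hence $N[u] \subseteq V(N^*)$ by construction of $N^*$.

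Now I invoke the assumption that $G$ is thin: $u \neq v$ forces $N^G[u] \neq N^G[v]$, so we can pick some $z$ in the symmetric difference $N^G[u] \bigtriangleup N^G[v]$. Since both $N[u]$ and $N[v]$ are contained in $V(N^*)$, we have $z \in V(N^*)$ regardless of which side of the symmetric difference it lies in. Then $z$ belongs to exactly one of $N^G[u] \cap V(N^*)$ and $N^G[v] \cap V(N^*)$, contradicting our choice of $u$. Therefore $|S_{N^*}(v)| = 1$, and the identical argument with the roles of $v$ and $w$ swapped (note that $N^*_{v,w} = N^*_{w,v}$) gives $|S_{N^*}(w)| = 1$, which is exactly the \Scond\ for the edge $(v,w)$ in $N^*$.

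The only subtle point, and the one I would flag as the main obstacle, is the step establishing $u \in N[v] \cap N[w]$: one needs both $v$ and $w$ as ``test vertices'' in $V(N^*)$ to pin $u$ down sufficiently to guarantee $N[u] \subseteq V(N^*)$. Without this containment, a witness $z \in N[u] \setminus N[v]$ provided by thinness of $G$ need not lie in $V(N^*)$ and so need not yield a contradiction. This is precisely the reason $N^*$ is defined as the union of neighborhoods over $N[v] \cap N[w]$ rather than, say, the edge-neighborhood $\langle N[v] \cup N[w]\rangle$; compare Figure \ref{fig:Counter_edgeN_gen}.
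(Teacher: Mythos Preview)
Your proof is correct and follows essentially the same route as the paper: assume a second vertex $u$ in $S_{N^*}(v)$, use $v,w\in V(N^*)$ to force $u\in N[v]\cap N[w]$ and hence $N[u]\subseteq V(N^*)$, then combine this with $N[v]\subseteq V(N^*)$ and thinness to obtain a contradiction. The only cosmetic difference is that the paper concludes directly $N[v]=N[v]\cap V(N^*)=N[u]\cap V(N^*)=N[u]$, whereas you extract an explicit witness $z$ from the symmetric difference; these are the same argument.
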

\begin{proof}
	Assume that $|S_{N^*}(v)|>1$. Thus there 
	is a vertex $x \in S_{N^*}(v)$ different from $v$ 
 	with $N[x] \cap N^* = N[v] \cap N^*$, which implies
    that $w \in N[x]$ and hence, $x\in N[v]\cap N[w]$. 
	Thus, it holds $N[x] \subseteq N^*$.
	Moreover, since  $N[v] \subseteq N^*$ we can conclude that
	$N[v] = N[v] \cap N^* = N[x] \cap N^* = N[x]$,
	contradicting that $G$ is thin. Analogously, one
	shows that the statement holds for vertex $w$.
\end{proof}

\subsection{The Color-Continuation}

The concept of covering a graph by suitable subproducts 
and determining the global factors needs some additional 
improvements. Since we want to determine the global
factors, we need to find their fibers.
This implies that we have to identify different locally determined
fibers as belonging to different or to one and the same global fiber.
For this purpose, we formalize the term \emph{product coloring},  
\emph{color-continuation} and \emph{combined coloring}.
Remind, the coordinatization of a product is equivalent to a (partial) edge
coloring of $G$ in which edges $e=(x,y)$ share the same color $c(e)=k$ if 
 $x$ and $y$ differ only in the value of a single coordinate $k$,
i.e., if $x_i=y_i$, $i\ne k$ and $x_k\ne y_k$. This colors the
\emph{Cartesian edges} of $G$ (with respect to the \emph{given} product
representation).  

\begin{defn}
	A \emph{\PC} of a strong product graph $G = \boxtimes_{i=1}^n G_i$ of $n\geq 1$ 
	(not necessarily prime) factors 
	is a mapping  $P_G$ from a subset $E'\subseteq E(G)$, that is a 
	set of Cartesian edges of $G$, 
	into a set $C =\{1,\dots,n\}$ of colors, such that all such edges in $G_i$-fibers  
	obtain the same color $i$.
 \label{def:product_coloring}
\end{defn}

\begin{defn}
	A \emph{partial \PC} of a graph $G = \boxtimes_{i=1}^n G_i$
	is a \PC\  that is only defined on edges that additionally satisfy the \Scond\ in 
	 $G$.
 \label{def:partial_product_coloring}
\end{defn}

Note, in a thin graph $G$ a \PC\ and a partial \PC\ coincide, since
all edges satisfy the \Scond\ in $G$. 

\begin{defn} \label{continuation}
	Let $H_1,H_2 \subseteq G$ and $P_{H_1}$, resp. $P_{H_2}$, be partial \PC s 
	of $H_1$, resp. $H_2$. Then $P_{H_2}$ is  a 
	\emph{color-continuation} of $P_{H_1}$ if for every color $c$ in the image of
	$P_{H_2}$ there is an edge in $H_2$ with color $c$ that is also in
	the domain of $P_{H_1}$.

	The \emph{combined coloring} on $H_1 \cup H_2$ uses the colors of
	$P_{H_1}$ on $H_1$ and those of $P_{H_2}$ on $H_2 \setminus H_1$.
\end{defn}

\begin{figure}[ht]
  \centering
  \includegraphics[bb= 20 519 589 656, scale=0.6]{./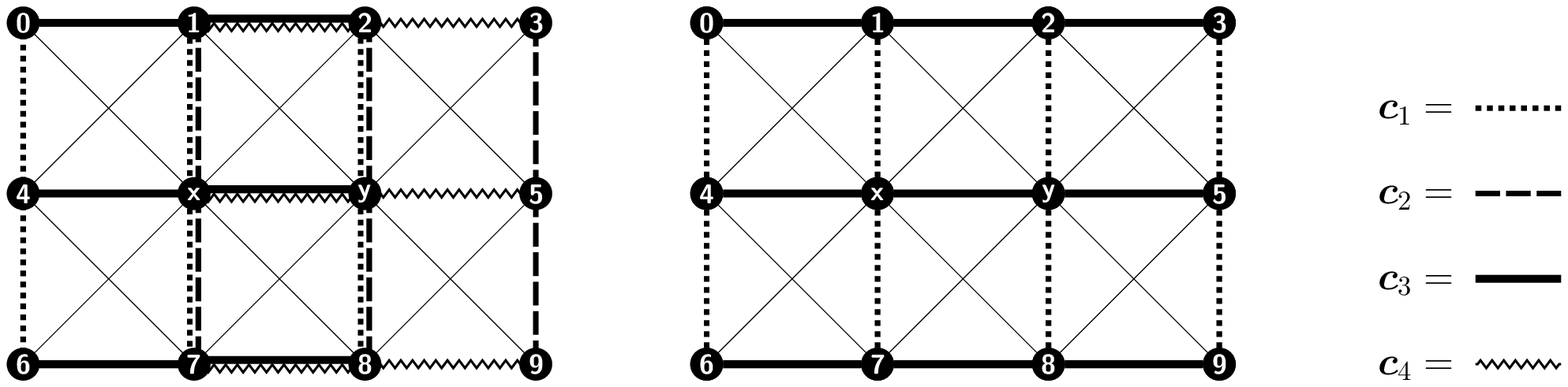}
  \caption[Example: color-continuation works]{Shown is a thin graph $G$ with $\bone(G)=\{x,y\}$.
		$G$ is the strong product of	two paths. 
	If one computes the PFD of the neighborhood $\la N[x]\ra$ one obtains 
		 a (partial) \PC\ with colors $c_1$ and $c_3$. The (partial) \PC\ of 
			$\la N[y]\ra$ has colors $c_2$ and $c_4$. Since on edge $(x,y)$, resp. 
			$(x,1)$, both colors $c_1$ and $c_2$, resp. $c_3$ and $c_4$ are 
			represented we can identify those colors and merge them to one color, 
			resulting in a proper combined coloring.
		   Hence, the \PC\ $P_{\la N[x]\ra}$ 	is a color-continuation of
			$P_{\la N[y]\ra}$ and vice versa.			}
\label{fig:exmplCC}
\end{figure}

In other words, for all newly colored edges with color $c$ in $H_2$, 
which are Cartesian edges in $H_2$ that satisfy the \Scond\ in $H_2$, 
we have to find a representative edge that satisfy the \Scond\ in $H_1$ and 
was already colored in $H_1$. If $H_1$ and $H_2$ are thin 
we can ignore the \Scond, since all edges satisfy this 
condition in $H_1$ and $H_2$, see Figure \ref{fig:exmplCC}.

However, there are cases where the color-continuation fails, see
Figure \ref{fig:colorcontifails}. The remaining part of this subsection
is organized as follows. We first show how one can solve the 
color-continuation problem if the corresponding subproducts are thin.
As it turns out, it is sufficient to use the information
of 1-neighborhoods only in order to get a proper
combined coloring. We then proceed to solve this problem
for non-thin subgraphs. 

Before we continue, two important lemmas are given.
The first one is just a restatement of a lemma, which was formulated for 
equivalence classes w.r.t. to a product relation in \cite{IMZE-94}.
The second lemma shows how one can adapt this lemma to
non-thin graphs.

\begin{lem}[\cite{IMZE-94}, Lemma 1]
	Let $G$ be a thin strong product graph 
	and let $P_{G}$  be a \PC\ of $G$.
	Then every vertex of $V(G)$ is incident to at least one 
	edge with color $c$ for all colors 
	$c$ in the image of $P_{G}$.
	\label{lem:every_color_on_every_vertex}
\end{lem}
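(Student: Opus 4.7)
The plan is to argue directly from the product structure. The key observation is that whenever a color $c$ appears in the image of $P_G$, the factor $G_c$ must be nontrivial and connected; consequently every vertex of $G$ has a Cartesian neighbor inside its $G_c$-fiber, and the defining property of a \PC\ then forces this neighbor to sit on an edge of color $c$.

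In more detail, I would fix $c$ in the image of $P_G$ and take any edge $e$ with $P_G(e) = c$. By definition of a \PC, $e$ is Cartesian and lies in a $G_c$-fiber, so $G_c$ has at least one edge and hence at least two vertices. Using that $G$ is connected (the standing assumption of the paper), together with the standard fact that a strong product is connected if and only if each of its factors is connected, I would conclude that $G_c$ is itself connected and therefore has no isolated vertex. Now let $x \in V(G)$ be arbitrary; thinness of $G$ combined with Lemma \ref{lem:uniqu-coord} gives unique coordinates $(x_1,\dots,x_n)$ for $x$. Choosing any neighbor $y_c$ of $x_c$ in $G_c$, the vertex $y := (x_1,\dots,x_{c-1},y_c,x_{c+1},\dots,x_n)$ is adjacent to $x$ via a Cartesian edge of the $G_c$-fiber through $x$. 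Since on a thin graph a \PC\ and a partial \PC\ coincide (as noted immediately after Definition \ref{def:partial_product_coloring}), this edge lies in the domain of $P_G$ and must receive color $c$; hence $x$ is incident to an edge of color $c$.

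I expect the main subtlety to be the domain-of-definition issue: the \PC\ definition a priori allows $P_G$ to be defined only on a subset of the Cartesian edges, and in that generality the lemma would clearly fail (one could simply omit all edges incident to a fixed vertex). Thus the crux is to observe that on a thin graph every Cartesian edge must in fact lie in the domain of $P_G$, which is exactly the content of the comment following Definition \ref{def:partial_product_coloring}. Once this is granted, the remainder of the argument reduces to two standard facts about products: unique coordinatization from thinness, and the absence of isolated vertices in a connected factor that contains an edge.
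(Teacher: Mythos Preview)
The paper does not supply its own proof of this lemma; it is quoted from \cite{IMZE-94} and left unproved. Your direct argument---nontriviality and connectedness of $G_c$ force every $x_c$ to have a neighbor in $G_c$, and the corresponding Cartesian edge through $x$ then receives color $c$---is correct and is the standard proof.

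One comment on the domain subtlety you raise. The remark after Definition~\ref{def:partial_product_coloring} says that for thin $G$ the classes of \PC s and partial \PC s coincide, because every edge satisfies the \Scond; this is a statement about two families of colorings being equal, not a guarantee that any individual \PC\ has \emph{all} Cartesian edges in its domain, so it does not quite close the gap you identify. The intended reading of Definition~\ref{def:product_coloring} is simply that $E'$ \emph{is} the set of Cartesian edges of $G$ with respect to the chosen decomposition (``subset'' refers to $E'\subseteq E(G)$, distinguishing Cartesian from non-Cartesian edges). Under that reading your argument needs no further justification, and thinness enters only to make the coordinatization---hence the set of Cartesian edges---unique via Lemma~\ref{lem:uniqu-coord}.
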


\begin{lem}
	Let $G$ be a thin strong product graph, $H\subseteq G$
	be a non-thin subproduct of $G$ and $x\in V(H)$ be
	a vertex with $|S_H(x)|=1$. 
	Moreover, let $P_{H}$ be a partial \PC\ of $H$. 
	Then vertex $x$  is contained in at least one edge with color $c$ 
	for all colors 	$c$ in the image of $P_{G}$.
	\label{lem:every_partialcolor_on_every_vertex}
\end{lem}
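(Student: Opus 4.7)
The plan is to lift the problem to the quotient $H/S$, apply Lemma~\ref{lem:every_color_on_every_vertex} there, and then transfer the resulting edge back to $H$ using the hypothesis $|S_H(x)|=1$.

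First I would fix a decomposition $H=\boxtimes_{i=1}^n H_i$ that realizes the partial product coloring $P_H$, so that the color $i$ is assigned (on edges satisfying the \Scond) to Cartesian edges of the $H_i$-fibers. By Lemma~\ref{lem:IMKL-00-quotiengraph} the quotient $H/S\simeq H_1/S\boxtimes\cdots\boxtimes H_n/S$ is thin, and the $H_i/S$-fibers are precisely the images of the $H_i$-fibers under the quotient map $\pi\colon H\to H/S$. Consequently $P_H$ induces a (full) product coloring $\widetilde P$ on $H/S$ that uses exactly the colors appearing in the image of $P_H$: for any color $c$ occurring in $P_H$ there is, by definition of a partial \PC, a Cartesian edge of the $H_c$-fiber satisfying the \Scond, and its image in $H/S$ is a Cartesian edge of the $(H_c/S)$-fiber with the same color $c$.

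Next, since $|S_H(x)|=1$, the class $\{x\}$ is a single vertex of $H/S$. Applying Lemma~\ref{lem:every_color_on_every_vertex} to the thin product graph $H/S$ and the \PC\ $\widetilde P$, the vertex $\{x\}$ is incident in $H/S$ to at least one edge of every color $c$ in the image of $\widetilde P$, and in particular of every color $c$ in the image of $P_H$. Let $(\{x\},S_H(y))$ be such an edge; it is Cartesian in $H/S$, so all edges of $H$ between $x$ and vertices of $S_H(y)$ are Cartesian edges of the $H_c$-fiber and therefore carry color $c$ in the product coloring of $H$.

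Finally, any such edge $(x,y')$ with $y'\in S_H(y)$ satisfies the \Scond\ in $H$ because $|S_H(x)|=1$, so it lies in the domain of the partial \PC\ $P_H$ and is coloured $c$ by it. Hence $x$ is incident in $H$ to an edge of colour $c$ for every $c$ in the image of $P_H$, as required. The only delicate step is the first one: checking that the partial \PC\ on $H$ genuinely descends to a \PC\ on the thin quotient $H/S$ with the same palette of colors, so that Lemma~\ref{lem:every_color_on_every_vertex} becomes applicable.
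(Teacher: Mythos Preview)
Your argument is correct and follows essentially the same route as the paper: pass to the thin quotient $H/S$, apply Lemma~\ref{lem:every_color_on_every_vertex} there, and then pull the resulting Cartesian edge back to $H$ using the \Scond\ (which is the content of Lemma~\ref{lem:S=1condition}). The paper's proof is terser; its only additional remark is that $H$ cannot have a complete factor, since otherwise Corollary~\ref{cor:common_cardi_of_S} would force $|S_H(x)|>1$. You handle this implicitly by restricting attention to colors actually in the image of $P_H$, which achieves the same thing: a complete factor would contribute no edge satisfying the \Scond, hence no color to the palette.
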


\begin{proof}
	Notice that $H$ does not contain complete
	factors, otherwise Corollary \ref{cor:common_cardi_of_S} implies that $|S_H(x)|>1$.
	Now, the statement follows directly from Lemma \ref{lem:S=1condition} 
	and Lemma \ref{lem:every_color_on_every_vertex}
\end{proof}

\subsubsection{Solving the Color-Continuation Problem for Thin Subgraphs}

To solve the color-continuation problem for
thin subgraphs and in particular for thin 1-neighborhoods
we introduce so-called \emph{S-prime} graphs \cite{Hel-13,Sab:75,LB:81, LB:95,Klavzar:02,Bresar:03,HGS-09}.

\begin{defn}
A graph $S$ is \emph{S-prime} (\emph{S} stands for ``subgraph'')
if for all graphs $G$ and $H$ with $S\subseteq G\star H$ holds: $S\subseteq
H$ or $S\subseteq G$, where $\star$ denotes an arbitrary graph product.   
\end{defn}

The class of S-prime graphs was introduced and characterized 
for the direct product by Sabidussi in 1975 \cite{Sab:75}. 
Analogous notions of S-prime graphs with respect to other
products are due to Lamprey and Barnes \cite{LB:81, LB:95}.
Klav{\v{z}}ar \emph{et al.} \cite{Klavzar:02} 
and Bre{\v{s}}ar \cite{Bresar:03} proved several
characterizations of (basic) S-prime graphs.
In \cite{HGS-09} it is shown that so-called 
\emph{diagonalized Cartesian products} 
of S-prime  graphs are S-prime w.r.t. the Cartesian product.
We shortly summarize the results of \cite{HGS-09}.

\begin{defn}[\cite{HGS-09}]
  A graph $G$ is called a \emph{diagonalized} Cartesian product, 
  whenever there is an edge $(u,v) \in E(G)$ such that 
  $H = G\setminus(u,v)$ 
  is a nontrivial Cartesian product and $u$
  and $v$ have maximal distance in $H$. 
  \label{def:diag_CartProd}
\end{defn}

\begin{thm}[\cite{HGS-09}]\label{thm:main_result}
  The diagonalized Cartesian Product of S-prime graphs is 
  S-prime w.r.t. the Cartesian product.
\end{thm}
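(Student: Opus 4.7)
The plan is to prove that every Cartesian embedding $G \subseteq A\Box B$ is trivial, i.e.\ $G\subseteq A$ or $G\subseteq B$. Fix such an embedding and, for $x\in V(G)$, write its image as $(x_A,x_B)\in V(A)\times V(B)$; each edge of $G$ is then of type $A$ (its endpoints agree in the $B$-coordinate) or of type $B$. The first step is to restrict attention to $H = G\setminus(u,v) = S_1\Box\cdots\Box S_k$. Because each $S_i$ is a connected S-prime graph and every $S_i$-fiber of $H$ sits as a subgraph of $A\Box B$, I would invoke the S-prime property (in its strengthened geometric form for connected S-prime graphs) to conclude that each $S_i$-fiber is \emph{layer-contained} in $A\Box B$, i.e.\ lies entirely in one $A$-layer or one $B$-layer.

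Second, I would argue that for each fixed factor index $i$ all $S_i$-fibers share the same layer type. Two distinct $S_i$-fibers are joined in $H$ by Cartesian edges of the remaining factors, so a short connectivity/consistency argument forces a uniform layer type per factor. This partitions the indices as $\{1,\dots,k\} = I_A\sqcup I_B$ and produces injective maps $\alpha,\beta$ with
\[
    \phi\bigl((x_1,\dots,x_k)\bigr) = \bigl(\alpha((x_i)_{i\in I_A}),\,\beta((x_i)_{i\in I_B})\bigr).
\]
The diagonal edge $(u,v)$ now enters decisively: because $u$ and $v$ realize the maximum distance in $H$ and Cartesian distance is additive over factors, we have $u_i\neq v_i$ for every $i$. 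If both $I_A$ and $I_B$ were nonempty, injectivity of $\alpha$ and $\beta$ would force $\phi(u)$ and $\phi(v)$ to differ in both coordinates, which is impossible for an edge of $A\Box B$. Hence one of $I_A, I_B$ must be empty, which places $V(G)$ in a single layer of $A\Box B$ and yields $G\subseteq A$ or $G\subseteq B$.

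The hardest part, I expect, is the first step: upgrading the bare subgraph statement ``$S_i\subseteq A$ or $S_i\subseteq B$'' that the S-prime definition gives to the geometric assertion that the specific embedding of each $S_i$-fiber into $A\Box B$ is layer-contained. This refinement is standard for connected Cartesian S-prime graphs (cf.\ the characterizations by Sabidussi, Lamprey--Barnes, Klav\v{z}ar, and Bre\v{s}ar surveyed before Definition~\ref{def:diag_CartProd}), and I would invoke it here. Once it is in hand, everything else is bookkeeping with Cartesian coordinates together with a single use of the maximum-distance assumption on the diagonal edge $(u,v)$.
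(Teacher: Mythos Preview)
The paper does not prove this theorem at all: it is quoted from \cite{HGS-09} and used as a black box, so there is no in-paper argument to compare against. Your outline is a plausible reconstruction of the natural strategy, and the endgame---using that $u$ and $v$ differ in every $H$-coordinate to force one of $I_A,I_B$ to be empty---is correct once the preceding structure is established.

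Two places, however, are thinner than you present them. First, as you yourself flag, the definition of S-prime used here is an \emph{abstract} subgraph condition (``$S\subseteq A$ or $S\subseteq B$''), not the geometric statement that a given embedding of $S$ into $A\Box B$ lands in a single layer. Upgrading one to the other for connected graphs is exactly what the cited characterizations (Klav\v{z}ar, Bre\v{s}ar) supply, and in \cite{HGS-09} this step is made explicit before the theorem is proved. Invoking it is legitimate, but it is the main technical input, not a formality.

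Second, your ``short connectivity/consistency argument'' that all $S_i$-fibers share the same layer type is not as short as advertised. To propagate the layer type from one $S_i$-fiber to an adjacent one you must use that the connecting $S_j$-fiber is itself layer-contained, and then analyse what happens when an $A$-layer-contained fiber and a $B$-layer-contained fiber meet a common layer-contained fiber. This does go through, but it uses layer-containment of \emph{all} factor fibers simultaneously together with injectivity of the embedding; the argument is a genuine induction over the square/commutativity structure of $H$, not a one-line remark. A complete writeup along your lines would need to fill in both of these steps.
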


\begin{cor}[\cite{HGS-09}]
  Diagonalized Hamming graphs, and thus diagonalized Hypercubes, are
  S-prime w.r.t. the Cartesian product.
	\label{cor:diagH}
\end{cor}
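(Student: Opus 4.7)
The plan is to reduce the corollary to a direct application of Theorem~\ref{thm:main_result}. A Hamming graph is by definition a Cartesian product $\Box_{i=1}^k K_{n_i}$ of complete graphs, and a hypercube is the special case $n_i=2$ for all $i$. A diagonalized Hamming graph is then $\Box_{i=1}^k K_{n_i}$ together with an extra edge joining two vertices at maximal Cartesian distance, which fits exactly the template of Definition~\ref{def:diag_CartProd}. So if we can show that each factor $K_{n_i}$ is S-prime with respect to the Cartesian product, Theorem~\ref{thm:main_result} applies verbatim and yields the result.

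The first step, therefore, is to verify that $K_n$ is S-prime for every $n\ge 1$. The case $n\le 2$ is trivial, since $K_1$ and $K_2$ are S-prime by convention (they embed into any nontrivial Cartesian product). For $n\ge 3$, assume $K_n \subseteq G \Box H$ and consider the coordinate representation of its vertices as pairs $(g,h)$. Any two of these vertices must be adjacent in $G \Box H$, so they differ in exactly one coordinate. A short case analysis on three mutually adjacent vertices $a,b,c$ shows that if the edges $ab$ and $ac$ differ in different coordinates, then $b$ and $c$ would differ in both coordinates and fail to be adjacent in $G \Box H$, a contradiction. Hence all edges of $K_n$ use the same coordinate, i.e., all vertices of $K_n$ share one common coordinate value, which means $K_n$ embeds entirely into one of the factors $G$ or $H$. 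This establishes the S-primality of $K_n$.

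Having shown each $K_{n_i}$ is S-prime, the diagonalized Cartesian product $\Box_{i=1}^k K_{n_i}$ (with a single edge added between vertices at maximal distance) is S-prime by Theorem~\ref{thm:main_result}. Since diagonalized hypercubes are a subfamily of diagonalized Hamming graphs, the second assertion is immediate. The only mildly nontrivial part is the three-vertex case analysis in the S-primality argument for $K_n$; everything else is a direct invocation of the main result of \cite{HGS-09} already cited as Theorem~\ref{thm:main_result}.
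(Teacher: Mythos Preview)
Your argument is correct and is exactly the natural derivation: show that complete graphs $K_n$ are S-prime w.r.t.\ the Cartesian product (your three-vertex case analysis is the standard and correct way to do this), and then invoke Theorem~\ref{thm:main_result}. One small quibble: $K_1$ and $K_2$ are not S-prime ``by convention'' but genuinely satisfy the definition --- a single vertex trivially lies in a layer, and any edge of $G\Box H$ lies in a $G$- or $H$-layer by the definition of Cartesian adjacency.

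As for the comparison with the paper: the paper does not supply a proof of this corollary at all. It is stated as a quotation from \cite{HGS-09}, just like Theorem~\ref{thm:main_result}, and is used as a black box later in Lemma~\ref{lem-Sprime-addCartesian}. So there is no ``paper's own proof'' to compare against; your write-up simply fills in the deduction that the cited reference presumably contains.
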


We shortly explain how S-prime graphs can be used in 
order to obtain a proper color-continuation
in thin subproducts even if the color-continuation
fails. Consider a strong product graph $G$ and two given thin 
subproducts $H_1,H_2 \subseteq G$. 
Let the Cartesian edges of each subgraph 
be colored  with respect to a product coloring 
of $H_1$, respectively $H_2$ that is at least
as fine as the  product coloring of $G$ w.r.t. to its PFD. 
As stated in Definition \ref{continuation}, 
we have a proper color-continuation from $H_1$  to $H_2$
if for all colored edges with color $c$ in $H_2$
there is a representative edge that is colored in $H_1$.
Assume the color-continuation fails, i.e., 
there is a color $c$ in $H_2$ such that
for all edges $e_c \in E(H_2)$ with color $c$
holds that $e_c$ is not colored in $H_1$, for an example
see Figure  \ref{fig:colorcontifails}. 
This implies that all such edges $e_c$ are determined as non-Cartesian in
$H_1$.
As claimed,  the product colorings of 
$H_1$ and $H_2$ are at least as fine as the one of $G$ and 
$H_1$, $H_2$ are subproducts of $G$, 
which implies that colored Cartesian edges 
in each $H_i$ are Cartesian edges in $G$.
Since $e_c$ is determined as non-Cartesian in $H_1$, 
but as Cartesian in $H_2$, 
we can infer that $e_c$ must be Cartesian in $G$. 
Thus we can force the edge $e_c$ 
to be Cartesian in $H_1$.
The now arising questions is:  "What happens with the factorization of $H_1$?" 
We will show in the sequel that there is a hypercube in
$H_1$ consisting of Cartesian edges only, where all edges are copies
of edges of different factors. Furthermore, we show that this hypercube
 is diagonalized by a particular edge $e_c$ and therefore S-prime w.r.t
the Cartesian product.
Moreover, we will prove that all colors that appear
on this hypercube and the color $c$ on $e_c$
have to be merged to exactly one color, even with respect
to the product coloring, provided by the coloring
w.r.t. the strong product.
This approach solves the color-continuation problem
for thin subproducts and hence in particular for
thin 1-neighborhoods as well.

\begin{figure}[tb] 
  \centering
   \includegraphics[bb = 125 470 417 700, scale=1.1]{./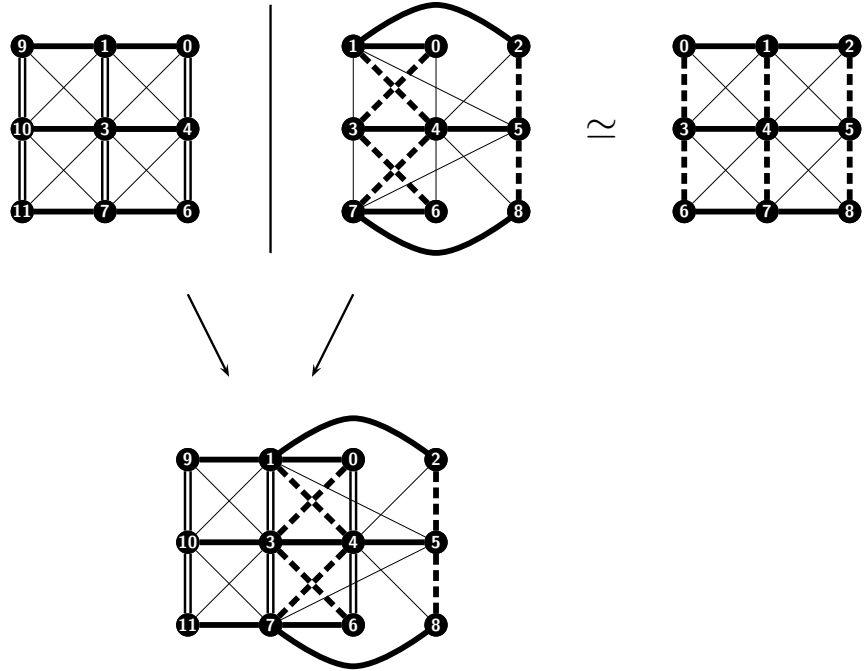}
  \caption[Example: color-continuation fails]{ \textit{Color-continuation problem in thin subproducts.} 
			Consider the induced neighborhoods $\la N[3] \ra$ and $\la N[4] \ra$,
			depicted in the upper part. The colorings of the edges 
			w.r.t. the PFD of each neighborhood are shown as 
			thick dashed edges, thick-lined edges and double-lined edges, respectively.
			If we cover the graph $G$ in the lower part from $N[3]$ to $N[4]$
			the color-continuation fails, e.g. on edge $(1,4)$,
			since $(1,4)$ is determined as non-Cartesian in $\la N[3] \ra$.
			This holds for all edges in $\la N[3] \ra$ that obtained 
			the color "thick dash" in $\la N[3] \ra$. 
			The same holds for the color	
			"double-lined" if we cover the graph from $N[4]$ to $N[3]$.
			If we force the edge $(1,4)$ to be Cartesian in $\la N[3] \ra$ 
			Lemma \ref{lem-Sprime-addCartesian} implies that the colors
			"thick-lined" and "double-lined" have to be merged to one color,
			since the subgraph with edge set 
			$\{(0,1),(0,4),(1,3),(3,4)\} \cup \{(1,4)\}$ 
			 is a diagonalized hypercube $Q_2$. Note,  $G$ can be covered by thin
			1-neighborhoods only, but the color-continuation fails. Hence	
			 $G$ is not NICE in the terminology of \cite{HIKS-08}.}
  \label{fig:colorcontifails}
\end{figure}

\begin{lem}
	Let $G = \boxtimes_{l=1}^n G_l$ be a thin strong product graph and 
	$(v,w)\in E(G)$ a non-Cartesian edge. Let
    $J$ denote the set of indices where $v$  
	and $w$ differ and $U\subseteq V(G)$ 
	be the set of vertices $u$ with coordinates  $u_i= v_i$, if $i\notin J$ 
	and $u_i \in\{v_i,w_i\}$, if $i\in J$.
	Then the induced subgraph $\la U\ra \subseteq \skel(G)$ on $U$
	consisting of Cartesian edges of $G$ only is a hypercube of dimension $|J|$.
	\label{lem-induHypercube}
\end{lem}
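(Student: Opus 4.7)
The plan is to identify the Cartesian edges among vertex pairs in $U$ and show they form exactly the edges of a $|J|$-dimensional hypercube $Q_{|J|}$, realized inside $\skel(G)$.

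First I would observe that since $(v,w)\in E(G)$ is an edge of the strong product, the definition of the strong product forces, for every index $i\in J$, that $(v_i, w_i)\in E(G_i)$; in the remaining coordinates $i\notin J$ we have $v_i = w_i$. Consequently $|U| = 2^{|J|}$, and every vertex in $U$ can be identified with a binary string of length $|J|$ indicating, for each $i\in J$, whether its $i$-th coordinate equals $v_i$ or $w_i$.

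Next I would characterize the Cartesian edges whose endpoints lie in $U$. Take two distinct vertices $u, u'\in U$ and let $K\subseteq J$ be the set of coordinates where they differ. If $|K|=1$, say $K=\{i\}$, then $u$ and $u'$ agree in every coordinate except $i$, where $\{u_i,u'_i\}=\{v_i,w_i\}$; since $(v_i,w_i)\in E(G_i)$, the pair $(u,u')$ is a Cartesian edge of $G$. If $|K|\geq 2$, then $u$ and $u'$ differ in at least two coordinates, so in case they are adjacent in $G$ at all the edge between them is non-Cartesian by definition. Hence the Cartesian edges of $G$ with both endpoints in $U$ are precisely those joining vertices that differ in exactly one coordinate $i\in J$; this is, by construction, the edge set of $Q_{|J|}$.

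Finally, I would invoke the remark following Theorem~\ref{the:CartSk-hamm}, which lets us assume that in the thin strong product $G$ the edge set of $\skel(G)$ is exactly the set of all Cartesian edges of $G$. Therefore the induced subgraph $\langle U\rangle\subseteq\skel(G)$, built from those Cartesian edges of $G$ whose endpoints lie in $U$, coincides with the hypercube $Q_{|J|}$. I do not anticipate a substantial obstacle; the only delicate point is the bookkeeping to make sure that (i) no non-Cartesian edge of $G$ sneaks into the subgraph under consideration and (ii) no hypercube edge is missing, both of which follow immediately from the case distinction on $|K|$ and the fact that $(v_i,w_i)\in E(G_i)$ for every $i\in J$.
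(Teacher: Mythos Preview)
Your proof is correct. It differs from the paper's argument in that the paper proceeds by induction on $k=|J|$: it checks the base case $k=2$ explicitly (the four vertices induce a $K_4$ in $G$ whose Cartesian edges form a $Q_2$), and in the inductive step splits $U$ according to the $(m+1)$-st coordinate into two halves $U_1,U_2$, applies the induction hypothesis to each half to get two copies of $Q_m$, and then observes that the Cartesian edges between the halves realize $Q_m\Box K_2\simeq Q_{m+1}$. Your direct argument bypasses the induction entirely by encoding $U$ as $\{0,1\}^{|J|}$ and showing in one stroke that the Cartesian edges among pairs in $U$ are exactly those joining strings at Hamming distance $1$. This is shorter and conceptually cleaner; the paper's inductive route, on the other hand, makes the layered $Q_m\Box K_2$ structure more explicit, which is perhaps helpful for the subsequent use of diagonalized hypercubes in Lemma~\ref{lem-Sprime-addCartesian}, but is not logically necessary.
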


\begin{proof}
	Notice that the coordinatization of $G$ is unique, since $G$ is thin.
	Moreover, since the strong product is commutative and associative we can
	assume w.l.o.g. that $J = \{1,\dots,k\}$. Note, that $k>1$, otherwise
	the edge $(v,w)$ would be Cartesian. 

	Assume that $k=2$. We denote the coordinates of $v$, resp. of $w$,
	by $(v_1,v_2,X)$, resp. by $(w_1,w_2,X)$. By definition of
	the strong product we can conclude that $(v_i,w_i) \in E(G_i)$ for
	$i =1,2$. Thus the set of vertices with coordinates $(v_1,v_2,X)$
	$(v_1,w_2,X)$,$(w_1,v_2,X)$, and $(w_1,w_2,X)$ induce a complete
	graph $K_4$ in $G$. Clearly, the subgraph consisting of 
	Cartesian edges only is a $Q_2$.

	Assume now the assumption is true for $k=m$. We have to show
	that the statement holds also for $k=m+1$. Let J=\{1,\dots,m+1\} and
	let $U_1$ and $U_2$ be a partition of $U$ with
	$U_1 = \{u\in U \mid u_{m+1} = v_{m+1}\}$ and 
	$U_2 = \{u\in U \mid u_{m+1} = w_{m+1}\}$.
    Thus each $U_i$ consists of vertices that differ only
	in the first $m$ coordinates. Notice, by 
	definition of the strong product and by construction of both sets $U_1$
	and $U_2$ there are vertices $a,b$ in each $U_i$  that differ
	in all $m$ coordinates that are adjacent in $G$ and hence non-Cartesian
	in $G$. Thus, by induction hypothesis
	the subgraphs $\la U_i \ra$ induced by each $U_i$ consisting
	of Cartesian edges only is a $Q_m$.
	Let $\la U\ra$ be the subgraph with  vertex set $U$ and edge set
	$E(\la U_1\ra)\cup E(\la U_2\ra)\cup \{(a,b)\in E(G) \mid a = (X,v_{m+1},Y)$ and $b = (X,w_{m+1},Y)\}$.
	By  definition of the strong product 
	the edges $(a,b)$ with $a = (X,v_{m+1},Y)$ and $b = (X,w_{m+1},Y)$ 
	induce	an isomorphism between $\la U_1 \ra$ and $\la U_2\ra$ which implies 
    that $\la U \ra \simeq Q_m \Box K_2 \simeq Q_{m+1}$.
\end{proof}

\begin{lem}
	Let $G  = \boxtimes_{l=1}^n G_l $ be a thin strong product graph, 
	where each $G_l$, $l=1,\dots,n$ is prime. 
	Let 
	$H = \boxtimes_{l=1}^m H_l \subseteq G$ be a 
	thin subproduct of $G$ such that there is
	a non-Cartesian edge $(v,w)\in E(H)$ that is Cartesian in $G$. 
	Let $J$ denote the set of indices where $v$  
	and $w$ differ w.r.t. the coordinatization of $H$.
	Then the factor $\boxtimes_{i\in J} H_i$ of $H$ is a subgraph
	of a prime factor $G_l$ of $G$. 
	\label{lem-Sprime-addCartesian}
\end{lem}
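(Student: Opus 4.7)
The plan follows the blueprint sketched after Figure~\ref{fig:colorcontifails}. First, I would apply Lemma~\ref{lem-induHypercube} inside $H$ at the edge $(v,w)$ to obtain the vertex set $U \subseteq V(H)$ such that the induced subgraph $\la U \ra$ of $\skel(H)$ is a hypercube $Q_{|J|}$ of dimension $|J|$, with $v$ and $w$ at antipodal corners. Since $H$ is a subproduct of $G$, the vertex set factors as $V(H) = V(A_1) \times \cdots \times V(A_n)$ with $A_l \subseteq G_l$, and the uniqueness of the PFD (Theorem~\ref{thm:uniquePFD}) groups the prime factors of $H$ into the $A_l$, say $A_l = \boxtimes_{i \in I_l} H_i$; this induces a well-defined map $i \mapsto l(i)$ with $i \in I_{l(i)}$. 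Consequently, a Cartesian edge of $H$ in an $H_i$-fiber changes only the $l(i)$-th $G$-coordinate and is therefore a Cartesian edge of $G$ in the $G_{l(i)}$-factor. This places $Q_{|J|}$ inside $\skel(G) = \Box_{l=1}^n \skel(G_l)$, and appending the edge $(v,w)$, which is Cartesian in $G$ and antipodal in $Q_{|J|}$, yields a diagonalized hypercube $Q \subseteq \skel(G)$. By Corollary~\ref{cor:diagH}, $Q$ is S-prime with respect to the Cartesian product.

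The heart of the proof is then to translate the abstract S-primeness of $Q$ into the concrete statement that the given embedding $Q \subseteq \Box_{l=1}^n \skel(G_l)$ lies entirely in a single fiber of one factor $\skel(G_l)$. I would argue by contradiction: if the edges of $Q$ spanned two different Cartesian factors $\skel(G_a)$ and $\skel(G_b)$, then a suitable regrouping of the factor indices and projection would realize $Q$ as a subgraph of a nontrivial Cartesian product of two smaller subproducts of $\skel(G)$, contradicting the S-primeness of $Q$. Hence every edge of $Q$ lies in one and the same Cartesian factor of $\skel(G)$, and since $(v,w)$ is Cartesian in $G_l$ by hypothesis, that common factor must be $\skel(G_l)$. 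In particular, $l(i) = l$ for every $i \in J$.

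Finally, the induced subgraph of $H$ on the slice $T = \{u \in V(H) \mid u_j = v_j \text{ for all } j \notin J\}$ is canonically isomorphic to $\boxtimes_{i \in J} H_i$, and every edge in this slice moves along an $H_i$-fiber with $i \in J$ and therefore changes only the $l$-th $G$-coordinate. The whole slice consequently lies inside the $G_l$-fiber of $G$ through $v$, which is isomorphic to $G_l$, and so $\boxtimes_{i \in J} H_i$ embeds as a subgraph of $G_l$, as required. The hard part of the plan is the S-primeness step; everything else is bookkeeping about the subproduct structure of $H$ inside $G$ and the uniqueness of the PFD.
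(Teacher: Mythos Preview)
Your proposal is correct and follows essentially the same strategy as the paper: build the hypercube $Q_{|J|}$ inside $\skel(H)$ via Lemma~\ref{lem-induHypercube}, diagonalize it with the edge $(v,w)$, invoke Corollary~\ref{cor:diagH} to obtain an S-prime subgraph, and then argue that S-primeness forces all indices in $J$ to land in a single prime factor of $G$. Your bookkeeping through the explicit grouping $A_l=\boxtimes_{i\in I_l}H_i$ and the map $i\mapsto l(i)$ is cleaner than the paper's more indirect argument via an auxiliary graph $H^*=\Box_{j}K_j$ containing $\skel(H)\cup(v,w)$.

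One observation worth recording: once you have the map $i\mapsto l(i)$, the S-primeness step becomes superfluous. The edge $(v,w)$ is Cartesian in $G$, so $v$ and $w$ differ in exactly one $G$-coordinate $l^*$; on the other hand, the $l$-th $G$-coordinate of a vertex of $H$ is determined by its $H$-coordinates with indices in $I_l$, so $v$ and $w$ differ in the $l$-th $G$-coordinate if and only if $J\cap I_l\neq\emptyset$. Hence $J\subseteq I_{l^*}$ immediately, and $\boxtimes_{i\in J}H_i$ is a subfactor of $A_{l^*}\subseteq G_{l^*}$. Both the paper and your outline route this conclusion through the diagonalized hypercube because the paper never sets up the grouping explicitly; with your framing the coordinate argument is direct. (For the grouping step, Lemma~\ref{lem:uniqu-coord} on the uniqueness of the coordinatization in thin graphs is the right reference, slightly sharper than Theorem~\ref{thm:uniquePFD} alone.)
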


\begin{proof}
	In this proof, factors w.r.t. the Cartesian product and 
	the strong product, respectively, are called Cartesian factors and 
	strong factors, respectively.
	First notice that Cartesian edges in $G$ as well as in $H$
	are uniquely determined, since both graphs are thin.
 	Moreover, the existence of a Cartesian edge of $G =\boxtimes_{l=1}^n G_l$, 
	that is a non-Cartesian edge in a subproduct $H = \boxtimes_{l=1}^m H_l$
	of $G$, implies that $m>n$, i.e., the factorization of $H$ is
	a refinement of the factorization induced by the global PFD.
	Since $H$ is a thin subproduct of $G$ 
	with a refined factorization, it follows that Cartesian
	edges of $H$ are Cartesian edges of $G$. Therefore,  
	we can conclude that strong factors of $H$ are entirely contained
	in strong factors of $G$.

	We denote the subgraph of $H$ that consists of all
	Cartesian edges of $H$ only, i.e., its Cartesian skeleton, by $\skel(H)$, 
	hence $\skel(H) =  \Box_{l=1}^m H_l$. 
	Let  $U\subseteq V(H)$ be the set of vertices 
	$u$ with coordinates  $u_i= v_i$, if $i\notin J$ 
	and $u_i \in\{v_i,w_i\}$, if $i\in J$. 
	Notice that Lemma \ref{lem-induHypercube} implies that for the induced
	subgraph w.r.t. the Cartesian skeleton 
	$\la U \ra \subseteq \skel(H)$ holds $\la U \ra\simeq Q_{|J|}$. 
	Moreover, the distance $d_{\la U \ra}(v,w)$ between 
	$v$ and $w$ in $\la U \ra$ is $|J|$, that is the maximal
	distance that two vertices can have in $\la U \ra$. If we claim that
	$(v,w)$ has to be an edge in $\la U \ra$ we obtain 
	a diagonalized hypercube $\la U \ra^{diag}$.
	Corollary \ref{cor:diagH} implies that $\la U \ra^{diag}$ 
	is S-prime and hence $\la U \ra^{diag}$ must be contained 
	entirely in a Cartesian factor $\widetilde H$ of 
	a graph $H^* = \widetilde H \Box H'$ with $\skel(H) \cup (v,w)\subset H^*$. 
	This implies that $\la U \ra^{diag} \subseteq \widetilde H^u$
   	for all $u\in V(H^*)$, i.e., $\la U \ra^{diag}$ is entirely contained
	in all $\widetilde H^u$-layer in $H^*$. Note that all
	$\widetilde H$-layer $\widetilde H^u$ contain at least one edge of every $H_i$-layer $H_i^u$
	of the previously determined factors $H_i$, $i\in J$ of $H$.

	Furthermore, all Cartesian factors 
	of $\skel(H) = \Box_{l=1}^m H_l$ coincide with the strong factors of 
	$H = \boxtimes_{l=1}^m H_l$ and hence, in particular the factors
	$H_i,\ i\in J$.
	Moreover, since $H$ is a subproduct of $G$ and 
	the factorization of $H$ is a refinement of $G$ it holds that
    Cartesian factors $H_i,\ i\in J$  of $\skel(H)$  must be entirely contained 
	in strong prime factors of $G$.	This implies 
	that for all $i\in J$ the $H_i$-layer $H_i^u $ must be 
	entirely contained in the layer of strong factors of $G$.
	We denote the set of all already determined strong factors
	$H_i,\ i\in J$ of $H$ with $\mathcal H$.

	Assume the graph $H^* = \Box_{j=1}^s K_j$ with $\skel(H) \cup (v,w) \subseteq H^*$
	and $V(H^*) = V(\skel(H))$ 
	has a factorization such that $\Box_{i\in J} H_i \cup (v,w) \not \subseteq K_j$
	for all Cartesian factors $K_j$. Since $\skel(H) \cup (v,w) \subseteq H^*$, 
	we can conclude that $\la U \ra^{diag} \subseteq H^*$. Since 
	$\la U \ra^{diag}$ is S-prime it must be contained in
	a Cartesian factor $K_r$ of $H^*$. This implies
	that $\la U \ra^{diag}\subseteq K_r^u$ for all $u\in V(H^*)$, i.e.,
	for all $K_r$-layer of this particular Cartesian factor $K_r$. 
	Since $\Box_{i\in J} H_i \cup (v,w) \not \subseteq K_r$, we can conclude
	that there is an already determined strong factor $H_i$
	such that $H_i^u \not\subseteq K_r^u$ for all $u\in V(H^*)$.
	Furthermore, all $K_r$-layer $K_r^u$ contain at least one edge of each $H_i$-layer $H_i^u$
	of the previously determined strong factors $H_i$, $i\in J$ of $H$.
	We denote with $e$ the edge of the $H_i$-layer $H_i^u$ that is contained 
	in the $K_r$-layer $K_r^u$.
	This edge $e$ cannot be contained in any $K_j$-layer, $j\neq r$.
	This implies that $H_i^u \not\subseteq K_j^u$ for any $K_j$-layer,
	$j=1,\dots,s$.

	Thus,	there is an already determined strong factor 
	$H_i \in \mathcal H$ with $H_i^u \not\subseteq K_j^u,\ u\in V(H^*)$ 
	for all $K_j$-layer in $H^*$, $j=1,\dots,s$. Therefore, none
	of the layer of this particular $H_i$ are subgraphs of layer 
	of any Cartesian factor $K_j$ of $H^*$. 
	This means that $H^*$ is not a subproduct of $G$
	or a refinement of $H$, both cases contradict that $H_i \in \mathcal H$. 
	
	Therefore, we can conclude that 
	$\la U \ra^{diag} \subseteq \Box_{i\in J} H_i \cup (v,w) \subseteq \widetilde H$
	for a Cartesian factor $\widetilde H$ of $H^*$. As argued,
	Cartesian factors are subgraphs of its strong factors 
	and hence, we can infer that $\Box_{i\in J} H_i$
	and hence  $\boxtimes_{i\in J} H_i$ 
	must be entirely contained  in a strong factor
	of $H$ and hence in a strong factor of $G$,
	since $H$ is a subproduct.
\end{proof}

\subsubsection{Solving the Color-Continuation Problem for Non-Thin Subgraphs}

The disadvantage of non-thin subgraphs is that, in contrast
to thin subgraphs, not all edges satisfy the \Scond.
The main obstacle is that the color-continuation can fail
if a particular color is represented on edges that
don't satisfy the \Scond\ in any used subgraphs.  
Hence, those edges cannot be identified as Cartesian
in the corresponding subgraphs, see Figure \ref{fig:exmplCC2}. 
Moreover, we cannot
apply the approach that is developed for thin subgraphs
by usage of diagonalized hypercubes in general.
Therefore, we will extend 1-neighborhoods and use
also edge- and $N^	*$-neighborhoods.

\begin{figure}[htbp]
  \centering
  \includegraphics[bb= 126 514 441 744, scale=1]{./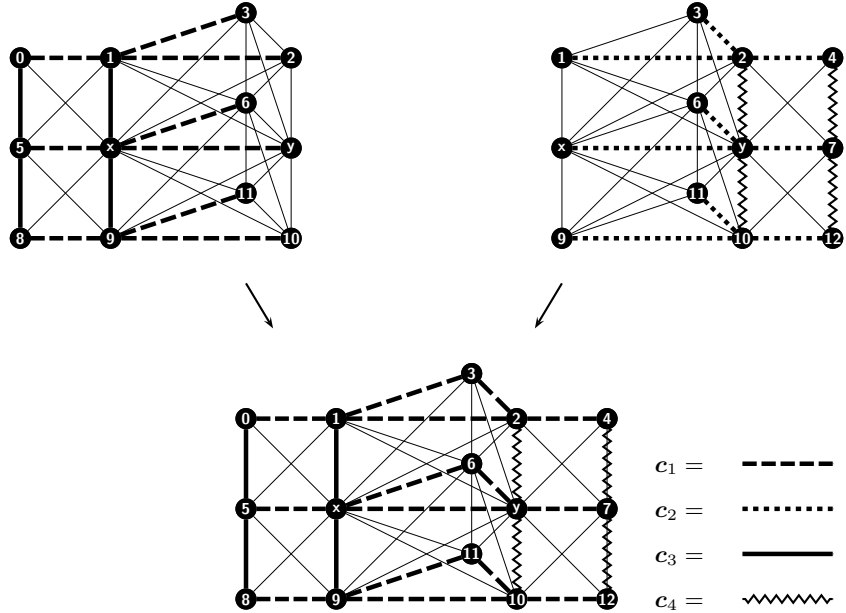}
  \caption[Example: color-continuation fails]{ \textit{Color-continuation problem in non-thin subproducts.} 
			Shown is a thin graph $G$ that is a strong product of
		 	a path and a path containing a triangle. The backbone $\bone(G)$
			consists of the vertices $x$ and $y$. Both neighborhoods
			$\la N[x]\ra$ and $\la N[y]\ra$ are not thin. 
			After computing the PFD of $\la N[x]\ra$, resp. of $\la N[y]\ra$ 
			one obtains a partial product coloring with colors $c_1$ and $c_3$,
			resp. 	with colors $c_2$ and $c_4$. In this example 
			the partial product coloring of $P_{\la N[y]\ra}$ is not
			a color-continuation of $P_{\la N[x]\ra}$ since no
			edge with color $c_4$ is colored in $\la N[x]\ra$.
			}
\label{fig:exmplCC2}
\end{figure}

In the following, we will provide several properties 
of (partial) product colorings and show 
that in a given thin strong product graph $G$
a partial \PC\ $P_{H}$ of a subproduct
$H \subseteq G$ is always a color-continuation of a 
partial \PC\ $P_{\langle N[x]\rangle}$
of any 1-neighborhood $N[x]$  with $N[x]\subseteq V(H)$ and $x \in \bone(G)$ and vice versa.
This in turn implies that we always get a proper color-continuation
from any 1-neighborhood $N[x]$ to edge-neighborhoods of edges $(x,y)$
and to $N^*_{x,y}$-neighborhoods with $y\in N[x]$ and vice versa.

\begin{lem}
	Let $G$ be a thin graph and $x\in \bone(G)$.
	Moreover let $P^1$ and $P^2$ be arbitrary partial
	\PC s of the induced neighborhood $\langle N[x] \rangle$.

	Then $P^2$ is a color-continuation of $P^1$ and vice versa.
	\label{lem:colorconti_itself}
\end{lem}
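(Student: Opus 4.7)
The plan is to leverage the assumption $x\in\bone(G)$ to show that every color in the image of $P^2$ is witnessed by an edge incident to $x$ that must also be colored by $P^1$, and then invoke symmetry.

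First, I would verify that $|S_{\langle N[x]\rangle}(x)|=1$. This follows directly from Lemma \ref{lem:s1_in_H} applied to $H=\langle N[x]\rangle$ (since $x\in\bone(G)$ and $N[x]\subseteq V(H)$). A quick degenerate case: if $\langle N[x]\rangle$ happens to be thin, Lemma \ref{lem:uniqu-coord} forces a unique coordinatization, so $P^1$ and $P^2$ agree up to relabeling of colors and the assertion is immediate. Hence assume $\langle N[x]\rangle$ is non-thin; by Lemma \ref{lem:N_subproduct} it is nonetheless a subproduct of $G$, so the hypotheses of Lemma \ref{lem:every_partialcolor_on_every_vertex} are in place.

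Now fix an arbitrary color $c$ in the image of $P^2$. Lemma \ref{lem:every_partialcolor_on_every_vertex} (with $H=\langle N[x]\rangle$ and the vertex $x$, which has $|S_H(x)|=1$) produces an edge $e=(x,y)$ incident to $x$ with $P^2(e)=c$. It remains to verify $e\in\mathrm{dom}(P^1)$, i.e., that $e$ is a Cartesian edge under $P^1$'s factorization that satisfies the \Scond\ in $\langle N[x]\rangle$. The \Scond\ part is free, since $|S_{\langle N[x]\rangle}(x)|=1$ already forces it on \emph{every} edge through $x$.

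The remaining, and main, step is to show that the Cartesian character of $e$ does not depend on whether one uses $P^1$'s or $P^2$'s factorization of $\langle N[x]\rangle$. For this, I would pass to the quotient graph $\langle N[x]\rangle/S$: being thin, it has a \emph{uniquely} determined set of Cartesian edges by Lemma \ref{lem:uniqu-coord}. Since $e$ is Cartesian under $P^2$, the edge $(S_{\langle N[x]\rangle}(x),S_{\langle N[x]\rangle}(y))$ is Cartesian in this quotient; combined with $|S_{\langle N[x]\rangle}(x)|=1$, Lemma \ref{lem:S=1condition} forces every edge between $S_{\langle N[x]\rangle}(x)$ and $S_{\langle N[x]\rangle}(y)$ to be Cartesian in $\langle N[x]\rangle$ under \emph{any} factorization, in particular that of $P^1$. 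Hence $e\in\mathrm{dom}(P^1)$, which shows $P^2$ is a \cc\ of $P^1$, and swapping $P^1\leftrightarrow P^2$ gives the reverse. The one subtle point is precisely this factorization-invariance at a backbone vertex; thinness of the quotient, together with Lemma \ref{lem:S=1condition}, is what makes it go through.
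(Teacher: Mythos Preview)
Your proof is correct and follows essentially the same route as the paper's: both hinge on $|S_{\langle N[x]\rangle}(x)|=1$, invoke Lemma~\ref{lem:every_partialcolor_on_every_vertex} to find for every color of $P^2$ a colored edge through $x$, and then use Lemma~\ref{lem:S=1condition} to argue that the Cartesian status of such an edge is factorization-independent. The only stylistic difference is that the paper phrases the invariance step via ``prime fibers'' (every $P^2$-color is witnessed by an edge through $x$ lying in a prime fiber, hence Cartesian in any decomposition including $P^1$'s), whereas you phrase it via the thin quotient $\langle N[x]\rangle/S$; these are two ways of saying the same thing, since an edge through $x$ lies in a prime fiber precisely when its image in the quotient is Cartesian in the unique PFD sense.
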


\begin{proof}
	Let $C^1$ and $C^2$ denote the images of $P^1$ and $P^2$, 
	respectively.
	Note, the PFD of $\langle N[x] \rangle$ is the finest
	possible factorization, i.e., the number of used colors
	becomes maximal. Moreover, every fiber with respect to the 
	PFD of $\langle N[x] \rangle$ that satisfies the \Scond, 
	is contained in any decomposition of $\langle N[x] \rangle$.
	In other words any prime fiber that satisfies the \Scond\  
	is a subset of a fiber that satisfies the \Scond\ with respect to any 
	decomposition of $\langle N[x] \rangle$.

	Moreover since	$x\in\bone(G)$ it holds that $|S_x(x)|=1$ and thus
	every edge containing vertex $x$ satisfies the \Scond\ 
	in $\langle N[x] \rangle$. 
	Lemma \ref{lem:S=1condition} implies that all Cartesian edges $(x,v)$
	can be determined as Cartesian in $\langle N[x] \rangle$.  
	Together with Lemma	\ref{lem:every_partialcolor_on_every_vertex}
	we can infer that each color of $C^1$, resp. $C^2$
	is represented at least on edges $(x,v)$ contained in the 
	prime fibers, which completes the proof.  
\end{proof}

\begin{lem}
	Let $G = \boxtimes_{i=1}^n G_i$ be a thin strong product graph. 
	Furthermore	let $H$ be a subproduct of $G$ with partial \PC\  $P_{H}$
	and $\langle N[x] \rangle \subseteq H$ with $x\in \bone(G)$. 
	
	Then $P_H$ is a color-continuation of the partial \PC\  
	$P_N$  of $\langle N[x]\rangle$ and vice versa.
	\label{lem:color-conti_NtoH}
\end{lem}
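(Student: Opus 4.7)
The plan is to mimic the proof of Lemma~\ref{lem:colorconti_itself} by pulling everything back to the PFD of $\langle N[x]\rangle$, which is the common finest refinement of the factorizations underlying both $P_H$ and $P_N$. Once that refinement structure is in place, the existence of a color representative at $x$ reduces to Lemma~\ref{lem:every_partialcolor_on_every_vertex}.

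First I would collect the S1-data at $x$. Because $x\in\bone(G)$, Definition~\ref{def:backbone} gives $|S_x(x)|=1$, and because $N[x]\subseteq V(H)$ Lemma~\ref{lem:s1_in_H} gives $|S_H(x)|=1$. Hence any edge incident to $x$ that is Cartesian in $H$, respectively in $\langle N[x]\rangle$, automatically satisfies the \Scond\ there and is eligible to lie in the domain of $P_H$, respectively of $P_N$. Next, using Lemma~\ref{lem:N_subproduct}, Theorem~\ref{thm:uniquePFD} and the fact that $\langle N[x]\rangle$ inherits the factorization of $H$ as a subproduct, I would fix the PFD $\langle N[x]\rangle = A_1^*\boxtimes\cdots\boxtimes A_r^*$ and observe that both the factorization underlying $P_N$ and the restriction to $\langle N[x]\rangle$ of the factorization underlying $P_H$ are coarsenings of it. Consequently every edge that moves in a single prime coordinate is Cartesian with respect to every factorization in sight.

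The heart of the argument is a color-representative construction at $x$. Fix a color $c$ in the image of $P_H$. By Lemma~\ref{lem:every_partialcolor_on_every_vertex} there exists $(x,w)\in E(H)$ with $P_H((x,w))=c$, and this edge lies in the $H_c$-layer through $x$; its restriction to $\langle N[x]\rangle$ is the subproduct built from a subfamily $\{A_j^*:j\in J_c\}$ of the prime factors. Picking some $j_0\in J_c$ in which $w$ and $x$ actually differ, I replace $w$ by the neighbor $v$ of $x$ that equals $x$ in every prime coordinate except $j_0$, where it equals $w$. Then $(x,v)$ is still in the $H_c$-layer, so $P_H((x,v))=c$, while moving in a single prime coordinate makes it Cartesian with respect to the PFD, hence with respect to $P_N$'s factorization, placing it in the domain of $P_N$ by the S1-data above. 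The reverse direction is completely symmetric: for a color in the image of $P_N$, refine the edge furnished by Lemma~\ref{lem:every_partialcolor_on_every_vertex} inside the corresponding $A_c$-layer to a single-prime-coordinate edge, which is then Cartesian with respect to $P_H$'s factorization as well.

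The only delicate point I anticipate is the careful bookkeeping of three different notions of ``Cartesian'' (with respect to $P_H$'s factorization of $H$, with respect to $P_N$'s factorization of $\langle N[x]\rangle$, and with respect to the PFD of $\langle N[x]\rangle$) together with the routine but essential observation that passing from a finer factorization to a coarser one can only enlarge the set of Cartesian edges. Once that is cleanly stated, the existence of the prime-coordinate replacement $v$ follows directly from the product structure of neighborhoods in Lemma~\ref{lem:N_subproduct} and the multiplicativity of $S$-classes in Corollary~\ref{cor:common_cardi_of_S}.
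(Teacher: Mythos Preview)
Your proposal is correct and follows essentially the same approach as the paper. Both arguments hinge on the two facts that $|S_x(x)|=|S_H(x)|=1$ (so edges through $x$ satisfy the \Scond\ in both $\langle N[x]\rangle$ and $H$) and that the PFD of $\langle N[x]\rangle$ is a common refinement of the factorization underlying $P_N$ and the one induced on $\langle N[x]\rangle$ by $P_H$. The paper packages the second step as an application of Lemma~\ref{lem:colorconti_itself}, whereas you unfold that lemma by explicitly passing from the edge $(x,w)$ provided by Lemma~\ref{lem:every_partialcolor_on_every_vertex} to a single-prime-coordinate edge $(x,v)$; this is a cosmetic difference, not a substantive one.
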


\begin{proof}
	First notice that Lemma \ref{lem:s1_in_H} implies that $x\in \bone(H)$ and
	in particular $|S_H(x)|=1$.
	Thus  every edge containing 
	vertex $x$ satisfies the \Scond\ in $H$ as well as in $\langle N[x] \rangle$.
	Moreover, Lemma \ref{lem:every_partialcolor_on_every_vertex}
	implies that every color of the partial  \PC\ $P_H$, resp. $P_N$,
	is represented at least on edges $(x,v)$. 

	Since $\langle N[x] \rangle$ is a subproduct
	of the subproduct $H$ of $G$ we can conclude that
	the PFD of $H$ induces a local (not necessarily prime)	
	decomposition of $\langle N[x] \rangle$ and hence
	a partial \PC\ of $\la N[x]\ra$.
	Lemma \ref{lem:colorconti_itself} implies that
	any partial \PC\  of $\langle N[x] \rangle$ and hence
	in particular the one induced by $P_H$ is 
	a color-continuation of $P_N$.

	Conversely, any product coloring $P_N$ of 
	$\langle N[x] \rangle$ is a color-continuation 
	of the product coloring induced by the PFD of 
	$\langle N[x] \rangle$. Since $\langle N[x] \rangle$
	is a subproduct of $H$ it follows that every prime
	fiber of  $\langle N[x] \rangle$ that satisfies the \Scond\  
	is a subset of a prime fiber of $H$ 
	that satisfies the \Scond. 
	This holds in particular for the fibers through
	vertex $x$, since $|S_x(x)|=1$ and $|S_H(x)|=1$. 
	By the same arguments as in the proof of Lemma
	\ref{lem:colorconti_itself} one can infer that every product
	coloring of $H$ is a color-continuation of 
	the product coloring induced by the PFD of $H$,
	which completes the proof.
\end{proof}

We can infer now the following Corollaries.

\begin{cor}
	Let $G = \boxtimes_{i=1}^n G_i$ be a thin strong product graph,
	$(v,w) \in E(G)$ be a Cartesian edge of $G$ and $H$ denote
	the edge-neighborhood $\langle N[v]\cup N[w] \rangle$. 
	Then any partial \PC\  $P_{H}$ of $H$ is a color-continuation
	of any partial \PC\ $P_{N[v]}$ of $\langle N[v] \rangle$,
	resp.  of any partial \PC\ $P_{N[w]}$ of $\langle N[w] \rangle$
	and vice versa.
	\label{cor:color-conti_edgeN}
\end{cor}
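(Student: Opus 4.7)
The plan is to obtain this corollary as a direct specialisation of Lemma \ref{lem:color-conti_NtoH}. First I would apply Corollary \ref{cor:boxes}: since $(v,w)$ is Cartesian in $G$, the edge-neighborhood $H=\langle N[v]\cup N[w]\rangle$ is a subproduct of $G$. By construction, $\langle N[v]\rangle\subseteq H$ and $\langle N[w]\rangle\subseteq H$, so the containment hypothesis of Lemma \ref{lem:color-conti_NtoH} is satisfied for both $x=v$ and $x=w$.

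Next, in the context in which this corollary is applied (the algorithm only ever expands Cartesian skeletons outward from backbone vertices), the endpoints $v$ and $w$ of the Cartesian edge may be assumed to lie in $\bone(G)$. Under this hypothesis, invoking Lemma \ref{lem:color-conti_NtoH} with $x:=v\in\bone(G)$ and the subproduct $H$ immediately yields that any partial \PC\ $P_H$ of $H$ is a color-continuation of any partial \PC\ $P_{N[v]}$ of $\langle N[v]\rangle$ and vice versa. Repeating the argument with $x:=w\in\bone(G)$ gives the analogous statement for $\langle N[w]\rangle$, and the corollary follows.

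The only technical step is checking the underlying assumption $|S_H(v)|=1$ used implicitly inside Lemma \ref{lem:color-conti_NtoH} (via Lemma \ref{lem:every_partialcolor_on_every_vertex}); this is exactly Lemma \ref{lem:s1_in_H} applied to the backbone vertex $v$ and the induced subgraph $H\supseteq N[v]$, with the obvious analogue for $w$. Beyond this routine verification of hypotheses there is no real obstacle, since the corollary is nothing more than a convenient reformulation of Lemma \ref{lem:color-conti_NtoH} for the particular subproduct $H=\langle N[v]\cup N[w]\rangle$.
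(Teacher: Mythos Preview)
Your approach is exactly the one the paper intends: the corollary is stated immediately after Lemma~\ref{lem:color-conti_NtoH} with the single line ``We can infer now the following Corollaries,'' so the paper's proof is precisely the specialisation you describe---use Corollary~\ref{cor:boxes} to see that $H=\langle N[v]\cup N[w]\rangle$ is a subproduct, observe $\langle N[v]\rangle,\langle N[w]\rangle\subseteq H$, and invoke Lemma~\ref{lem:color-conti_NtoH}. Your observation that the hypothesis $v,w\in\bone(G)$ is tacitly needed (it is required by Lemma~\ref{lem:color-conti_NtoH} and is indeed satisfied whenever the corollary is used in Algorithm~\ref{alg:general} and in the proof of Theorem~\ref{thm:correctness_gen}) is entirely correct and worth flagging; the paper simply omits it from the corollary's statement.
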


\begin{cor}
	Let $G = \boxtimes_{i=1}^n G_i$ be a thin strong product graph and 
	$(v,w) \in E(G)$ be an arbitrary edge of $G$.
	Then any partial \PC\  $P^*$ of the $N^*_{v,w}$-neighborhood 
	is a color-continuation	of any partial \PC\ $P_{N[v]}$ of $\langle N[v] \rangle$,
	resp.  of any partial \PC\ $P_{N[w]}$ of $\langle N[w] \rangle$
	and vice versa.
	\label{cor:color-conti_N*}
\end{cor}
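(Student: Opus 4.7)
The plan is to reduce the corollary to Lemma \ref{lem:color-conti_NtoH} applied to the subproduct $H = N^*_{v,w}$. First I would verify the hypotheses: Corollary \ref{cor:boxes} shows that $N^*_{v,w}$ is a subproduct of $G$, and since $v,w \in N[v]\cap N[w]$ both $N[v]$ and $N[w]$ appear among the neighborhoods unioned to form $V(N^*_{v,w})$, so $\langle N[v]\rangle,\langle N[w]\rangle \subseteq N^*_{v,w}$. By Lemma \ref{lem:N_subproduct}, $\langle N[v]\rangle$ and $\langle N[w]\rangle$ are themselves subproducts of $G$, and hence of $N^*_{v,w}$.

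The crucial ingredient is Lemma \ref{lem:N*}, which yields $|S_{N^*}(v)|=|S_{N^*}(w)|=1$ without any backbone assumption on the endpoints of $(v,w)$. Thus $v$ (respectively $w$) plays inside $N^*_{v,w}$ exactly the role that a backbone vertex plays in the proof of Lemma \ref{lem:color-conti_NtoH}: every edge of $N^*_{v,w}$ through $v$ satisfies the \Scond\ in $N^*_{v,w}$, so by Lemma \ref{lem:every_partialcolor_on_every_vertex} every color in the image of $P^*$ is represented on some such edge. Each of these edges already lies in $\langle N[v]\rangle$ and is therefore a candidate representative for the color-continuation from $P_{N[v]}$ to $P^*$. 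The analogous argument handles $\langle N[w]\rangle$, and the reverse direction follows from the symmetric half of Lemma \ref{lem:color-conti_NtoH} combined with the fact that $\langle N[v]\rangle$ is a subproduct of $N^*_{v,w}$, so every color occurring in $P_{N[v]}$ already shows up on an edge lying in $N^*_{v,w}$.

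The main obstacle will be to argue that the edges through $v$ produced above actually lie in the domain of $P_{N[v]}$, which requires the \Scond\ in $\langle N[v]\rangle$ and not just in $N^*_{v,w}$. When $v \in \bone(G)$ this is immediate from Lemma \ref{lem:s1_in_H}, since then $v\in\bone(\langle N[v]\rangle)$ and every edge through $v$ automatically satisfies the \Scond\ in $\langle N[v]\rangle$. If neither $v$ nor $w$ lies in $\bone(G)$, I would combine Theorem \ref{thm:all_GIX-edges-S=1}, Lemma \ref{lem:edges_dont_sat_Scond} and Lemma \ref{lem:S=1condition2} to select a backbone vertex $z \in N[v]\cap N[w]$ with $N[z]\subseteq V(N^*_{v,w})$, and chain two applications of Lemma \ref{lem:color-conti_NtoH} through $\langle N[z]\rangle$ to pass $P_{N[v]} \leadsto P_{N[z]} \leadsto P^*$.
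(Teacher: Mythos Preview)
The paper states this result without proof, as an immediate corollary of Lemma~\ref{lem:color-conti_NtoH}: one simply takes $H = N^*_{v,w}$, notes via Corollary~\ref{cor:boxes} that this is a subproduct of $G$, and observes that $\langle N[v]\rangle,\langle N[w]\rangle \subseteq N^*_{v,w}$. Your first two paragraphs carry out precisely this reduction, and your use of Lemma~\ref{lem:N*} to obtain $|S_{N^*}(v)|=1$ directly (rather than via Lemma~\ref{lem:s1_in_H}) is a clean touch. For the intended applications in the paper --- where $v,w$ are always backbone vertices --- this is all that is needed, and your argument matches the paper's.

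Your third paragraph goes beyond the paper: you correctly notice that Lemma~\ref{lem:color-conti_NtoH} (and Lemma~\ref{lem:colorconti_itself}, on which it rests) carries the hypothesis $x\in\bone(G)$, so that the edges through $v$ are guaranteed to lie in the domain of $P_{N[v]}$. The paper does not address this; in every invocation of the corollary in the proof of Theorem~\ref{thm:correctness_gen} the relevant endpoint is a backbone vertex, so the issue never arises there.

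However, your proposed fix for the non-backbone case has a gap. You want to chain $P_{N[v]} \leadsto P_{N[z]} \leadsto P^*$ via Lemma~\ref{lem:color-conti_NtoH}, but the second step requires $\langle N[z]\rangle \subseteq N^*_{v,w}$ (fine), while the first step would require one of $\langle N[v]\rangle,\langle N[z]\rangle$ to sit inside the other as a subproduct --- and neither containment holds in general for a $z\in N[v]\cap N[w]$ produced by Lemma~\ref{lem:edges_dont_sat_Scond}. Color-continuation is also not transitive on the nose, so even granting both links you would still have to argue that a representative edge for each color of $P^*$ can be pushed all the way back into the \emph{domain} of $P_{N[v]}$, not merely into $\langle N[z]\rangle$. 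If you want the full generality of the statement as written, this step needs more care; if you are content with the backbone case (which is all the algorithm uses), your first two paragraphs already suffice and coincide with the paper's treatment.
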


\section{A Local PFD Algorithm for Strong Product Graphs}

In this section, we use the previous 
results and provide a general local approach for the PFD
of thin graphs $G$. Notice that even if 
the given graph $G$ is not thin, the provided 
Algorithm works on $G/S$. 
The prime factors of $G$ can then be constructed
by using the information of the prime factors of $G/S$
by repeated application of Lemma 5.40 provided in \cite{IMKL-00}.

In this new PFD approach we use in addition an
algorithm, called \emph{breadth-first search (BFS)},
that traverses all vertices of a graph $G=(V,E)$ in a particular order.
We introduce the ordering of the vertices of $V$ 
by means of breadth-first search as follows:
Select an arbitrary vertex $v\in V$ and create a sorted list $BFS(v)$
of vertices beginning with $v$; append all neighbors $v_1,\ldots,v_{\deg(v)}$ of
$v$; then append all neighbors of $v_1$ that are not already in this list; 
continue recursively with
$v_2,v_3,\ldots$ until all vertices of $V$ are processed.
In this way, we build levels 
where each $v$ in level $i$ is adjacent to some vertex $w$ in
level $i-1$ and vertices $u$ in level $i+1$.
We then call the vertex $w$ the \emph{parent} of $v$
 and vertex $v$ a \emph{child} of $w$.

We give now an overview of the new approach. Its top level control 
structure is summarized in Algorithm \ref{alg:general}.

Given an arbitrary thin graph $G$, first
the backbone vertices are ordered via the 
\emph{breadth-first search (BFS)}.
After this, the neighborhood of the first 
vertex $x$ from the ordered BFS-list $\bone_{BFS}$ is decomposed.
Then the next vertex $y \in N[x] \cap \bone_{BFS}$ is taken 
and the edges of $\langle N[y] \rangle$ are colored with 
respect to the neighborhoods PFD.
If the color-continuation from $\la N[x] \ra$ to $\la N[y] \ra$ does not fail, then the 
Algorithm  proceeds with the next vertex $y' \in N[x] \cap \bone_{BFS}$.
If the color-continuation fails and both neighborhoods are thin, 
one uses 
Algorithm \ref{alg:checkHypercube} in order to compute 
a proper combined coloring. If one of the neighborhoods
is non-thin the Algorithm proceeds 
with the edge-neighborhood $\langle N[x]\cup N[y] \rangle$.
If it turns out that $(x,y)$ is indispensable in $\langle N[x]\cup N[y] \rangle$
and hence, that $\langle N[x]\cup N[y] \rangle$ is a 
proper subproduct (Corollary \ref{cor:proper_edgeN})
 the algorithm proceeds to decompose and to color
$\langle N[x]\cup N[y] \rangle$.
If it turns out that $(x,y)$ is dispensable in $\langle N[x]\cup N[y] \rangle$
the $N^*$-neighborhoods $N^*_{x,y}$ is factorized and colored.
In all previous steps edges are marked as "checked" if they
satisfy the \Scond, independent from being Cartesian or not.
 After this, the $N^*$-neighborhoods of 
all edges that do not satisfy the
\Scond\ in any of the previously used subproducts, i.e,
1-neighborhoods, edge-neighborhoods or $N^*$-neighborhoods,
are decomposed and again the edges are colored.
Examples of this approach are depicted in
Figure \ref{fig:Exmpl_gen1} and \ref{fig:Counter_edgeN_gen}.
Finally, the Algorithm checks which of the recognized factors have 
to be merged into the prime factors $G_1,\dots,G_n$ of $G$.

\begin{algorithm}[ht]
\caption{General Approach}
\label{alg:general}
\begin{algorithmic}[1]
\renewcommand{\baselinestretch}{0.9}
\small\normalsize
\vspace{1mm}
    \STATE \textbf{INPUT:} a thin graph $G$
	\STATE compute backbone-vertices of $G$, order them in BFS and store them in $\bone_{BFS}$;
	\STATE $x \gets$ first vertex of $\bone_{BFS}$;	
	\STATE $W \gets \{x\}$;	
	\STATE FactorSubgraph($\langle N[x] \rangle$);
    \WHILE{$\bone_{BFS} \neq\emptyset$}\label{b:Part1-start}
		\STATE $H \gets \la \cup_{w\in W} N[w] \ra$; \label{b:H1}
    	\FOR {all $y \in N[x] \cap \bone_{BFS}$}\label{forLoop}
        	  \STATE FactorSubgraph($\langle N[y] \rangle$);
			  \STATE compute the combined coloring of $H$ and $\langle N[y] \rangle$;		
    	      \IF{color-continuation fails from $H$ to $N[y]$}    
					\IF{$\la N[x]\ra$ and $\la N[y]\ra$ are thin}
						\STATE $C \gets \{\text{color }c \mid \text{color-continuation for }c\text{ fails}\}$;
						\STATE Solve-Color-Continuation-Problem(H, $\langle N[y]\rangle$, x, C);
								\COMMENT{Algorithm \ref{alg:checkHypercube}}
   				        \STATE  mark all vertices and all edges of $\langle N[y]\rangle$ as "checked";
					\ELSIF{(x,y) is indispensable in $\langle N[x]\cup N[y]\rangle$}
						\STATE  FactorSubgraph($\langle N[x]\cup N[y]\rangle$);
					\ELSE  \STATE	FactorSubgraph($N^*_{x,y}$);
					\ENDIF		
					\STATE compute the combined coloring of $H$ and $\langle N[y] \rangle$;
					\label{line-comb}			  
			  \ENDIF			  
		\ENDFOR	
		\STATE  delete $x$ from $\bone_{BFS}$;
		\STATE  $x \gets$ first vertex of $\bone_{BFS}$;
		\STATE $W \gets W \cup \{x \}$;
	\ENDWHILE\label{b:Part1-end}
	\WHILE{there exists a vertex $x \in V(H)$ that is not marked as "checked"} 
			\IF{there exist edges $(x,y)$ that are not marked as "checked"} 
				\STATE FactorSubgraph($N^*_{x,y}$);
			\ELSE \STATE take an arbitrary edge $(x,y) \in E(H)$;
				  \STATE FactorSubgraph($N^*_{x,y}$);
			\ENDIF
			\STATE compute the combined coloring of $H$ and $N^*_{x,y}$;
	\ENDWHILE
	\FOR{each edge $e\in E(H)$}\label{b:lastFor1}
		\STATE assign color of $e$ to edge $e \in E(G)$;
	\ENDFOR \label{b:lastFor2}
	%\label{isomorph-test-start}
	\STATE CheckFactors(G); \COMMENT{check and merge factors with Algorithm \ref{alg:iso-test}\label{b:iso}}
	\label{isomorph-test-end}
    \STATE \textbf{OUTPUT:} G with colored $G_j$-fiber, and Factors of $G$;% and $G$;
\renewcommand{\baselinestretch}{1.3}
\small\normalsize
\end{algorithmic}
\end{algorithm}

Before we proceed to prove the correctness of this local PFD algorithm, 
we show that we always get a proper combined coloring
by usage of Algorithm \ref{alg:checkHypercube}.

\begin{lem}
Let $G$ be a thin graph and $\bone_{BFS} = \{v_1,\dots,v_n\}$ be its 
	BFS-ordered sequence of backbone vertices. 
	Furthermore, let $H=\la \cup_{j=1}^{i-1} N[v_j] \ra$
	be a partial product colored subgraph of $G$ that obtained
	its coloring from a proper combined product coloring induced by the PFD w.r.t. 
    the strong product of each
	$\la N[v_j] \ra$, $j = 1,\dots,i-1$. 
    Let $\la N[v_i] \ra$
	be a thin neighborhood that is product colored w.r.t. to its PFD. 
	Let vertex $x$ denote the parent of $v_i$.  
	Assume $\la N[x] \ra$ is thin.
	Moreover, assume the color-continuation from $H$ to $\la N[v_i] \ra$
	fails and let $C$ denote the set of colors where it fails.

	Then Algorithm \ref{alg:checkHypercube} computes 
	a proper combined coloring of the colorings of $H$ and $\la N[v_i] \ra$
	with  $H$, $\la N[v_i] \ra$, $x$ and $C$ as input.
	\label{lem:properCC-via-sprime}
\end{lem}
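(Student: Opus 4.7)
The plan is to show that for each colour $c \in C$, Algorithm \ref{alg:checkHypercube} locates a diagonalized hypercube inside the already-coloured thin subproduct $\la N[x]\ra \subseteq H$ whose S-prime structure forces a merge of colours in $P_H$ that repairs the colour-continuation at $c$, and then to verify that the merges triggered by distinct failing colours are mutually consistent.

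First I would collect a few preparatory observations. Since $x$ is the parent of $v_i$ in the BFS traversal of $\bone_{BFS}$, the vertex $x$ is itself a backbone vertex with $v_i \in N[x]$, so $\la N[x]\ra \subseteq H$ has already been decomposed by its PFD and coloured accordingly in $P_H$. Thinness of $\la N[x]\ra$ and of $\la N[v_i]\ra$ ensures that the partial \PC\ on each is a genuine \PC\ with uniquely determined Cartesian edges; moreover, by Lemma \ref{lem:color-conti_NtoH} the restriction of $P_H$ to $\la N[x]\ra$ is a colour-continuation of $P_{\la N[x]\ra}$, so I may reason locally inside $\la N[x]\ra$.

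Fix a failing colour $c \in C$. The crucial step is to produce a Cartesian edge of $G$ lying in $\la N[x]\ra$ that is non-Cartesian in the PFD of $\la N[x]\ra$. Because $v_i \in N[x]$ we also have $x \in N[v_i]$, so $x$ is a vertex of $\la N[v_i]\ra$, and Lemma \ref{lem:every_color_on_every_vertex} applied inside the thin subproduct $\la N[v_i]\ra$ yields an edge $e' = (x,w')$ coloured $c$ by $P_{\la N[v_i]\ra}$. Since $e'$ is Cartesian in a subproduct of $G$, it is Cartesian in $G$; as $w'$ is adjacent to $x$ we have $w' \in N[x]$, hence $e' \in E(\la N[x]\ra)$. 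The failure of colour-continuation at $c$ means precisely that no edge of $\la N[v_i]\ra$ coloured $c$ lies in the domain of $P_H$, so in particular $e'$ must be non-Cartesian in $\la N[x]\ra$.

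This puts me in the hypothesis of Lemma \ref{lem-Sprime-addCartesian}: letting $J$ denote the set of indices in which $x$ and $w'$ differ in the coordinatization of $\la N[x]\ra$, the strong factor $\boxtimes_{j \in J} H_j$ of $\la N[x]\ra$ sits inside a single prime factor of $G$. Lemma \ref{lem-induHypercube} then supplies a Cartesian $|J|$-hypercube $\la U\ra \subseteq \skel(\la N[x]\ra)$ with $x$ and $w'$ at antipodal corners, and adjoining $e'$ produces a diagonalized hypercube $\la U\ra \cup \{e'\}$ that is S-prime with respect to the Cartesian product by Corollary \ref{cor:diagH}. This is exactly the object Algorithm \ref{alg:checkHypercube} is designed to detect: it locates the hypercube and identifies every colour appearing on the Cartesian edges of $\la U\ra$ with $c$. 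Because $x$ is a vertex of the hypercube, every edge of $\la U\ra$ incident to $x$ now carries colour $c$ and lies in the updated domain of $P_H$, so colour-continuation at $c$ succeeds. Iterating over all $c \in C$ handles the whole failing set.

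The main obstacle is global consistency: the merges triggered by distinct failing colours must never over-identify. This is underwritten by Lemma \ref{lem-Sprime-addCartesian}, which certifies that every individual identification collapses only colours whose fibers provably sit inside a common prime factor of $G$; hence the equivalence relation generated on the colour set by all the forced merges refines the PFD partition of $G$ and is therefore conflict-free. A final appeal to Lemma \ref{lem:color-conti_NtoH} ensures that the merges performed inside $\la N[x]\ra$ remain coherent with the coloring on the rest of $H$, so Algorithm \ref{alg:checkHypercube} outputs a proper combined colouring of $H \cup \la N[v_i]\ra$, as claimed.
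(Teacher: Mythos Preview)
Your argument is correct and follows essentially the same route as the paper's own proof: locate a representative edge $e_c=(x,w)$ of the failing colour through $x$ via Lemma~\ref{lem:every_color_on_every_vertex}, observe that it is Cartesian in $G$ but non-Cartesian in the thin subproduct $\la N[x]\ra\subseteq H$, and then invoke Lemma~\ref{lem-Sprime-addCartesian} (together with the diagonalized-hypercube picture of Lemma~\ref{lem-induHypercube} and Corollary~\ref{cor:diagH}) to justify merging all $H$-colours indexed by $J$ into one. The only imprecision is your explanation of \emph{why} colour-continuation at $c$ then succeeds: it is not the hypercube edges of $\la U\ra$ incident to $x$ that witness it (those were already in the domain of $P_H$, and there is no reason they carry $\la N[v_i]\ra$-colour $c$), but rather the edge $(x,w)$ itself, which after the merge differs in exactly one coordinate of the coarsened factorisation of $\la N[x]\ra$ and hence becomes Cartesian and enters the updated domain of $P_H$ while still carrying colour $c$ in $\la N[v_i]\ra$.
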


\begin{proof}
	First notice that it holds $\la N[x]\ra \subseteq H=\la \cup_{j=1}^{i-1} N[v_j] \ra$.  
	Let $c \in C$. Hence, $c$ denotes a color in $\langle N[v_i] \rangle$ such that
	for all edges $e \in E(\langle N[v_i] \rangle)$ with color $c$
	holds that $e$ was not colored in $H$.	
	Since the combined coloring in $H$ implies a 
	product coloring of $\langle N[x] \rangle$ we
	can compute the coordinates of the vertices 
	in $\langle N[x] \rangle$ with respect 
	to this coloring. Notice that the coordinatization
	in $\langle N[x] \rangle$ is unique since 
	$\langle N[x] \rangle$ is thin.
	Now Lemma \ref{lem:every_color_on_every_vertex} 
	implies that there 
	is at least one edge $e\in \la N[v_i]\ra$ with color $c$ that
	contains vertex $x$, since $x \in N[v_i]$. Let us denote this edge
	by $e_c = (x,w)$. Clearly, it holds $(x,w)\in E(\langle N[x] \rangle)$.
	Hence, this edge is not determined as Cartesian in $H$,
	and thus in particular not in $\langle N[x] \rangle$
	otherwise $e_c$ would have been colored in $\langle N[x] \rangle$. 
	But since $e_c$ is determined as Cartesian in $\langle N[v_i] \rangle$
	and moreover, since $\langle N[v_i] \rangle$ is a subproduct of $G$, we can infer
	that $e_c$ must be Cartesian in $G$. 
	Therefore, we claim that the non-Cartesian edge
	$(x,w)$ in $\langle N[x] \rangle$ has to be Cartesian in
	$\langle N[x] \rangle$. Notice that the product coloring
	of $\langle N[x] \rangle$ induced by the combined colorings
	of all $\la N[v_j]\ra$, $j=1,\dots,i-1$ is as least
	as fine as the product coloring of $G$. Thus, we can
	apply Lemma \ref{lem-Sprime-addCartesian}  and together with 
    the unique coordinatization of $\langle N[x] \rangle$
	 directly conclude that all colors $k \in D$,
	where $D$ denotes the set of coordinates
	 where $x$ and $w$ differ, have to be merged to one color. 
	This implies that we always get a proper combined coloring and hence
	a proper color-continuation for each such color $c$ that is 
	based on those additional edges $e_c = (x,w)$ as defined above.
\end{proof}

\begin{algorithm}[tbp]
\caption{Solve-Color-Continuation-Problem}
\label{alg:checkHypercube}
\begin{algorithmic}[1]
\renewcommand{\baselinestretch}{0.9} 
\small\normalsize
\vspace{1mm}
    \STATE \textbf{INPUT:} a partial product colored graph $H$, 
						   a product colored graph $\langle N[v_i]\rangle$, 
							a vertex $v$, set $C$ of colors 
				\STATE compute coordinates of $\langle N[v]\rangle$ with respect
					   to the combined product coloring of $H$; 
				\STATE \COMMENT{color "j" if differ in coordinate "j"}
				\FOR{all colors $c \in C$  \COMMENT{color-continuation fails}}
					\STATE take one representative $e_c = (v,w)\in E(\langle N[v_i]\rangle)$;
					\STATE $D\gets \{k \mid v \text{ and } w \text{ differ in coordinate } k \}$;
					\STATE merge all colors $k\in D$ in $H$ to one color;
				\ENDFOR
			\STATE compute the combined coloring of $H$  and $\langle N[v_i] \rangle$;
   \STATE \textbf{OUTPUT:} colored graph H, colored graph $\langle N[v_i] \rangle$;
\renewcommand{\baselinestretch}{1.3} 
\small\normalsize

\end{algorithmic}
\end{algorithm}	

\begin{algorithm}[tpb]
\caption{FactorSubgraph}
\label{alg:factorSub}
\begin{algorithmic}[1]
\renewcommand{\baselinestretch}{0.9}
\small\normalsize
\vspace{1mm}
    \STATE \textbf{INPUT:} a graph $H$
		\STATE compute the PFD of $H$ and color
    	   	   the Cartesian edges in $H$ that satisfy the \Scond;
		\STATE  mark all vertices $x$ with $|S_H(x)| = 1$ as "checked";
		\STATE  mark all edges that satisfy the \Scond\ as "checked";
	\STATE \textbf{Return} partially colored $H$;
\renewcommand{\baselinestretch}{1.3}
\small\normalsize	
  \end{algorithmic}
\end{algorithm}

\begin{algorithm}[tp]
\caption{CheckFactors}
\label{alg:iso-test}
\begin{algorithmic}[1]
\renewcommand{\baselinestretch}{0.9}
\small\normalsize
\vspace{1mm}
	\STATE \textbf{INPUT:} a thin product colored graph $G$
	\STATE take one connected component $G^*_1,\dots,G^*_l$ of each color $1,\dots,l$  in $G$; 
	\STATE $I \gets  \{1,\dots,l\} $;
    \STATE $J \gets I$;
    \FOR{$k=1$ to $l$}
      \FOR{each $S \subset J$ with $|S| = k$}
	    \STATE compute two connected components $A,\ A'$ of $G$ induced by the 
			   	colored edges of $G$ with color $i \in S,$ and $i \in I \backslash S$, resp; \label{a:cC}
		\STATE compute $H_1=\la p_{A}(G)\ra$ and $H_2=\la p_{A'}(G)\ra$; 
        \IF{$ H_1 \boxtimes  H_2  \backsimeq G$} \label{a:iI}
         	 \STATE save $H_1$ as prime factor;
             \STATE $J \gets J\backslash S$;
	    \ENDIF
    \ENDFOR
	\ENDFOR
\renewcommand{\baselinestretch}{1.3}
\small\normalsize
  \end{algorithmic}
\end{algorithm}

\begin{figure}[tp]
  \centering
  \includegraphics[bb= 62 626 485 824, scale=0.7]{./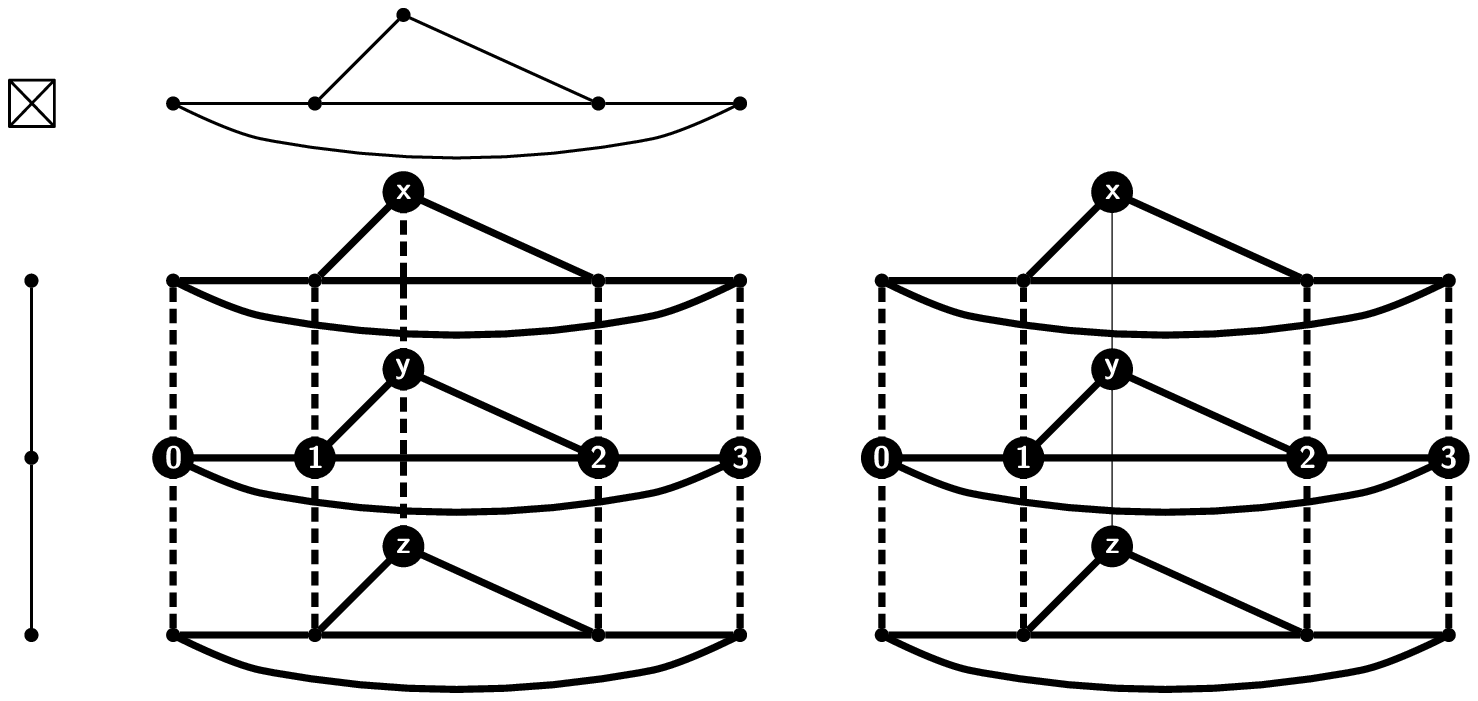}
  \caption[Cartesian skeleton after Algorithm \ref{alg:general}]
	      {Depicted is the colored Cartesian skeleton of the thin strong product graph $G$ after running 
			 Algorithm \ref{alg:general} 
			with different BFS-orderings $\bone_{BFS}$ of the backbone vertices. 
		   The backbone $\bone(G)$ consists of the vertices $0,1,2$ and $3$.\\
		   \textbf{lhs.:} $\bone_{BFS} = 2,1,3,0$.
				 In this case the color-continuation from $N[2]$ to $N[1]$ fails.
				 hence we compute the PFD of the edge-neighborhood
				 $\langle N[2] \cup N[1] \rangle$. Notice that the Cartesian edges $(x,y)$ and $(y,z)$ 
   				 satisfy the \Scond\ in $\langle N[2] \cup N[1] \rangle$ and will be determined as Cartesian.
				In all other steps the color-continuation  works. \\
			\textbf{rhs.:}	$\bone_{BFS} = 3,0,2,1$.
					 In all cases ($N[3]$ to $N[0]$, $N[3]$ to $N[2]$, $N[0]$ to $N[1]$)
					 the color-continuation works. However, after running the first while-loop there are missing 
						Cartesian edges $(x,y)$ and $(y,z)$
					that do not satisfy the \Scond\ in any of the previously used subproducts
					$N[3]$, $N[0]$, $N[2]$ and $N[1]$. Moreover,  
					the edge-neighborhoods $\langle N[x] \cup N[y] \rangle$ as well as 
					$\langle N[z] \cup N[y] \rangle$ are the product
					 of a path and a $K_3$ and the \Scond\ is violated for the Cartesian edges	
					in its edge-neighborhood.
					These edges will be determined
					in the second while-loop of Algorithm \ref{alg:general} using the respective 
					$N^*$-neighborhoods.}
  \label{fig:Exmpl_gen1}
\end{figure}

\begin{thm}
	Given a thin graph $G$ then Algorithm \ref{alg:general} determines 
	the prime factors of $G$ with respect to the strong product. 
	\label{thm:correctness_gen}
\end{thm}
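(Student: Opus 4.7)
The plan is to verify three things in sequence: (i) the algorithm produces a well-defined partial product coloring of $G$ that extends the correct local colorings at every iteration; (ii) after both while-loops terminate, every Cartesian edge of $G$ has been assigned a color, with edges lying in the same global $G_i$-fiber sharing a color class; and (iii) the final \texttt{CheckFactors} routine correctly merges color classes into the prime factors. Throughout I will use that every subproduct considered ($\la N[v]\ra$, $\la N[v]\cup N[w]\ra$ for a Cartesian edge, $N^*_{v,w}$) is a genuine subproduct of $G$ by Corollary \ref{cor:boxes}, so any edge determined as Cartesian in such a subproduct is Cartesian in $G$.

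First I would establish a loop invariant for the outer while-loop (lines \ref{b:Part1-start}--\ref{b:Part1-end}): at the start of each iteration, the subgraph $H=\la \cup_{w\in W} N[w]\ra$ carries a partial \PC\ that is a refinement of the restriction of the global \PC\ of $G$ to $H$, and every Cartesian edge of $G$ lying in $H$ that satisfies the \Scond\ in some $\la N[w]\ra$ with $w\in W$ has been colored. The base case follows from \texttt{FactorSubgraph}($\la N[x]\ra$) together with Lemma \ref{lem:S=1condition}. For the inductive step, when a new backbone vertex $y\in N[x]\cap\bone_{BFS}$ is processed, Lemma \ref{lem:color-conti_NtoH} guarantees that any partial \PC\ of $\la N[y]\ra$ is a color-continuation of $P_H$ in both directions whenever the neighborhoods are sufficiently compatible; the only failure modes are exactly those handled by the three branches of the \texttt{if}-statement. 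For the thin-thin case, Lemma \ref{lem:properCC-via-sprime} (applied to $x$ as the parent of $y$ via the diagonalized-hypercube argument of Lemma \ref{lem-Sprime-addCartesian}) produces a proper combined coloring. For the non-thin case, Corollaries \ref{cor:color-conti_edgeN} and \ref{cor:color-conti_N*} guarantee that once we enlarge the window to either $\la N[x]\cup N[y]\ra$ (when $(x,y)$ is indispensable there, hence the subgraph is a proper subproduct by Corollary \ref{cor:proper_edgeN}) or to $N^*_{x,y}$ (Lemma \ref{lem:N*} gives the \Scond\ automatically), the color-continuation succeeds.

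Next I would handle the second while-loop. After the outer loop terminates, $W=\bone(G)$, which is a connected dominating set by Theorem \ref{thm:backbone_coverable}, so $V(H)=V(G)$. However, some Cartesian edges of $G$ may still be uncolored because they fail the \Scond\ in every 1-neighborhood of a backbone vertex. Lemma \ref{lem:edges_dont_sat_Scond} shows that each such edge has an endpoint $x$ adjacent to some $y$ with a backbone vertex $z\in N[x]\cap N[y]$; more importantly, Lemma \ref{lem:N*} ensures that every edge $(x,y)$ satisfies the \Scond\ in its own $N^*_{x,y}$-neighborhood. Hence factoring each $N^*_{x,y}$ for the remaining unchecked vertex-edge pairs colors all still-missing Cartesian edges, and by Corollary \ref{cor:color-conti_N*} the combined colorings remain consistent with $P_H$. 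Combined with Lemma \ref{lem:connectCartSk-via-S1} (connectedness of the S1-Cartesian-edge subgraph in each subproduct), this establishes that the final coloring on $H$ is a refinement of the global \PC\ covering \emph{every} Cartesian edge of $G$.

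The last step is to show that \texttt{CheckFactors} correctly coalesces the refined color classes into the true prime factors. Each color class corresponds to a set of parallel fibers in the refined factorization, and because these fibers are contained in the global Cartesian fibers, the prime decomposition of $G$ is obtained by grouping colors whose union of fibers spans a subgraph $H_1$ such that $H_1\boxtimes H_2\cong G$ for the complementary color set $H_2$. The enumeration in the nested loops finds the coarsest such grouping, which by Theorem \ref{thm:uniquePFD} is unique and yields precisely the prime factors. The main obstacle of the proof is the inductive maintenance of the loop invariant under combined colorings: one must verify carefully that each of the three branches (hypercube merging, indispensable edge-neighborhood, $N^*$-neighborhood) not only repairs the local failure but does so in a way compatible with all previously colored edges of $H$, which is where Lemmas \ref{lem:properCC-via-sprime}, \ref{lem-Sprime-addCartesian}, and the color-continuation corollaries of Section 3.3 do the heavy lifting.
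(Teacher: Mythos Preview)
Your proposal follows essentially the same route as the paper's own proof: analyze the first while-loop case by case (no failure; thin--thin failure handled by the diagonalized-hypercube merge of Lemma~\ref{lem:properCC-via-sprime}; non-thin failure repaired via edge-neighborhoods or $N^*$-neighborhoods using Corollaries~\ref{cor:color-conti_edgeN} and~\ref{cor:color-conti_N*}), then use the second while-loop together with Lemmas~\ref{lem:N*} and~\ref{lem:edges_dont_sat_Scond} to handle the remaining unchecked vertices, and finally argue that \texttt{CheckFactors} extracts the prime factors. The loop-invariant packaging is a clean way to organize exactly what the paper does discursively.

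There is, however, one over-claim you should fix. Your item~(ii) asserts that after both while-loops ``every Cartesian edge of $G$ has been assigned a color, with edges lying in the same global $G_i$-fiber sharing a color class.'' Neither half is what the algorithm actually guarantees. A Cartesian edge of $G$ can be non-Cartesian in every subproduct $K$ the algorithm visits (because the local PFD of $K$ may be strictly finer), so it is never colored; and the coloring produced is only a \emph{refinement} of the global product coloring, so edges in one $G_i$-fiber may carry several distinct local colors before \texttt{CheckFactors} runs. What the paper actually proves---and what suffices for \texttt{CheckFactors}---is that the colored Cartesian edges form a connected \emph{spanning} subgraph of $G$ (via Lemma~\ref{lem:connectCartSk-via-S1} and Lemma~\ref{lem:every_partialcolor_on_every_vertex}), that every vertex meets every color, and that colors of different global prime factors are never merged. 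Your argument for \texttt{CheckFactors} then goes through with ``spanning'' in place of ``covering every Cartesian edge'' and with ``refinement of the global coloring'' in place of ``same color class per fiber.'' Once you weaken~(ii) in this way, the rest of your outline matches the paper's proof.
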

\begin{proof}
	We have to show that every prime factor $G_i$  of $G$ is returned
	by our algorithm.

	First, the algorithm scans all backbone vertices in their BFS-order stored in $\bone_{BFS}$,
	which can be done, since $G$ is thin and hence,  
	$\langle \bone(G) \rangle$ is connected (Theorem \ref{thm:backbone_coverable}).

	In the \emph{first while-loop} one starts
	with the first neighborhood $N[x]$ with $x$ as first
	vertex in $\bone_{BFS}$, we proceed to cover  
	the graph with neighborhoods $N[y]$ with $y\in \bone_{BFS}$
	and $y\in N[x]$. The following cases can occur:
\begin{enumerate}	
\item	 If the color-continuation does not fail there is nothing to do.
		 Furthermore, we can apply Lemma \ref{lem:connectCartSk-via-S1} 
	     and Lemma \ref{lem:every_partialcolor_on_every_vertex}
	 and conclude  that the determined Cartesian edges
	in $\langle N[x] \rangle$, resp.  in $\langle N[y] \rangle$,
	 i.e., the Cartesian edges that satisfy the \Scond\ in $\langle N[x] \rangle$, resp. in $\la N[y] \ra$,   
	induce a connected  
	subgraph of $\langle N[x] \cup N[y] \rangle$. 

\item	If the color-continuation fails, we check if $\langle N[x] \rangle$ and $\langle N[y] \rangle$ 
	are thin. If both neighborhoods are thin we can use 
	Algorithm \ref{alg:checkHypercube} to get a proper color-continuation
	from $\langle N[x] \rangle$ to $\langle N[y] \rangle$ (Lemma \ref{lem:properCC-via-sprime}).

	Furthermore, since both neighborhoods are thin, for all vertices $v$ in $N[x]$, resp. $N[y]$, holds
	$|S_x(v)|=1$, resp. $|S_y(v)|=1$. Hence all edges in $\langle N[x] \rangle$,
	resp. $\langle N[y] \rangle$, satisfy the \Scond.
	Therefore, by Lemma \ref{cor:connectCartSk-via-S1} 
	the Cartesian edges span $\langle N[x] \rangle$ and $\langle N[y] \rangle$ and thus,
	by the color-continuation property, $\langle N[x] \cup N[y] \rangle$ as well.

\item	If one of the neighborhoods is not thin then we check 
	whether the edge $(x,y)$ is dispensable or not w.r.t. $\langle N[x] \cup N[y] \rangle$.  
	If this edge is indispensable then Corollary \ref{cor:proper_edgeN} 
	implies	that $\langle N[x] \cup N[y] \rangle$ is a proper subproduct.
	Corollary \ref{cor:color-conti_edgeN} implies that 
	then get a proper color-continuation from $\langle N[x] \cup N[y] \rangle$
	to $\la N[y] \ra$.

	Furthermore, Lemma \ref{lem:s1_in_H} implies that $|S_{\langle N[x] \cup N[y] \rangle}(x)|=1$.
	and $|S_{\langle N[x] \cup N[y] \rangle}(y)|=1$. 
	From Lemma \ref{cor:connectCartSk-via-S1}  we can conclude that the determined Cartesian edges
	of $\langle N[x] \cup N[y] \rangle$
	induce a connected  subgraph of $\langle N[x] \cup N[y] \rangle$. 
	
\item	Finally, if $(x,y)$ is dispensable in $\langle N[x] \cup N[y] \rangle$ we can not 
	be assured that $\langle N[x] \cup N[y] \rangle$ is a proper subproduct.
	In this case we factorize $N^*_{x,y}$. Corollary \ref{cor:color-conti_N*}
	implies that we get a proper color-continuation from $N^*_{x,y}$ to $\la N[y] \ra$.

	Furthermore,  Lemma \ref{lem:s1_in_H} 
	implies that $|S_{N^*_{x,y}}(x)|=1$ and $|S_{N^*_{x,y}}(y)|=1$. 
	Moreover, from  Lemma \ref{cor:connectCartSk-via-S1}  follows that all Cartesian edges 
	that satisfy the \Scond\ on $N^*_{x,y}$ induce a connected subgraph of $N^*_{x,y}$.
\end{enumerate}
	
	Clearly, the previous four steps are valid for all consecutive backbone vertices $x,y \in \bone_{BFS}$.
	Therefore, we always get a proper combined coloring of 
	$H =  \la \cup_{w\in W} N[w] \ra$ in Line \ref{line-comb}, since $N[x] \subseteq H$ and hence, 
	we always get a proper color-continuation from  $H$ to $N[y]$. 
	Furthermore, by this and the latter arguments in item $1.$--$4.$ 
	concerning induced connected subgraphs we can furthermore conclude that all determined
	Cartesian edges induce a connected subgraph of $H = \la \cup_{w\in \bone(G)} N[w] \ra$.
		The first while-loop will terminate since $\bone_{BFS}$ is finite.
		
    In all previous steps vertices $x$ are marked as "checked" if 
	there is a used subproduct $K$ such that $|S_K(x)| =1$.
	Edges are marked as "checked" if they satisfy the \Scond. 
	Note, after the first while-loop has terminated either 
	edges have been identified as Cartesian 
	or if they have not been determined as Cartesian but satisfy the \Scond\  
	they are at least connected to Cartesian edges that satisfy the 
	\Scond, which follows from 	Lemma \ref{lem:every_partialcolor_on_every_vertex}. 
	This implies that all edges that are marked as "checked" are connected
	 to Cartesian edges that satisfy the \Scond. 
	Moreover, notice that 
	$H = \la \cup_{w\in \bone(G)} N[w] \ra = G$, since $\bone(G)$ is a  
	dominating set.
	
	In the \emph{second while-loop} all vertices that are not marked 
	as "checked", i.e., $|S_K(x)| > 1$ for all used subproducts $K$,
	are treated. For all those vertices the $N^*$-neighborhoods $N^*_{x,y}$ 
	are decomposed and colored. Lemma \ref{lem:N*} implies that $|S_{N^*_{x,y}}(x)|=1$ 
	and $|S_{N^*_{x,y}}(y)|=1$. Hence all Cartesian edges 
	containing vertex $x$ or $y$  satisfy the \Scond\ in $N^*_{x,y}$. 
	Lemma \ref{lem:every_partialcolor_on_every_vertex} implies
	that each color of every factor of $N^*_{x,y}$ 
	is represented on edges containing
	vertex $x$, resp., $y$. Lemma \ref{lem:connectCartSk-via-S1} implies 
	that all Cartesian edges that satisfy the  \Scond\  
	in 	$N^*_{x,y}$ induce a connected subgraph of Lemma $N^*_{x,y}$.

	It remains to show that we get always 
	a proper color-continuation. Since $|S_K(x)| > 1$ for all used subproducts $K$,
	we can conclude in particular that $|S_x(x)| > 1$. 
	Therefore, one can apply
	Lemma \ref{lem:edges_dont_sat_Scond} and conclude that 
	there exists a vertex $z\in \bone(G)$ s.t.\ $z\in N[x]\cap N[y]$ and 
	hence $\la N[z]\ra \subseteq N^*_{x,y}$. This neighborhood $\la N[z]\ra$ 
	was already colored in one of the previous steps since $z\in \bone(G)$. 
	Lemma \ref{lem:s1_in_H} implies that $|S_{N^*_{x,y}}(z)|=1$ and thus
	each color of each factor of $N^*_{x,y}$ is represented on edges containing
	vertex $z$ and all those edges can be determined as Cartesian via the \Scond.
	We get a proper color-continuation from	the already colored subgraph $H$ to $N^*_{x,y}$
	since $N[z] \subseteq H$  and $N[z] \subseteq N^*_{x,y}$,
	which follows from Lemma \ref{lem:color-conti_NtoH}
	and Corollary \ref{cor:color-conti_N*}.

	The second while-loop will terminate since $V(H)$ is finite and
	$|S_{N^*_{x,y}}(x)|=1$ for all $x\in V(H)$.

	As argued before, all edges that satisfy the \Scond, 
	which are \emph{all} edges of $G$ after the second while-loop
	has terminated, are connected to Cartesian edges that satisfy the \Scond. 
	Moreover, all vertices have been marked as "checked". Hence,
	for all vertices holds $|S_K(x)|=1$ for some used subproduct $K$. 
	Since we always got
	a proper combined coloring and hence, a proper
	color-continuation,  we can apply Lemma \ref{lem:every_partialcolor_on_every_vertex}, 
	and conclude that the set of determined Cartesian edges induce
	a connected \emph{spanning} subgraph $G$. Moreover, by the color-continuation
	property we can infer that the final number of colors on
	$G$ is at most the number of colors that were used in the first neighborhood.
	This number is at most $\log\Delta$, since every product
	of $k$ non-trivial factors must have at least $2^k$ vertices.
    Let's say we have $l$ colors. As shown before, all vertices are "checked"
	and thus we can conclude from  Lemma \ref{lem:every_partialcolor_on_every_vertex}
	and the color-continuation	property 
	that each vertex $x\in V(G)$ is incident to an edge with color $c$ for all $c\in \{1,\dots,l\}$.
	Thus, we end with a combined coloring $F_G$ on
	$G$ where the domain of $F_G$ consists of all edges that were
	determined as Cartesian in the previously used 
	subproducts.

	It remains to verify which of the possible factors are
	prime factors of $G$. This task is done by using  Algorithm \ref{alg:iso-test}.
	Clearly, for some subset $S\subset J$, $S$ will contain
	all colors that occur in a particular $G_i$-fiber $G_i^a$
	which contains vertex $a$. Together with the latter arguments
	we can conclude that the set of $S$-colored edges in $G_i^a$
	spans $G_i^a$.
    Since the global PFD induces a local decomposition, even if the 
	used subproducts are not thin, every layer that satisfies the \Scond\  
	in a used subproduct with respect to a local prime factor
	is a subset of a layer with respect to a global prime factor. Thus, we
	never identify colors that occur in copies of different global
	prime factors. In other words, the coloring $F_G$ is a refinement of the
	product coloring of the global PFD, i.e., it might happen that there are 
    more colors than prime factors of $G$.
	This guarantees that a connected component of the graph induced
	by all edges with a color in $S$ induces a graph that is isomorphic
	to $G_i$. The same arguments show that the colors that are not in $S$ lead
	to the appropriate cofactor $H_2$. Thus $G_i$ will be recognized.	
\end{proof}

\begin{rem}
	Algorithm \ref{alg:general} is a generalization
	of the results provided in \cite{HIKS-08, HIKS-09}.
	Hence, it computes the PFD of NICE \cite{HIKS-08} 
	and locally unrefined \cite{HIKS-09} thin graphs.
	Moreover, even if we do not claim
	that the given graph $G$ is thin one can
	compute the PFD of arbitrary graphs $G$ as follows:
	We apply Algorithm \ref{alg:general}
	on $G/S$. The prime factors of $G$ can be constructed
	by using the information of the prime factors of $G/S$
	and application of Lemma 5.40 provided in \cite{IMKL-00}.
\end{rem}
	
In the last	part of this section, we show that Algorithm 
\ref{alg:general} computes the PFD with respect to the 
strong product of any connected thin graph $G$  
in $O(|V|\cdot \Delta^6)$ time. Clearly, this approach is 
not as fast as the approach of Hammack and Imrich, 
see Lemma \ref{lem:complexity_global2a}, but it can 
easily be applied for the recognition of approximate products.

\begin{thm}
	Given a thin graph $G=(V,E)$ with bounded maximum degree
    $\Delta$, then Algorithm \ref{alg:general}
	determines the prime factors of $G$ 
	with respect to the strong product in $O(|V|\cdot \Delta^7)$ time.
\end{thm}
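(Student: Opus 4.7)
The plan is to bound the cost of each phase of Algorithm \ref{alg:general} by (i) bounding the size of the subproducts it considers, (ii) applying Lemma \ref{lem:complexity_global2a} to bound the cost of each local PFD, and (iii) bounding the number of subproducts processed. Since $G$ is thin with maximum degree $\Delta$, a $1$-neighborhood $\langle N[v]\rangle$ has at most $\Delta+1$ vertices and $O(\Delta^2)$ edges; an edge-neighborhood $\langle N[v]\cup N[w]\rangle$ has at most $2(\Delta+1)$ vertices and $O(\Delta^2)$ edges; an $N^*_{v,w}$-neighborhood is the induced subgraph on $\bigcup_{z\in N[v]\cap N[w]} N[z]$, hence has at most $O(\Delta^2)$ vertices and, since every vertex has degree at most $\Delta$ in $G$, at most $O(\Delta^3)$ edges. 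Lemma \ref{lem:complexity_global2a} therefore bounds a single local PFD by $O(\Delta^4\log\Delta)$ for $1$-neighborhoods and edge-neighborhoods and by $O(\Delta^6)$ for $N^*$-neighborhoods.

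Next I would count the subproducts processed. Computing $\bone(G)$, its BFS-ordering $\bone_{BFS}$, and collecting the relevant $1$-neighborhoods can be done in $O(|V|\Delta^2)$ time by Theorem \ref{thm:backbone_coverable}. The first while-loop touches every backbone vertex once and, for each, iterates over the $O(\Delta)$ neighbors in $\bone_{BFS}$; in each iteration it performs a constant number of local PFDs on a $1$-, edge- or $N^*$-neighborhood, plus the combined-coloring/color-continuation bookkeeping and a single call to Algorithm \ref{alg:checkHypercube}. The checks whether $\langle N[v]\rangle$ is thin, whether $(x,y)$ is (in)dispensable in $\langle N[x]\cup N[y]\rangle$, and the merge in Algorithm \ref{alg:checkHypercube} all reduce to local neighborhood comparisons that cost $O(\Delta^4)$. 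Hence the first loop contributes at most $|\bone(G)|\cdot O(\Delta)\cdot O(\Delta^6)=O(|V|\Delta^7)$. The second while-loop factorizes one $N^*_{x,y}$ per vertex $x$ of $V\setminus$"checked"; since this set has at most $|V|$ elements and the cost per iteration is again $O(\Delta^6)$ for the PFD and $O(\Delta^4)$ for the combined coloring, its contribution is $O(|V|\Delta^6)$.

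Assigning the local colors to the edges of $G$ in the final \textbf{for}-loop runs in $O(|V|\Delta)$ time. For CheckFactors (Algorithm \ref{alg:iso-test}), I would use the fact established inside the proof of Theorem \ref{thm:correctness_gen} that the final number $l$ of colors on $G$ is at most $\log\Delta$, since the refined coloring is inherited from the first processed $1$-neighborhood and any product of $k$ nontrivial thin factors has at least $2^k$ vertices. Consequently the outer loop explores at most $\sum_{k\le\log\Delta}\binom{l}{k}\le 2^l\le\Delta$ subsets $S$, and for each the projections $p_A(G)$ and $p_{A'}(G)$ together with the isomorphism test $H_1\boxtimes H_2\simeq G$ can be carried out in $O(|V|\Delta^2)$ time using the coordinatization induced by the color classes. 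This contributes at most $O(|V|\Delta^3)$, which is dominated by the previous bounds. Summing all contributions yields $O(|V|\Delta^7)$.

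The main obstacle I expect is the careful accounting for $N^*$-neighborhoods: they can genuinely reach $\Theta(\Delta^2)$ vertices, which is what forces the $\Delta^6$ factor per local PFD via Lemma \ref{lem:complexity_global2a}, and one must be certain that each such neighborhood is factorized at most a constant number of times over the whole run of the algorithm (once in the first loop for a designated edge $(x,y)$ with $x,y\in\bone_{BFS}$, and once in the second loop for an uncovered vertex $x$). A secondary subtlety is verifying that the combined-coloring and color-continuation bookkeeping — merging color classes in Algorithm \ref{alg:checkHypercube} and checking whether a color from $\langle N[y]\rangle$ already appears on $H$ — can be implemented so as not to blow up the per-iteration cost above $O(\Delta^6)$; this reduces to maintaining, for every used subproduct, the mapping from edges to current color classes together with a union--find structure, whose total cost is absorbed into the PFD cost.
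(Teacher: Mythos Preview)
Your proposal is correct and follows essentially the same approach as the paper: bound the sizes of the three kinds of local subproducts, apply Lemma~\ref{lem:complexity_global2a} to get the $O(\Delta^6)$ cost for the dominating $N^*$-case, count $|V|\cdot O(\Delta)$ such local PFDs in the first while-loop and $O(|V|)$ in the second, and then use the $\log\Delta$ bound on the number of colors to control Algorithm~\ref{alg:iso-test}. The only minor discrepancies are bookkeeping estimates (e.g.\ the paper gets $O(|V|\Delta^3)$ for computing $\bone(G)$ and $O(|V|\Delta)$ per iteration of CheckFactors), none of which affect the final $O(|V|\Delta^7)$ bound.
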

\begin{proof}
	For determining the backbone $\bone(G)$ we have to check for a particular vertex $v\in V(G)$
	whether there is a vertex $w\in N[v]$ with $N[w]\cap N[v] = N[v]$. This
	can be done in $O(\Delta^2)$ time for a particular vertex $w$ in $N[v]$. 
	Since this must be done for all vertices in $N[v]$ we end in time-complexity $O(\Delta^3)$.
	 This step must be repeated for all $|V|$ vertices of $G$. Hence, the time complexity
	for determining $\bone(G)$ is $O(|V| \cdot \Delta^3)$.
    Computing $\bone_{BFS}$ via the breadth-first search takes 
	$O(|V|+|E|)$ time. Since the number of edges is bounded 
	by $|V|\cdot \Delta$ we can conclude that this task needs $O(|V|\cdot \Delta)$ time.
	
	We consider now the Line \ref{b:Part1-start} -- \ref{b:Part1-end} of the algorithm.
	The while-loop runs at most $|V|$ times. Computing $H$ in Line \ref{b:H1}, i.e., adding 
	a neighborhood to $H$, can be done in linear time in 	
	the number of edges of this neighborhood, that is in $O(\Delta^2)$ time.
	The for-loop runs at most $\Delta$ times.
	Each neighborhood has at most $\Delta+1$ vertices 
	and hence at most $(\Delta+1)\cdot \Delta$ edges. 
    The PFD of $\la N[y]\ra$ can be computed in  
	$O(\max(\Delta^2\Delta log(\Delta),\Delta^4))= O(\Delta^4)$ time, 
	see Lemma \ref{lem:complexity_global2a}
	The computation of the combined coloring
	of $H$ and $\langle N[y] \rangle$ can be done in constant time. 
	For checking if the color-continuation is valid
	one has to check at most for all edges of 
	$\langle N[v_i] \rangle$ if a respective colored
	edge was also colored in $H$, which can be done in 
	$O(\Delta^2)$ time.	

	\noindent
	Algorithm \ref{alg:checkHypercube} computes the combined coloring
	of $H$ and $\la N[v_i] \ra$ in  $O(\Delta^2)$ time.
	To see this, notice that 
	\begin{enumerate}
	\item the computation of the coordinates of the product colored
		 neighborhood $\la N[v] \ra$ can be done via a breadth-first search
   	     in $\la N[v] \ra$ in $O(|N[v]| + |E(\la N[v] \ra)|)=
	     O(\Delta + \Delta^2)=O(\Delta^2)$ time.

	\item by the color-continuation property 
    	$H$ can have at most as many colors as there are colors for the first neighborhood
	     $\la N[v_1] \ra$. This number is at most $\log(\Delta)$, 
   	because every product of $k$ non-trivial factors must have at least $2^k$ vertices.
	Thus the for-loop is repeated at most $\log(\Delta)$ times.
	All tasks in between the for-loop can be done in $O(\Delta)$ time
	and hence the for-loop takes $O(\log(\Delta)\cdot \Delta)$ time.

	\item the computation the combined color can be done 
	linear in the number of edges of $\la N[v_i]\ra$ and thus 
	in $O(\Delta^2)$ time.
	\end{enumerate}

	\noindent
	It follows  that all "if" and "else" conditions are bounded
	by the complexity of the PFD of the largest subgraph	
	that is used and therefore by the complexity of 
	the PFD of $N^*_{x,y}$.

	  Each $N^*$-neighborhood has at most $1+\Delta\cdot(\Delta-1)$ vertices. 
		Therefore, the number of edges in each $N^*$-neighborhood is bounded
		by $(1+\Delta\cdot(\Delta-1)) \cdot \Delta$.
		   By Lemma \ref{lem:complexity_global2a} the computation of the PFD of each $N^*$ 
	   and hence, the assignment to an edge of
	   being Cartesian is bounded 
		by $O(\max(\Delta^3 \Delta^2 log(\Delta^2) , \Delta^6)) = O(\Delta^6).$

		Since the while-loop (Line \ref{b:Part1-start}) runs at most $|V|$ times, 
		the for-loop (Line \ref{forLoop}) at most $\Delta$ times and the 
    	the time complexity for the PFD of the largest subgraph 
		is $O(\Delta^6)$, we end in an overall time complexity
		$O(|V|\Delta^7)$ for the first part (Line \ref{b:Part1-start} -- \ref{b:Part1-end})
		of the algorithm.

	Using the same arguments, one shows that the time complexity of 
	the second while-loop is  $O(|V|\cdot\Delta^6)$.
	The last for-loop (Line \ref{b:lastFor1}--\ref{b:lastFor2})
	needs $O(|E|) = O(|V|\cdot \Delta)$ time.
	
	Finally, we have to consider Line \ref{b:iso} and therefore,
	the complexity of Algorithm \ref{alg:iso-test}.
		We observe that the size of $I$ is the number of used colors. As in the 
		proof of Theorem \ref{thm:correctness_gen},  
		we can conclude that this number is 
		bounded by $\log(\Delta)$. Hence, we also have at most $\Delta$ sets $S$, 
		i.e., color combinations, to consider. 
		In Line \ref{a:cC} of Algorithm \ref{alg:iso-test} 
		we have to find connected components of graphs and in Line \ref{a:iI}
		 of Algorithm \ref{alg:iso-test} 
		we have to perform an isomorphism test for a fixed bijection.
		Both tasks take linear time in the number of edges of the graph and 
		hence $O(|V|\cdot \Delta)$ time.

	Considering all steps of Algorithm \ref{alg:general}
	we end in an overall time complexity $O(|V|\cdot\Delta^7)$.
\end{proof}

\section{Approximate Products} 

Finally, we show in this section, how Algorithm \ref{alg:general} can be modified and 
be used to recognize approximate products.
For a formal definition of approximate graph products
we begin with the definition of the distance between
two graphs. We say the \emph{distance} $d(G,H)$ between
two graphs $G$ and $H$ is the smallest integer $k$
such that $G$ and $H$ have representations $G'$, $H'$
for which the sum of the symmetric differences between
the vertex sets of the two graphs and between their edge sets is
at most $k$. That is, if
$$|V(G')\, \triangle\, V(H')|+|E(G')\, \triangle\, E(H')| \leq k.$$
A graph $G$ is a \emph{$k$-approximate graph product} if there is a
product $H$ such that $$d(G,H) \leq k.$$
As shown in \cite{HIKS-08} $k$-approximate  graph products
can be recognized in polynomial time.

\begin{lem}[\cite{HIKS-08}]
\label{lem:k-approx}
For fixed $k$ all strong and Cartesian
$k$-approximate  graph products
can be recognized in polynomial time.
\end{lem}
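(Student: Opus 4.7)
The plan is a brute-force enumeration combined with the classical PFD algorithm. Write $n=|V(G)|$ and fix $k$ throughout. The key observation is that any graph $H$ with $d(G,H)\le k$ satisfies $|V(H)|\le n+k$ and is obtained from $G$ by deleting some set $S\subseteq V(G)$ of $i$ vertices, adjoining at most $j$ fresh vertices (drawn from a fixed pool of $k$ labels), and then deleting and inserting at most $p$ respectively $q$ edges, subject to $i+j+p+q\le k$.

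First I would bound the number of candidates. For each fixed choice of $(i,j,p,q)$ with $i+j+p+q\le k$, the number of ways to choose $S$ is at most $\binom{n}{i}$, the number of ways to choose the fresh vertices is at most $\binom{k}{j}$, and the number of ways to choose the edge-sets $E_-,E_+$ is at most $\binom{(n+k)^2}{p}\binom{(n+k)^2}{q}$. Summing over the $O(k^4)$ choices of $(i,j,p,q)$ gives a total of $O\bigl((n+k)^{2k}\bigr)=O(n^{2k})$ candidate graphs $H$, which is polynomial in $n$ for fixed $k$.

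Second, for each candidate $H$ I would invoke the classical PFD algorithm from Lemma \ref{lem:complexity_global2a}. This decides in $O(\max(|V(H)|\cdot|E(H)|\log|V(H)|,|E(H)|^2))=O(n^4)$ time whether $H$ is a nontrivial strong (respectively Cartesian) product. The algorithm accepts $G$ as a $k$-approximate product as soon as one candidate $H$ turns out to be nontrivially factorizable, and rejects otherwise. The overall running time is the product $O(n^{2k})\cdot O(n^4)=O(n^{2k+4})$, polynomial for fixed $k$.

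The only delicate point in the proof is the enumeration itself: one must argue that the construction above really reaches every graph at distance $\le k$ from $G$ up to isomorphism. This is where I would use the fact that $V(G)\triangle V(H)$ has size at most $k$, so only at most $k$ vertex-labels are truly ``new'' in $H$, and hence it suffices to draw new labels from a single pool of size $k$ rather than from an infinite universe. Apart from this combinatorial bookkeeping, the procedure is straightforward and its correctness follows immediately from the definition of $d(G,H)$ together with the correctness of the classical PFD algorithm.
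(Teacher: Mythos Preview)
Your argument is correct. The brute-force enumeration of all labeled graphs $H'$ at edit distance at most $k$ from $G$, followed by applying the classical PFD algorithm to each candidate, is precisely the intended approach; the key point, which you handle properly, is that for fixed $k$ the number of candidates is bounded by a polynomial in $n$ because only $O(k)$ vertex labels outside $V(G)$ are ever needed.

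Note that the present paper does not actually give a proof of this lemma: it is quoted verbatim from \cite{HIKS-08}, where the same enumeration-plus-PFD argument is used. So there is no alternative proof in the paper to compare against; your proposal matches the argument in the cited source. One minor remark: the PFD algorithms referenced (Lemma~\ref{lem:complexity_global2a} for the strong product, and the analogous algorithms for the Cartesian product) require the input to be connected, so strictly speaking you should either restrict the enumeration to connected candidates $H'$ or run the PFD on each connected component; this does not affect the polynomial bound.
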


Without the restriction on $k$ the problem of finding a product of
closest distance to a given graph $G$ is NP-complete for the
Cartesian product. This has been shown by Feigenbaum and Haddad
\cite{FEHA-89}. 
We conjecture that this also holds for the strong product.
Moreover, we do not claim that the new algorithm for
the recognition of approximate products 
finds an optimal solution in general, i.e., a product 
that has closest distance to the input graph. 
However, the given 
algorithm can be used to derive a suggestion of the product
structure of given graphs and hence, of the structure of
the global factors.
For a more detailed discussion on how much perturbation is 
allowed such that the original factors or at least large 
factorizable subgraphs can still be recognized see Chapter 7 in 
\cite{DissHellmuth}.

\begin{figure}[htbp]
  \centering
  \includegraphics[bb= 66 453 503 758, scale=0.6]{./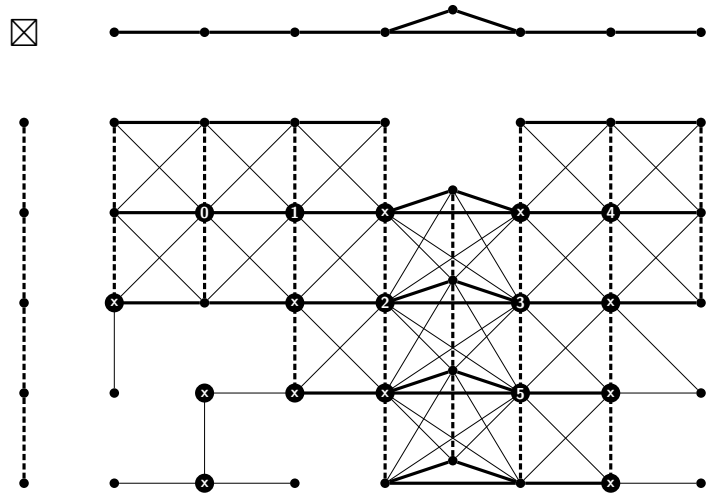}
  \caption[Example: Approximate product 1]{An approximate product $G$ of the product of a path and 
			 a path containing a triangle. 
			 The resulting colored graph after application of  the modified Algorithm \ref{alg:general}
			is highlighted with thick and dashed edges. 
			We set $P=1$, i.e., we do not use prime subproducts and hence
			only the vertices $0,1,\dots,5$ are used. Taking out one maximal component of each color 
			 would lead to appropriate approximate factors of $G$.}
  \label{fig:approxG}
\end{figure}

Let us start to explain this approach by an illustrating example.  
Consider the graph $G$ of Figure \ref{fig:approxG}. 
It approximates $P_5\boxtimes P_7^T$, where $P_7^T$ denotes 
a path that contains a triangle. 
Suppose we are unaware of this fact. Clearly, if $G$ is non-prime, then 
every subproduct is also non-prime.  We factorize 
every suitable subproduct of backbone vertices 
(1-neighborhood, edge-neighborhood, $N^*$-neighborhood) 
that is non-prime
and try to use the information to find a product
that is either identical to $G$ or approximates it.
The backbone $\bone(G)$ is a connected dominating set and 
consists of  the vertices $0,1,\dots,5$ and all vertices marked with "x".
The induced neighborhood of all "x" marked vertices is prime.
We do not use those neighborhoods, but  
the ones of the vertices $0,1, \ldots, 5$, factorize their 
neighborhoods and consider the Cartesian edges that satisfy the \Scond\ 
in the factorizations. 
There are two factors for every such neighborhood
and thus, two colors for the Cartesian edges in every neighborhood.
If two neighborhoods have a Cartesian edge that satisfy the 
\Scond\ in common, we identify their colors.
Notice that the color-continuation fails if we go
from $\la N[2] \ra$ to $\la N[3] \ra$. 
Since the edge $(2,3)$ is indispensable in $\la N[2] \cup N[3] \ra$ and
moreover, $\la N[2] \cup N[3] \ra$ is not prime, one 
factorizes this edge-neighborhood and get a proper
color-continuation. In this way, we end up with two colors altogether, one
for the horizontal Cartesian edges and one for the vertical ones. If
$G$ is a product, then the edges of  the same color span a subgraph
with isomorphic components, that are either isomorphic 
to one and the same factor or that span isomorphic layers 
of one and the same factor.
Clearly, the components are not isomorphic in our example. But, under
the assumption that $G$ is an approximate graph product, we take one
component for each color. In this example, it would be useful 
to take a component of maximal size, say the one consisting of
the horizontal thick-lined edges through vertex $2$, and the vertical dashed-lined 
edges through vertex $3$. These components are isomorphic to the original factors
$P_5$ and $P_7^T$. It is now easily seen that $G$ can be obtained 
from $P_5\boxtimes P_7^T$ by the deletion of edges.
Other examples of recognized approximate products are shown in
Figure \ref{fig:ExmplMoebius} and \ref{fig:approxG2}.

As mentioned, Algorithm \ref{alg:general} has to be modified
for the recognition of approximate products $G$.
We summarize the modifications we apply: 
\begin{enumerate}
\setlength{\parskip}{0pt} 
\setlength{\parsep}{0pt} 
	\item[M1.] $G/S$ is not computed. Hence, we do not claim that the given (disturbed) product is thin.
	\item[M2.] Item M1 and Theorem \ref{thm:backbone_coverable} imply
		  that we cannot assume that the backbone is 
		  connected. Hence we only compute a BFS-ordering
		  on connected components induced by backbone vertices.
	\item[M3.] We only use those subproducts (1-neighborhoods, edge-neighborhood, $N^*$-neighborhood)
		  that have more than $P\geq 1$ prime factors, where $P$ is a fixed integer.
	\item[M4.] We do not apply the isomorphism test (line \ref{isomorph-test-end}).
	\item[M5.] After coloring the graph, we take one minimal, maximal, or  arbitrary connected component 
		  of each color. The choice of this component depends on the problem
		  one wants to be solved.
\end{enumerate}

\begin{figure}[htbp]
  \centering
  \includegraphics[bb= 180 533 368 710, scale =0.9]{./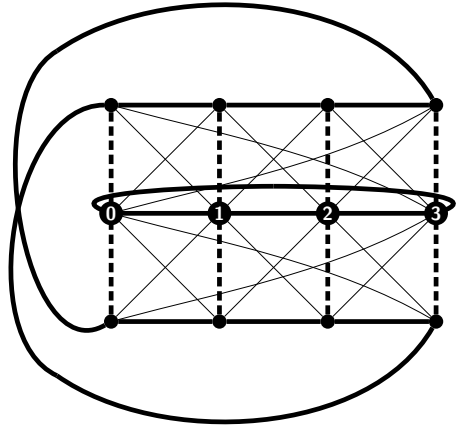}
  \caption[Approximate (twisted) product]{Shown is a prime graph $G$ with $\bone(G) = \{0,1,2,3\}$.
		   This kind of graph is also known as twisted product or graph bundle, see e.g. \cite{IPZ-97, Zer00}. 
	   	   In this example, each PFD of 1-neighborhoods leads to two factors. 
		   Notice that $G$ can be considered as an approximate product of 
		   a path $P_3$ and a cycle $C_4$.
		   After application of the modified Algorithm \ref{alg:general} with $P=1$
		   we end with the given coloring (thick and dashed lines). 
			   Taking one minimal component of each color 
		   	 would lead to appropriate approximate factors of $G$.} 
  \label{fig:ExmplMoebius}
\end{figure}

\begin{figure}[htb]
  \centering
  \includegraphics[bb= 71 452 571 701, scale=0.6]{./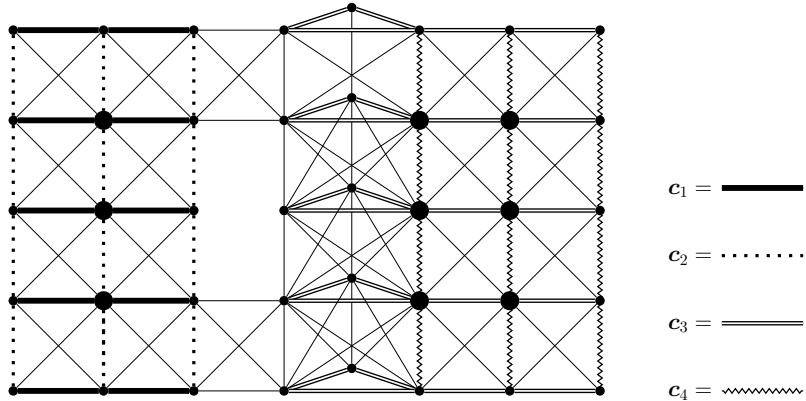}
  \caption[Example: Approximate product 2]{ An approximate product $G$ of the prime factors shown in Figure 
			  \ref{fig:approxG}. In this example $G$ is not thin.
				Obviously, this graph seems to be less disturbed than the one 
				in Figure \ref{fig:approxG}. The thick vertices indicate
				the backbone vertices with more then $P=1$ prime factors.
				Application of  the modified Algorithm \ref{alg:general} on $G$
			   (without computing $G/S$), choosing $P=1$ and using only the thick 
				backbone vertices leads to a coloring with the four colors $c_1,c_2,c_3$ and $c_4$.
				 This is due to the fact that the color-continuation
				fails, which would not be the case if we would allow to use also
				prime regions. }
  \label{fig:approxG2}
\end{figure}

First, the quotient graph $G/S$ will not be computed, since
the computation of $G/S$  of an approximate product graph $G$ 
may result in  a thin graph where a lot of structural information 
has been lost.

Moreover, deleting or adding edges in a product graph $H$, resulting in 
a disturbed product graph $G$, usually makes the graph prime 
and also the neighborhoods $\langle N^{G}[v] \rangle$ 
that are different from $\langle N^{H} [v] \rangle$ and
hence, the subproducts (edge-neighborhood, $N^*$-neighborhood) 
that contain $\langle N^{G}[v] \rangle$. 
In Algorithm \ref{alg:general}, we therefore only use those
subproducts of backbone vertices that are at least not prime, i.e.,  
one restricts the set of allowed backbone vertices to those
where the respective  subproducts have more than $P\geq 1$ 
prime factors and thereby limiting the number of allowed subproducts.
Hence, no prime regions or  subproducts that have less or equal than $P$ 
prime factors are used. Therefore,  
 one does not merge colors of different locally 
determined fibers to only $P$ colors, after the computation of a 
combined coloring.

The isomorphism test (line \ref{isomorph-test-end})
in  Algorithm \ref{alg:general} will not be applied.
Thus, in prime graphs $G$ one does not merge colors if  
the product of the corresponding approximate prime factors
is not isomorphic to $G$.

After coloring the graph, one takes out one
component of each color to determine the (approximate) factors. 
For many kinds of approximate products the connected components
of graphs induced by the edges in one component of each color
will not be isomorphic. In the example in Figure \ref{fig:approxG}, 
where the approximate product was obtained by deleting  edges,
it is easy to see that one should take the maximal connected component
of each color. 

Clearly, this approach needs non-prime subproducts. 
If most of the subgraphs in an approximate product 
$G$ are prime, one would not expect to 
obtain a product coloring of $G$, that can be used to 
recognize the original factors, but that
can be used e.g. for determining maximal factorizable subgraphs
or maximal subgraphs of fibers, see Chapter 7 in \cite{DissHellmuth}. 
Hence, this approach may provide a basis for the development 
of further heuristics for the recognition of approximate products.

\section*{Acknowledgement}

I want to thank 
Peter F. Stadler, Wilfried Imrich and Werner Kl\"ockl for all the outstanding %, fantastic
and fruitful discussions!
I also thank the anonymous referees for important comments and suggestions.  

\bibliographystyle{plain}
\bibliography{biblio}

\end{document}